\newlist{enum-hypothesis}{enumerate}{1}
\setlist[enum-hypothesis]{label=(\arabic*),itemsep=0pt, parsep=0pt}
\setlist[enumerate,1]{label=\arabic*., ref=\arabic*, topsep=1pt, itemsep=2pt, parsep=0pt, leftmargin=1.5em, itemindent=0em, labelsep=0.2em, labelwidth=1.3em}
\setlist[enumerate,2]{label=\alph*., ref=\theenumi.\alph*, topsep=1pt, itemsep=2pt, parsep=0pt, leftmargin=0.5em, itemindent=0em, labelsep=0.2em, labelwidth=1.5em}
\setlist[enumerate,3]{label=\roman*., ref=\theenumii.\roman*, topsep=1pt, itemsep=2pt, parsep=0pt, leftmargin=0.5em, itemindent=0em, labelsep=0.2em, labelwidth=1.2em}
\newtheorem{theorem}{Theorem}[section]
\newtheorem{proposition}[theorem]{Proposition}
\newtheorem{lemma}[theorem]{Lemma}
\newtheorem{corollary}[theorem]{Corollary}
\newtheorem{definition}[theorem]{Definition}
\theoremstyle{plain}
\newtheorem{remark}[theorem]{Remark}
\theoremstyle{break}
\theoremstyle{nonumberplain}
\newtheorem{proof}{Proof}
\newcommand{\Man}{\mathcal{M}}
\newcommand{\bbbone}{{\mathbb I}} 
\newcommand{\ds}{\slashed\partial}
\newcommand{\calA}{\mathcal{A}}
\newcommand{\calB}{\mathcal{B}}
\newcommand{\calH}{\mathcal{H}}
\newcommand{\calJ}{\mathcal{J}}
\newcommand{\calS}{\mathcal{S}}
\newcommand{\calU}{\mathcal{U}}
\newcommand{\J}{\mathcal{J}}
\newcommand{\hs}{\calH}
\newcommand{\ura}{\mathcal{U}_\rho({\mathcal A})}
\newcommand{\Ad}{\text{Ad}}
\newcommand{\defeq}{\vcentcolon=} 
\DeclareMathOperator{\Tr}{Tr}	   
\newcommand{\Ths}[1][]{\widetilde{\hs}}
\newcommand{\act}{\calS} 
\newcommand{\Sp}{\mathcal{S}} 
\newcommand{\Dir}{D} 
\newcommand{\Kont}{K} 
\newcommand{\A}{{\cal A}} 
\newcommand{\HH}{{\cal H}} 
\newcommand{\I}{{\mathbb I}} 
\newcommand{\cinf}{{C^\infty(\Man)}} 
\newcommand{\C}{\mathbb C}
\newcommand{\OO}{{\mathcal O}}
\newcounter{mnotecount}[section]
\renewcommand{\themnotecount}{\thesection.\arabic{mnotecount}}
\newcommand{\mnote}[1]%
{\protect{\stepcounter{mnotecount}}${}^{\text{\footnotesize$\bullet$\themnotecount}}$%
	\reversemarginpar%
	\marginpar{\raggedleft\footnotesize$\bullet$\themnotecount: #1}}
\newlength{\mnotewidth}
\definecolor{blueamu}{RGB}{0, 101, 189}
\definecolor{cyanamu}{RGB}{61, 183, 228}
\newcommand{\dhorline}[3][0]{%
	\tikz[baseline=-2pt]{\path[decoration={markings, 
			mark=between positions 0 and 1 step 2*#3
			with {\node[color=blueamu, fill, circle, minimum width=#3, inner sep=0pt, anchor=south west] {};}},postaction={decorate}]  (0,#1) -- ++(#2,0);}}
\newcommand{\dvertline}[3][0]{%
	\tikz[baseline=2em]{\path[decoration={markings,
			mark=between positions 0 and 1 step 2*#2
			with {\node[color=black!50, fill, circle, minimum width=#2, inner sep=0pt, anchor=south west] {};}},postaction={decorate}] (0, #1) -- ++(0,#3);}}
\newcommand\HUGE{\@setfontsize\Huge{28}{0}}\makeatother		
\numberwithin{equation}{section}
\begin{document}
	\renewcommand\figurename{Fig.}
	
	{
		\makeatletter\def\@fnsymbol{\@arabic}\makeatother 
		\title{\vspace{-1.5truecm}	 Torsion and Lorentz symmetry \\from   twisted
                  spectral triples \vspace{-.0truecm}}
		\author{$^a$P. Martinetti, $^a$G. Nieuviarts, $^b$R. Zeitoun\footnote{martinetti@dima.unige.it, gaston.nieuviarts@etu.univ-amu.fr, ruben.zeitoun@ens-lyon.fr}\\
			{\small $^a$DIMA, Universita di Genova \& INFN sezione di Genova}\\
			{\small $^b$Laboratoire AGM, CY Cergy Paris Université}\\
			%
			\\
		}
	\date{}
	\maketitle
	}  
	
	\setcounter{tocdepth}{3}
	
		\vspace{-1.75truecm}
	\begin{abstract}
By twisting the spectral triple of a riemannian spin
manifold, we show how to generate an orthogonal and geodesic
preserving torsion from a torsionless Dirac operator. We identify the group of
                twisted unitaries as the generator of torsion with
                co-exact three form. 
                Through the fermionic action, the
                torsion term identifies with a Lorentzian energy-momentum
                $4$-vector. The 
                Lorentz group turns out to be a normal subgroup of the
                twisted unitaries. We also
                investigate the spectral action related to this model.
	\end{abstract}
\small{	\tableofcontents }

	\section{Introduction}
	\label{sec introduction}
	
	One of the achievements of the spectral description of  the Standard Model of particles physics \cite{chamseddine2007gravity}  is to obtain
	the Higgs field on the same footing as the other gauge bosons, that is a
	component of a connection $1$-form. The latter is a suitable
        generalisation  to the noncommutative setting of 
	the Levi-Civita connection. 

	The appearance of Levi-Civita connection (the unique
	torsionless connection on the tangent bundle compatible
	with the metric) is
	not surprising.  Its use is customary in general
	relativity, where  the compatibility with the metric guarantees that the pseudo-norm of
	timelike vectors is preserved by parallel transport. However there
	does not seem to be strong theoretical motivations to impose vanishing
	torsion. 
Actually the interplay between torsion and relativity is
	an old and long story, from the early work of Cartan 
	to recent
	applications in neutrinos oscillations
	\cite{TorNeutriOsc, alimohammadi1999neutrino}, parity
	violation \cite{sengupta1999parity,freidel2005quantum,mukhopadhyaya2002space}, cosmology
	\cite{tilquin2011torsion, poplawski2011spacetime,
          minkevich2007gravitation} or the problem of singularities
        \cite{trautman1973spin,luz2020singularity}. Good reviews are
	\cite{shapiro2002physical} and \cite{hehl1976general}.
	
	In noncommutative geometry as well, torsion has been
        investigated. The spectral action for a Dirac operator with a
        certain kind of torsion has been
	computed in
	\cite{SpectActTorsStephan2010,pfaffle2011holst,iochum2012spectral},
	and more formal developments have been recently proposed in \cite{dkabrowski2023spectral}. 

	In this paper we explore an alternative way, consisting in generating
	torsion from a torsionless connection, through 
	a \emph{twisted fluctuation} of the Dirac operator.
      In addition, this torsion turns out to
        be related to a change of signature from the euclidean to the lorentzian.

More precisely, let us recall that in the spectral description of the
Standard Model \cite{chamseddine2007gravity}, all the bosonic fields are obtained by fluctuating
(definition is recalled in \S \ref{subsec:twistedfluc}) the generalized
	Dirac operator~\cite{Connes:1996fu}
	\begin{equation}
		\label{eq:Dirac}
		\slashed\partial\otimes \mathbb {\mathbb I}_F + \gamma\otimes D_F
	\end{equation}
	where $\mathbb I_F$ is the identity on
	the finite dimensional Hilbert space  $\mathcal H_F$ spanned by fermions, $D_F$ is a matrix on $\mathcal H_F$ that contains the
	parameters of the model (Yukawa coupling of fermions and
	mixing
	angles for quarks and neutrinos), while $\gamma$ is a $\mathbb
        Z^2$-grading{\footnote{This is  the generalisation to arbitrary even dimension of the fifth Dirac
        matrix $\gamma^5$, see appendix \ref{ConvGamMat} for
        details.}} of
	the Hilbert space $L^2(\Man, S)$ of square integrable spinors on a
	closed, orientable, riemannian, spin 
	manifold $\Man$ of even dimension $n=2m$ and  
	\begin{equation}
		\label{eq:freeDirac}
		\slashed\partial=-i\gamma^\mu\widetilde\nabla_\mu^S
	\end{equation}
	is the usual
Dirac operator. In the formula above,  $\gamma^\mu$ for $\mu=1, ...,
n$  are the Dirac matrices  
	while 
	$\widetilde\nabla_\mu^S$ is the spin connection, that is the lift from the tangent to the 
     spinor bundle of the Levi-Civita connection.

	Twisted fluctuations have been
	introduced for the spectral triple of Standard Model \cite{TwistSpontBreakDevastaMartine2017,filaci2021minimal,filaci2023critical} with the
	aim of generating an extra scalar field required to fit the
	Higgs mass and stabilise the electroweak vacuum
	\cite{chamseddine2012resilience}. This extra scalar field is obtained
	from the component $\gamma\otimes D_F$ of the
	operator \eqref{eq:Dirac}. However in the process also 
       the free part 
	$\slashed\partial\otimes\mathbb I_F$ twist-fluctuates and generates an
        unexpected field of
	$1$-forms $f_\mu
        dx^\mu$ \cite{Devastato:2013fk,TwistSpontBreakDevastaMartine2017}.
       
The interpretation of the field was unclear so far, except in one example: the twist of the spectral triple of
	electrodynamics \cite{Dungen:2011fk}.   By computing the 
	fermionic action \cite{martinetti2022lorentzian,devastato2018lorentz},
	one gets that $f_\mu dx^\mu$ 
	identifies with the
	(dual of) the energy-momentum $4$-vector in lorentzian
        signature, although one starts with a riemannian
        manifold. This is in line with previous results pointing out a
        link between
        twist and change of signature \cite{devastato2018lorentz}, and
         there are indications that a similar change of signature
        occurs for the twist of the spectral triple of the Standard
        Model \cite{Filaci:PhD}. 
\newpage
	
	Here we provide a complementary interpretation,  purely
	geometrical and regardless of any action formula: $f_\mu dx^\mu$ is the
	Hodge dual of a $3$-form (proposition \ref{PropFluctTors}) which, in case the manifold has
        dimension $4$, is the torsion form associated with an
        orthogonal and geodesic preserving torsion (corollary \ref{cor:flucttors}).

A second result of this work is the discovery that all the twisted fluctuations
with exact $1$-form $f_\mu dx^\mu$ are generated by an action of
the group of \emph{twisted unitaries}, that is the elements of the
algebra of the twisted spectral triple which are unitaries with respect to the inner
product induced by the twist (proposition \ref{prop:gentorsion}). In the non-twisted
case it is a major result of \cite{Connes:1996fu} that inner fluctuations
are generated by the unitaries of the algebra. In the twisted case, it
was known that these unitaries could not generate fluctuations. It is
a noticeable result that twisted unitaries do. 

In addition, by extending twisted unitarity to the whole of ${\cal
  B}(L^2(\Man, S))$, one finds the Lorentz group as a proper subgroup
(proposition \ref{prop:Lor}). This strengthens the relation between
twist and change of signature, independently of any action formula.

As a side result, we study the dependance of the fermionic action in the choice of the
unitary that implements the twisting automorphism. We show in
corollary \ref{corr:rgamma0} that the only choice which induces a change of
signature is the one considered in \cite{TwistSpontBreakDevastaMartine2017,
  martinetti2022lorentzian}, namely $R=\gamma^0$ the first Dirac matrix. 
\smallskip
	
	The paper is organised as follows. Section \ref{TSTsec}
        contains generalities on twisted spectral triples, including the real
	structure. We summarise the procedure of minimal twist,
        and apply it to the spectral triple of $\Man$. Section \ref{SecTorsionBases} deals with torsion. It
	contains basic material on contorsion, orthogonal and
	geodesic preserving connections. The first two main results of the paper are in
	section \ref{sec:TwisFlAsSkew}. In \S
        \ref{subsec:twistorsion} we show that in
        dimension $4$,  twisted fluctuations of the Dirac operator
        of $\Man$ 
        yield a skew-symmetric torsion in the spin connection.
This term is invariant under a gauge transformation (\S \ref{subsec:gaugetransf}). 
The
unitaries with respect to the inner product induced by the twist are studied in \S
\ref{subsec:rhoproduct}. 
In \S \ref{subsec:flucgroupact} we show the second main result, namely that a suitable action of twisted
unitaries generates the torsion term. 
Section \ref{sec:actions} deals with the actions, fermionic and
spectral. It contains the results regarding the interpretation of torsion as
energy-momentum, and on the change of signature (\S
\ref{subsec:torsionergy}). The Lorentz group as a subgroup of the
twisted unitaries  is studied in \S\ref{sec:lorentz}, and a spectral
action with torsion is computed in \S\ref{subsec:specact}.
	
\vfill
	
	{\bf Notations:} in all the paper, $\Man$ is a closed
        (i.e. compact without boundary), orientable, riemannian spin
        manifold of even dimension $n=2m$. We use Einstein summation on repeated indices in
	up\slash down positions. Greek indices are for  local
	charts, latin ones high in the alphabet ($a, b...)$
  are for the normal coordinates and the non-local
	orthonormal basis of the tangent bundle $T\Man$  and cotangent
        bundle $T^*\Man$.
	
	In a local chart $\left\{ x^\mu, \mu=1, ..., n\right\}$ on
        $\Man$, we denote $\left\{\partial_\mu, \mu=1, ...,n\right\}$ the
	associated coordinate basis of $T\Man$ and $\left\{ dx^\mu, \mu=1, ..., n\right\}$ the dual basis of
	 $T^*\Man$. We use the abbreviate notations
	$\left\{ x^\mu\right\}, \left\{ \partial_\mu\right\}, \left\{
	dx^\mu\right\}$
	where it is understood that $\mu$ runs on $1, ..., n$. For historical
	reasons, when dealing with the spin structure, the index runs on $0,
	..., n-1$. 

On Minkowski space, $\mu=0$ is the timelike direction and spacelike
directions are labelled by latin indices lower in the alphabet
($i,j...$).	

\newpage
	
	\section{Twisted spectral triples} 
	\label{TSTsec}
	Twisted spectral triples have been introduced in  \cite{TwistTypeIIIConneMosco2006}
with a double motivation: dealing with conformal transformations on
riemannian manifolds, and  applying 
	noncommutative geometry to type III von Neumann algebras. 
	Later on, they found applications~in high energy physics, providing a way to explore models 
	beyond the Standard Model~\cite{TwistSpontBreakDevastaMartine2017}. After recalling the
	main definitions in \S\ref{subsec:realtwsit}, we motivate the interest of twisted spectral triples for gauge theories in
	\S\ref{subsec:twistedfluc}, then introduce  in \S\ref{subsec:miniamltwist}
	our main object of study: the  "minimal" twist of a riemannian closed spin manifold.
	
	\subsection{Real, twisted spectral triples}
	\label{subsec:realtwsit}
	
	\begin{definition}[Connes, Moscovici]
		A twisted spectral triple consists in an involutive algebra
		$\calA$ acting faithfully on a Hilbert space $\calH$, together with a selfadjoint
		operator $\Dir$ on $\calH$ with compact resolvent, and an
		autormorphism $\rho$ of $\calA$ such that for any $a$ in $\cal H$ the \emph{twisted
			commutator}{\footnote{Unless
				needed, we omit
				the symbol $\pi$ of the representation
                                and  identify an
				element of $\A$ with its
                                representation on~$\calH$. The later
                                is always assumed to be involutive: $\pi(a^*)=\pi(a)^\dag$ with
				$^\dag$ the adjoint in ${\cal
                                  B}(\HH)$.
}}
		\begin{equation} [\Dir,a]_\rho\defeq \Dir a-\rho(a)\Dir
			\label{eq:twistcomm}
		\end{equation}
		is bounded. 
		The automorphism is asked to satisfy the regularity condition
		\begin{equation}
			\label{eq:reg}
			\rho(a^*)=\rho^{-1}(a)^*\quad\forall a\in\calA.
		\end{equation}
	\end{definition}
	\noindent 
	One calls  $D$ the (generalised) \emph{Dirac
		operator}. It coincides with the ``true'' Dirac operator~\eqref{eq:freeDirac}
for the canonical spectral triple of a riemannian
	manifold (\eqref{eq:tscano} below).
	
	As in the non-twisted case, a twisted spectral triple is \emph{graded}
	if there exists a selfadjoint operator $\Gamma$ on $\calH$ that squares
	to the identity, $\Gamma^2=\mathbb I$, and such that 
	\begin{equation}
		\label{eq:grad}
		\{D, \Gamma\}=0,\qquad  [\Gamma, a]=0\quad \forall a\in\calA.
	\end{equation}

	The real structure as well is defined as in the non-twisted case \cite{Connes:1995kx}, that
	is an antilinear, unitary
	operator $J$ satisfying
	\begin{equation}
		\label{AxTripl}
		J^2=\epsilon \mathbb \bbbone\qquad\qquad J\Dir =  \epsilon^\prime\Dir J\qquad \qquad J\Gamma =\epsilon^{\prime\prime} \Gamma J
	\end{equation}  
	where $\epsilon, \epsilon', \epsilon''\in\left\{0, 1\right\}$ define the
	$KO$-dimension of the (twisted) spectral triple (see  \S\ref{app:relagrad}).
It satisfies the same \emph{order zero condition} as in the non
twisted case, namely
	\begin{equation}
\label{eq:orderzero}
		[a, Jb^*J^{-1}]=0\quad \forall a, b \in \calA;
	\end{equation}
whereas the \emph{first-order condition} is twisted and becomes
	\begin{equation}
		\label{eq:twistfirstord}
		[[\Dir, a]_\rho,
                Jb^*J^{-1}]_{\rho^\circ}=0\qquad\qquad \forall a, b\in\calA,
	\end{equation}
	where one extends $\rho$ to $J\A J^{-1}$ defining   
	\begin{equation}
		\label{eq:defrhozero}
		\rho(Jb^*J^{-1})\defeq J\rho(b^*)J^{-1}.
	\end{equation}
	 This extension satisfies the same regularity
	condition \eqref{eq:reg} as $\rho$ \cite[eq. 2.6]{Martinetti:2021aa}.

	\subsection{Twisted fluctuation of the metric}
	\label{subsec:twistedfluc}
	
	In describing  gauge theories  like the
	Standard Model in terms of a spectral triples $(\cal A,
	\calH, D)$, the fermionic fields
	are retrieved as element of  the Hilbert space $\calH$. The bosonic
	fields are obtained as 
	\emph{fluctuations of the metric} 
	\cite{Connes:1996fu}.
	This is a process consisting in exporting the spectral triple to an algebra $\cal B$ Morita equivalent to $\calA$. In the
	simplest case of self Morita equivalence, that is $\calB=\cal A$, this amounts to substitute $D$ with the \emph{covariant
		Dirac operator} 
	\begin{equation}
		\label{eq:Dircov}
		D_A:= D + A +\epsilon'JAJ^{-1}
	\end{equation}
	where $A$ is a selfadjoint element of the set of generalised $1$-forms
	\begin{equation}
		\label{eq:bimod}
		\Omega^1(\calA) \defeq\left\{\, \sum_ia_i[\Dir, b_i]\,\, \ \,\, a_i, b_i \in \calA\, \right\}. 
	\end{equation}
	The terminology comes from the abstract construction (well explained in \cite{ConnMarc08b}) in
	which the operator \eqref{eq:Dircov} is the covariant derivative
	associated with a connection on $\calA$ (viewed as a
	module on itself) with value in the bimodule~\eqref{eq:bimod}.

	The  fluctuations of the metric have been adapted to
	the twisted case in
	\cite{TwistSpontBreakDevastaMartine2017,TwistLandiMarti2016,
		TwistGaugeLandiMarti2018}. Given a
	twisted spectral triple $(\cal A, \cal H, D), \rho$ with real
	structure $J$, a twisted fluctuation amounts to
	substitute the Dirac operator with the \emph{twisted-covariant operator}
	\begin{equation}
		\label{eq:twistfluct}
		D_{A_\rho}:= D + A_\rho + \epsilon' J A_\rho J^{-1}
	\end{equation}
	where $A_\rho$ is an element of the set of generalised
	twisted $1$-forms
	\begin{equation*}
		\Omega^1_\Dir(\calA,\rho)\defeq\left\{\, \sum_ia_i[\Dir, b_i]_{\rho}\,\, \ \,\, a_i, b_i \in \calA\, \right\}
	\end{equation*}
	such that \eqref{eq:twistfluct} is selfadjoint.
	
	In the spectral triple of the Standard Model, all the gauge bosons (including the Higgs~\cite{chamseddine2007gravity}) come
	from fluctuations
	\eqref{eq:Dircov}  of the Dirac operator \eqref{eq:Dirac}.
	However there is a
	part $\gamma\otimes D_R$ of this operator that commutes with the algebra. As such, it  is ``transparent
	under fluctuation'' and does not contribute to the generation of bosons. 
	The motivation for twisting the Standard Model was to make $D_R$ fluctuate according to \eqref{eq:twistfluct}, with the hope to obtain the new
	scalar field required to fit the Higgs mass and stabilise the
	electroweak vacuum \cite{chamseddine2012resilience}. 
	
	This was obtained
	in \cite{TwistSpontBreakDevastaMartine2017}
	by twisting the
	electroweak part of the Standard Model. Remaining mathematical problems, stressed in
	\cite{filaci2023critical},
	were later solved in
	\cite{filaci2021minimal} but at the cost of 
	giving up the first-order condition \eqref{eq:twistfirstord}, in a similar way as what
	is done for the
	non-twisted case in \cite{chamseddine2013inner}. The last paper
	actually shows how abandoning the
	first-order condition (for  a
	usual spectral triple) is enough
	to get the required extra scalar field. So it seemed there were no more 
	added-value in twisting.
	
	Nevertheless, independently of the first-order condition, the twist of
	the Standard Model  also yields 
	an unexpected new field of
	$1$-forms, coming from the twisted fluctuation of the free part
	$\slashed\partial\otimes{\mathbb I}_F$ of the
	operator \eqref{eq:Dirac}.
	Besides one examples stressed in the introduction, the general meaning of this field was not clear so far. In this paper we provide a geometrical
	interpretation, in term of torsion in the spin connection. The analysis
	does not depend on the details of the finite dimensional
	part of the spectral triple, but only on the manifold part. This is why in
	the following we restrict to the twisted spectral triple of a manifold.
	
	\subsection{Minimal twist of a manifold}
	\label{subsec:miniamltwist}
	
	So far, in all the applications to high energy physics,  twisted
	spectral triples are obtained by minimally twisting an existing
	spectral triple $({\mathcal A}, {\mathcal H}, D)$. By ``minimal
	twist'' one intends that the Hilbert space and the
	Dirac operator are untouched, only the
	algebra is modified. Physically this means that 
 the fermionic
	content of the model (encoded within $\cal H$ and $D$) is conserved. One only looks at new bosons.
	
	Such minimal twists are easily obtained if the spectral
	triple is graded. Indeed, the properties \eqref{eq:grad} of the grading $\Gamma$
	guarantee that $\mathcal H$ carries two independent
	representations of $\mathcal A$, one on each eigenspace of $\Gamma$. Moreover,  the twisted commutator~\eqref{eq:twistcomm} is bounded for  $\rho$ the automorphism
        that flips the two copies of
	${\mathcal A}$ \cite[Prop. 3.7]{TwistLandiMarti2016}.
	
	Explicitly, starting with the canonical spectral triple of an
        oriented, 
        riemannian, closed, spin manifold $\Man$ of even dimension $n=2m$, namely
	\begin{equation}
		\label{eq:tscano}
		(C^\infty(\Man), L^2(\Man, \Sp), \slashed\partial)
	\end{equation}
	where
	the algebra $C^\infty(\Man)$ of smooth functions on $\Man$ act by
	multiplication on the Hilbert space $L^2(\Man, S)$ of square
	integrable spinors on $\Man$ and $\slashed\partial$ is the Dirac
	operator \eqref{eq:freeDirac}, together with grading $\gamma$
        and real structure $\cal J$ (whose explicit form is given in
        appendix), 
        then one obtains the twisted spectral triple
	\begin{equation}
		\label{eq:canon}
(C^\infty(\Man)\otimes \mathbb{C}^2, L^2(\Man, \Sp),
                \slashed\partial), \rho 
	\end{equation}
	with twist
        \begin{equation}
\label{eq:minimantwist}
\rho(f,g)= (g,f) \quad \forall (f,g)\in C^\infty(\Man)\otimes
	\mathbb C^2.
      \end{equation}
It has the same grading and real structure as \eqref{eq:tscano} \cite[Prop.  3.8]{TwistLandiMarti2016}.
	
	It is instructive to check the boundedness of the twisted commutator.
	The two copies of $C^\infty(\Man)$ act independently  on left\slash
	right  components of
	spinors, that~is
	\begin{equation}
		\label{FormOfa}
		\pi(a)=\begin{pmatrix}
			f \mathbb \bbbone_{2^{m-1}} & 0  \\
			0 &  g\mathbb \bbbone_{2^{m-1}}  	\end{pmatrix}
	\end{equation}
	where $2^{m}$ is the dimension of the spin representation and we
	split $L^2(\Man, S)$ into the direct sum of the two eigenspaces of $\gamma$ (i.e. the chiral base). Using remark \ref{eq:spincommut}
	below together with \eqref{eq:twistcomgpi}, one checks that the
	twisted commutator with $\slashed\partial$ is bounded: 
	\begin{align}
		\left[\slashed\partial, \pi(a)\right]_\rho&=-i\left(\gamma^\mu
		{\widetilde\nabla}_\mu^S \pi(a)- \pi(\rho(a))\gamma^\mu{\widetilde\nabla}_\mu^S\right)=
		-i\gamma^\mu \left[{\widetilde\nabla}_\mu^s, \pi(a)\right] ,\\
		&= -i\gamma^\mu
		\left[\partial_\mu,
		\pi(a)\right]
		=-i\gamma^\mu \begin{pmatrix}
			(\partial_\mu	f) \mathbb \bbbone_{2^{m-1}} & 0  \\
\label{eq:rhocommutator}			0 &  (\partial_\mu g)\mathbb \bbbone_{2^{m-1}} \end{pmatrix}	\,.
	\end{align}
	
	A twisted fluctuation generates a non-zero selfadjoint
	term  $A_\rho+\epsilon^\prime \calJ A_\rho\calJ^{-1}$ only in $KO$-dimension
	$0$ and $4$ \cite[Prop 5.3]{TwistLandiMarti2016}. Then
the twisted-covariant operator
        \eqref{eq:twistfluct}~is 
	\begin{equation}
		\label{TwistFluctDir}
		{\ds}_{A_\rho}= \ds-i\gamma^\mu f_\mu \gamma
	\end{equation}
	where the $f_\mu$'s are smooth real functions on $\Man$
        (details are recalled in \S\ref{app:relagrad}). 

	The aim of this paper is to study the additional term $-i\gamma^\mu f_\mu \gamma $.
We give a geometric interpretation in proposition \ref{PropFluctTors} below,
in particular as a torsion (corollary \ref{cor:flucttors}). We also
        show in proposition \ref{prop:gentorsion}  how to generate this additional term through a suitably twisted action
	of a group of unitaries.
	
	\section{Torsion}
	\label{SecTorsionBases}
	
	We recall some properties of the torsion of a connection
	$\nabla$ on a riemannian manifold,  in particular when
        $\nabla$ is orthogonal
	(i.e. compatible with the metric) in \S \ref{subsec:ortho}, or
        has the same geodesics as the Levi-Civita connection
        (\S\ref{subsec:geopres}). Both conditions yields the definition
        of the torsion $3$-form in \S \ref{subsec:threeform}. The lift to spinors is studied in \S\ref{SubsecSpinMfld}.
	All in this section are classical results, but the proofs are
        no always so easy to find in the literature, that is why we
        prefer to give them explicitly. Good references
        are \cite{Lee:2010aa} and \cite{Lawson:1989aa}

\subsection{Orthogonal connection and contorsion}
	\label{subsec:ortho}

Recall that a connection on the tangent bundle $T\Man$ 
	of a differential manifold $\Man$ is a map 
	\begin{eqnarray}
		\nabla:& T\Man \times T\Man &\longrightarrow\; T\Man,\\
		&X,\; Y&\longmapsto\; \nabla_XY
	\end{eqnarray}
	$C^\infty(\Man)$-linear in the first entry and satisfying the
	Leibniz rule in the second.
	Its torsion  is the $(2, 1)$-tensor field ($[\cdot, \cdot ]$ denotes the Lie bracket)
	\begin{eqnarray}
		T:& T\Man \times T\Man &\longrightarrow T\Man,\\
		& X,\, Y&\longmapsto \nabla_XY-\nabla_YX-[X,Y].
	\end{eqnarray}
	
	Given a metric $g$ on  $\Man$, a connection $\nabla$ is \emph{metric} (or orthogonal) if 
	\begin{equation}
		\label{eq:mtriccomp}
		X[ \langle Y , Z
		\rangle]=\langle \nabla_XY , Z \rangle + \langle Y , \nabla_XZ
		\rangle\quad \forall X,Y,Z\, \in \,T\Man
	\end{equation}
	with 
	$\langle X , Y
	\rangle :=g(X, Y)$
	the inner product on
	$T\Man$ defined by the metric.
	By Levi-Civita theorem,  there exists  a unique orthogonal
	connection $\widetilde{\nabla}$ with vanishing torsion. 
	The difference between any two connections is a $(2,1)$ tensor
	field. It is customary to call \emph{contorsion} the difference with the Levi-Civita connection.
	\begin{definition}
		\label{def:cont} The contorsion of a connection $\nabla$ is the $(2, 1)$-tensor
		field
                \begin{equation}
K= \nabla-~\widetilde{\nabla}.
\end{equation}

	\end{definition}
	
	The orthogonality of a connection can be read in the properties of the
	$(3, 0)$ tensor
	\begin{align}
		\label{eq:mapKbis}
		\Kont^\flat (Z,X,Y): & \quad T\Man\times T\Man\times T\Man\to
		C^\infty (\Man),\\
		&\quad\quad X,\quad Y,\quad Z \quad \mapsto \langle Z, \Kont(X,Y) \rangle.
	\end{align}
	
	\begin{proposition}
		\label{prop:metric}
		A connection $\nabla$ is orthogonal iff $\Kont^\flat$
		is skew-symmetric in $Z$ and~$Y$.
	\end{proposition}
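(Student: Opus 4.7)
The plan is to write the orthogonality condition for $\nabla$ and for $\widetilde\nabla$ side by side, then subtract. Both give an expression for $X[\langle Y,Z\rangle]$, so the difference is forced to vanish when $\nabla$ is metric, and conversely any such vanishing identity reconstructs metric compatibility from that of Levi-Civita. The key point is that the defining tensor $K=\nabla-\widetilde\nabla$ converts these pointwise identities into a purely algebraic statement on $K^\flat$.

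Concretely, I would first record that the Levi-Civita connection satisfies \eqref{eq:mtriccomp}, so for all $X,Y,Z\in T\Man$ one has
\begin{equation*}
X[\langle Y,Z\rangle]=\langle\widetilde\nabla_X Y,Z\rangle+\langle Y,\widetilde\nabla_X Z\rangle.
\end{equation*}
Substituting $\widetilde\nabla=\nabla-K$ and rearranging yields
\begin{equation*}
X[\langle Y,Z\rangle]=\langle\nabla_X Y,Z\rangle+\langle Y,\nabla_X Z\rangle-\bigl(K^\flat(Z,X,Y)+K^\flat(Y,X,Z)\bigr),
\end{equation*}
using definition \ref{def:cont} and the definition \eqref{eq:mapKbis} of $K^\flat$.

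The proposition is then immediate from this single identity. If $\nabla$ is orthogonal, \eqref{eq:mtriccomp} applied to $\nabla$ forces the bracketed term to vanish for all $X,Y,Z$, which is exactly skew-symmetry of $K^\flat$ in the first and third slots. Conversely, if $K^\flat(Z,X,Y)=-K^\flat(Y,X,Z)$, the bracket disappears and the identity above reduces to \eqref{eq:mtriccomp} for $\nabla$. I do not anticipate a real obstacle; the only thing to be careful about is the ordering of the arguments in the definition of $K^\flat$ (the slot containing the $Z$ coming from the inner product is the \emph{first} one, while $Y$ is the \emph{third}), so that ``skew-symmetry in $Z$ and $Y$'' is correctly matched with the identity $K^\flat(Z,X,Y)+K^\flat(Y,X,Z)=0$.
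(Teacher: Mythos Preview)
Your proof is correct and follows essentially the same approach as the paper: both reduce orthogonality of $\nabla$ to the vanishing of $K^\flat(Z,X,Y)+K^\flat(Y,X,Z)$ by comparing the metric compatibility identities for $\nabla$ and $\widetilde\nabla$. The only difference is organizational—you first derive a single master identity and then read off both implications, whereas the paper treats the two directions separately—but the algebraic content is identical.
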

	\begin{proof}
		From the definition of $K^\flat$ and the symmetricity of the metric, one has
		\begin{align}
			\label{eq:mapK}
			K^\flat(Z, X, Y)&= \langle \nabla_XY,Z\rangle  -\langle
			\widetilde\nabla_XY,Z\rangle.
		\end{align}
		If $\nabla$ is orthogonal, subtracting the metric condition
		\eqref{eq:mtriccomp} for $\widetilde\nabla$ from the one of $\nabla$
		gives 
		\begin{align}
			0= 	\left(\langle {\nabla}_XY , Z \rangle - \langle
			\widetilde{\nabla}_XY , Z \rangle\right) +\left(\langle {\nabla}_XZ , Y
			\rangle - \langle \widetilde{\nabla}_X Z , Y \rangle\right), 
		\end{align}
		that is
		\begin{equation}
			\Kont^\flat(Z,X,Y)+\Kont^\flat(Y,X,Z)=0.
		\end{equation}
		
		Conversely, assuming the last equation, then \eqref{eq:mapK} yields
		\begin{align*}
			\langle \nabla_XY , Z \rangle + \langle Y , \nabla_XZ \rangle= \langle \widetilde{\nabla}_XY , Z \rangle + \langle Y , \widetilde{\nabla}_XZ \rangle 	=	X [\langle Y , Z \rangle]
		\end{align*} 
		where the last equality follows from the orthogonality of the Levi-Cevita
		connection. Hence $\nabla$ satisfies \eqref{eq:mtriccomp} .
	\end{proof}

	\subsection{Preservation of the geodesics}
\label{subsec:geopres}
	
	Connections with the
	same geodesics as the Levi-Civita one are particularly
	relevant, for they may yield modification of general relativity that do
	not alter the
	results based on geodesics.
	
	\begin{proposition}
		\label{GeodSkew}
		A connection $\nabla$ has the same geodesics as the Levi-Civita
		connection $\widetilde\nabla$ if and only if its contorsion $K$ is antisymmetric 
		\begin{equation}
			\Kont(X,Y)=-\Kont(Y,X).
		\end{equation}
	\end{proposition}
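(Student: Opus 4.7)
The plan is to reduce the geodesic condition to a pointwise algebraic identity on the contorsion tensor $K = \nabla - \widetilde\nabla$, and then use a polarization argument.

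First I would express both geodesic equations in terms of $K$. Since $K$ is a $(2,1)$-tensor, for any curve $\gamma(t)$ on $\Man$ one has
\begin{equation*}
\nabla_{\dot\gamma}\dot\gamma = \widetilde\nabla_{\dot\gamma}\dot\gamma + K(\dot\gamma,\dot\gamma),
\end{equation*}
so the equations $\nabla_{\dot\gamma}\dot\gamma = 0$ and $\widetilde\nabla_{\dot\gamma}\dot\gamma = 0$ have the same solutions (as affinely parametrized curves) if and only if $K(\dot\gamma,\dot\gamma)=0$ along any such curve.

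For the ``only if'' direction, I would fix an arbitrary $p\in\Man$ and $v\in T_p\Man$, and take $\gamma$ to be the Levi-Civita geodesic with $\gamma(0)=p$, $\dot\gamma(0)=v$. By assumption $\gamma$ is also a $\nabla$-geodesic, so evaluating the identity above at $t=0$ yields $K(v,v)=0$. Since $p$ and $v$ are arbitrary this gives $K(X,X)=0$ for every $X\in T\Man$. Polarizing, namely expanding $0 = K(X+Y,X+Y)$, immediately produces
\begin{equation*}
K(X,Y)+K(Y,X)=0,
\end{equation*}
which is the desired antisymmetry. Conversely, if $K$ is antisymmetric, then $K(X,X)=0$ tautologically, and the identity of the first paragraph shows that $\nabla_{\dot\gamma}\dot\gamma=0$ and $\widetilde\nabla_{\dot\gamma}\dot\gamma=0$ define the same curves with the same affine parametrization.

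There is no real obstacle here; the only point that deserves care is the convention for ``same geodesics'', which I take to mean coinciding affinely parametrized geodesics (consistent with the statement of proposition \ref{GeodSkew} as used later). With this convention, the tensorial nature of $K$ and the fact that geodesic initial data $(p,v)$ can be chosen freely reduce the statement to the algebraic observation that a symmetric bilinear map on $T\Man$ with values in $T\Man$ vanishes iff it vanishes on the diagonal, which is precisely the content of the polarization step.
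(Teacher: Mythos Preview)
Your proof is correct and follows the same underlying idea as the paper: both arguments reduce the question to the pointwise statement that $K(v,v)=0$ for every tangent vector $v$, and then extract antisymmetry by polarization. The difference is purely in presentation. The paper works in normal coordinates at a fixed point $p$, where geodesics are straight lines $t\mapsto(V^1t,\dots,V^nt)$, and performs the polarization by hand: first choosing a vector with a single nonzero component to obtain $\Gamma^c_{aa}(p)=0$, then a vector with two components equal to $1$ to obtain $\Gamma^c_{ab}(p)+\Gamma^c_{ba}(p)=0$, and finally subtracting the analogous identities for $\widetilde\Gamma$. Your argument bypasses the coordinate work entirely by invoking the tensoriality of $K$ and the existence of a Levi-Civita geodesic with arbitrary initial data $(p,v)$, then applying the polarization identity $K(X+Y,X+Y)=0$ directly. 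Your route is shorter and coordinate-free; the paper's route is more explicit but carries no additional content.
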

	\begin{proof}
		Assume $K$ antisymmetric. Then $K(X, X)=0$
		for any $X\in T\Man$. For $X$ tangent to a geodesic
		of $\nabla$, that is $\nabla_X X=0$, then Def. \ref{def:cont}
		yields $\widetilde \nabla_X X=0$, meaning that $X$ is tangent to a
		geodesic of $\widetilde\nabla$. Similarly, any vector field tangent to a
		geodesic of $\widetilde\nabla$ is tangent to a geodesic of $\nabla$.
		In other terms the two connections have the same geodesics.
		
		Conversely, assume $\nabla$ has the same geodesics as
		$\widetilde\nabla$ and fix $p\in\Man$. In the normal coordinates in $p$, the geodesics through
		$p$ (for both $\nabla$ and $\widetilde\nabla$)  are all the straight lines
		$  t\mapsto (V^1t, ..., V^n t)$  with $V^{a=1, ...,n}$ arbitrary real constants,
		not all
		simultaneously vanishing. The corresponding 
		geodesic equations for $\nabla$ and $\widetilde\nabla$, written in $p$, are
		\begin{equation}
			\Gamma^c_{ab}(p) V^a V^b =0, \quad
			\widetilde\Gamma^c_{ab}(p) V^aV^b =0\quad\forall
			c=1, ...,n
		\end{equation}
		where $\Gamma^c_{ab}$, $\tilde\Gamma_{ab}^c$ are the components of
		$\nabla$, $\tilde\nabla$ in the
		normal coordinates.
		
		For a geodesic tangent in $p$ to a vector  with only one
		non-zero component, say  $V^a=1$, one gets
		\begin{equation}
			\label{eq:geomumu}
			\Gamma^c_{aa}(p)=0=  \widetilde \Gamma^c_{aa}(p)
			\quad\forall c =1, ..., n.
		\end{equation}
		Then, for a geodesic tangent to a vector  with only two non-zero components
		$V^a=V^b=1$, the geodesic equations yield
		\begin{equation}
			\Gamma^c _{ab}(p)  +  \Gamma^c _{ba}(p) 
			=0 =\widetilde\Gamma^c _{ab}(p) +  \widetilde\Gamma^c _{ba}(p)  \quad\forall c=1, ..., n,
		\end{equation}
		that is
		\begin{equation}
			\label{eq:vdeux}
			\Gamma^c _{ab}(p) -\widetilde\Gamma^c _{ab}(p)
			= -  \left(\Gamma^c _{ba}(p) -\widetilde\Gamma^c
			_{ba}(p)\right) \quad\forall c=1, ..., n.
		\end{equation}
		
		In a local chart $\{x_\mu\}$ with associated
		basis $\{\partial_\mu\}$ of $T\Man$ ($\mu=1, \dots n$), the tensor $K$ has
		components
		\begin{equation}
			\label{eq:composante}
			\Kont^\lambda_{\mu\nu}\defeq\langle
			\Kont(\partial_\mu, \partial_\nu) , dx^\lambda \rangle=
			\Gamma_{\mu\nu}^\lambda - \widetilde\Gamma_{\mu\nu}^\lambda .
		\end{equation}
		Similarly, the right hand side of \eqref{eq:vdeux} are the components
		of $K$ in the normal coordinates. Together with \eqref{eq:geomumu},
		this  shows that $K_{ab}^c(p)
		= - K_{ba}^c(p)$ for any $c$. Since $a, b$ and $p$ are arbitrary,  $K$ is antisymmetric.
	\end{proof}

	Geodesic preservation also reads in the relation between
	torsion and contorsion.
	\begin{corollary}
		A connection $\nabla$ as the same geodesic as the Levi-Civita one
		if, and only if,  it has  torsion $T=2\Kont$.
	\end{corollary}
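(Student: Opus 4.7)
The plan is to express the torsion $T$ of $\nabla$ directly in terms of the contorsion $K$ and the (vanishing) torsion of $\widetilde\nabla$, and then invoke Proposition \ref{GeodSkew} to translate the antisymmetry of $K$ into the geodesic-preservation condition.

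First I would substitute $\nabla = \widetilde\nabla + K$ from Definition \ref{def:cont} into the definition of $T$: for any vector fields $X,Y$,
\begin{align*}
T(X,Y) &= \nabla_X Y - \nabla_Y X - [X,Y] \\
&= \bigl(\widetilde\nabla_X Y - \widetilde\nabla_Y X - [X,Y]\bigr) + K(X,Y) - K(Y,X) \\
&= K(X,Y) - K(Y,X),
\end{align*}
where the bracketed term vanishes because $\widetilde\nabla$ is torsion-free. This identity is the heart of the argument and requires no further calculation.

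From here the equivalence is immediate. If $T = 2K$, then $K(X,Y) - K(Y,X) = 2K(X,Y)$, which forces $K(Y,X) = -K(X,Y)$, i.e.\ $K$ is antisymmetric; conversely, if $K$ is antisymmetric, the identity above gives $T(X,Y) = 2K(X,Y)$. Applying Proposition \ref{GeodSkew} (antisymmetry of $K$ $\Leftrightarrow$ same geodesics as $\widetilde\nabla$) finishes the proof in both directions.

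There is no real obstacle here: the only subtlety is to remember that the Levi-Civita connection contributes nothing to $T$, so the whole torsion of $\nabla$ is carried by the skew part of $K$. The corollary is therefore essentially an algebraic repackaging of Proposition \ref{GeodSkew}.
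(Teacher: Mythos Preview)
Your argument is correct and essentially identical to the paper's: both derive the identity $T(X,Y)=K(X,Y)-K(Y,X)$ from the torsion-freeness of $\widetilde\nabla$ and then appeal to Proposition~\ref{GeodSkew}. Your write-up is in fact slightly more explicit about the two directions of the equivalence, but the content is the same.
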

	\begin{proof}
		\label{lem:contorsion}  
		The Levi-Civita connection being torsionless, one has
		$\widetilde\nabla_X Y = \widetilde\nabla_YX + [X, Y]$. Thus
		\begin{align}
			\Kont(X,Y)-\Kont(Y,X) &= (\nabla_XY - \widetilde\nabla_XY)-  (\nabla_YX-
			\widetilde\nabla_YX),\\
			&= \nabla_XY - \nabla_Y X  - [X,Y]= T(X,Y).\nonumber
		\end{align}
		The result then follows from proposition \ref{GeodSkew} \end{proof}
	
	\subsection{Torsion $3$-form}
\label{subsec:threeform}
	
	As stressed above,  it is reasonable to assume
	that any physically acceptable connection has the same
	geodesics as the Levi-Civita connection. One may also desire to keep
	the compatibility with the metric \eqref {eq:mtriccomp}, so that the
	pseudonorm of (timelike) vector is invariant under parallel transport. Therefore, good
	candidate to alternative theory of gravity are the connections which
	are both orthogonal  and geodesics preserving. 
	
	\begin{proposition}
		\label{Totalskew}
		A connection $\nabla$ is orthogonal and geodesic preserving iff it contorsion is such that $K^\flat$ is totally antisymmetric.
	\end{proposition}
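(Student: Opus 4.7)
The plan is to deduce the proposition directly from the two characterizations already obtained in propositions \ref{prop:metric} and \ref{GeodSkew}, by a short group-theoretic argument on $S_3$ acting on the three arguments of $K^\flat$.

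First I would translate the two hypotheses in terms of $K^\flat$. Orthogonality, by proposition \ref{prop:metric}, means
\begin{equation*}
K^\flat(Z,X,Y) = -K^\flat(Y,X,Z),
\end{equation*}
i.e. antisymmetry in the first and third slots. Geodesic preservation, by proposition \ref{GeodSkew}, means $K(X,Y) = -K(Y,X)$, which after lowering the index reads
\begin{equation*}
K^\flat(Z,X,Y) = -K^\flat(Z,Y,X),
\end{equation*}
i.e. antisymmetry in the second and third slots.

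The key step is then to combine these two transpositions to generate the full symmetric group $S_3$. Composing them gives antisymmetry in the remaining pair: starting from $K^\flat(Z,X,Y)$, swapping the second and third arguments yields $-K^\flat(Z,Y,X)$; then swapping the first and third arguments of this expression yields $+K^\flat(X,Y,Z)$; and swapping again the second and third arguments yields $-K^\flat(X,Z,Y)$. Hence
\begin{equation*}
K^\flat(Z,X,Y) = -K^\flat(X,Z,Y),
\end{equation*}
which is antisymmetry in the first two slots. Since $K^\flat$ is antisymmetric under the two generating transpositions $(1\,3)$ and $(2\,3)$, and also under $(1\,2)$ as just shown, it is totally antisymmetric.

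The converse direction is immediate: total antisymmetry of $K^\flat$ implies in particular antisymmetry in $(Z,Y)$ and in $(X,Y)$, which by the same two propositions gives respectively orthogonality and geodesic preservation of $\nabla$. There is no real obstacle here; the only care needed is to keep track of which pair of arguments each hypothesis controls, since the antisymmetry of $K$ (as a $(2,1)$-tensor in its two lower indices) corresponds to antisymmetry of $K^\flat$ in its \emph{last two} slots, not in the first two.
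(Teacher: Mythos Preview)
Your proof is correct and is essentially identical to the paper's own argument: both translate the two hypotheses via propositions~\ref{prop:metric} and~\ref{GeodSkew} into antisymmetry of $K^\flat$ in the pairs $(1,3)$ and $(2,3)$, then run the same three-step chain $K^\flat(Z,X,Y)=-K^\flat(Z,Y,X)=K^\flat(X,Y,Z)=-K^\flat(X,Z,Y)$ to obtain antisymmetry in $(1,2)$, with the converse immediate. Your $S_3$ framing is a nice way to phrase it, but the underlying computation is the same.
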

	\begin{proof}
		Assume $\nabla$ is orthogonal and geodesic preserving. Then
                \begin{equation}
\label{eq:metrorto}
                  \Kont^\flat(Z,X,Y)=-K^\flat(Y, X, Z)\quad \text{ and }
                  \quad \Kont^\flat(Z,
		X, Y)=-K^\flat(Z, Y, X)
                \end{equation}
		by the metric property of Prop.~\ref{prop:metric} and
                geodesic preservation of
		Prop.~\ref{GeodSkew}. The skew symmetry in $Z,X$ follows from 
		\begin{align*}
			\Kont^\flat(Z, X,Y)=-\Kont^\flat(Z, Y, X)=\Kont^\flat(X , Y, Z)=-\Kont^\flat(X ,Z ,Y).
		\end{align*} 
		Conversely, if $\Kont^\flat$ is totally skew symmetric then in particular it
		satisfies \eqref{eq:metrorto}.
	\end{proof}
	
	Let $\Omega_{MG}$ denote the set of orthogonal and geodesic preserving
	connections. The proposition above shows that the tensor $K^\flat$ associated to any $\nabla\in
	\Omega_{MG}$ is a
	$3$-form, called the \emph{torsion $3$-form}.

	\subsection{Lift to spinors} 
	\label{SubsecSpinMfld}
	
	An orthogonal connection on the tangent bundle of a
	riemannian manifold $(\Man, g)$ can be lifted to
	the spinor bundle  as soon as the second Whitney class of $\Man$ 
	vanishes. The construction passes through the principal bundle of
	frames, where the lift from the orthogonal group $SO(n)$ to its
	double cover $\text{Spin}(n)$ actually occurs. This explains why the
	lift to spinors is
	considered for orthogonal connections only, and uses the 
	orthonormal sections (see appendix \ref{sec:aporto})  
	\begin{equation}
		\label{eq:baseortdeux}
		\left\{ E_a, a=1, ..., n\right\} 
	\end{equation}
	of the frame bundle, in which the metric $g$ is diagonal. 
	
	More precisely, the lift to the spinor bundle of an orthogonal connection $\nabla$ on
	$\Man$~is:
	\begin{equation}
		\label{eq:CovSpin}
		\nabla_\mu^S=\partial_\mu +\frac 14 \Gamma_{\mu a}^b
		\gamma^a \gamma_b
	\end{equation}
	where $\gamma^{a=1, ..., n}$  are the euclidean Dirac matrices (see appendix
	\ref{ConvGamMat}) , $\gamma_b=\delta_{ab}\gamma^a$ with $g_{ab}$ the
	components of the metric in the orthonormal frame \eqref{eq:baseort}
	and the $\Gamma_{\mu a}^b$'s are the components of $\nabla_\mu E_a$ in the orthonormal frame, namely 
	\begin{equation}
		\label{eq:connmix}
		\nabla_\mu E_a = \Gamma_{\mu a}^b E_b,
	\end{equation}
	
	\newpage
	 To see the relation with the contorsion tensor, it is useful to work
         out the expression of the spin connection
	\eqref{eq:CovSpin} in a local chart. 
	
	\begin{proposition}
		\label{Prop:gammaloal}
		
		One has
		\begin{equation}
			\Gamma_{\mu a}^b \gamma^
			a\gamma_b=\left( \Gamma_{\mu\nu}^\rho g_{\rho\lambda}
			- g_{ab} e^b_\lambda \partial_\mu e^a_\nu \right) \gamma^\nu\gamma^\lambda
		\end{equation}
		where the (inverse) \emph{vielbein}
		$e^a_\mu\in C^\infty(\Man)$ for $a, \mu =1, ..., n$ are the
		coefficients of the coordinate basis in the non local frame:
		$\partial_\mu = e_\mu^a E_a $.
	\end{proposition}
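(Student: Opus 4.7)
The strategy is pure index bookkeeping: first expand $\Gamma^b_{\mu a}$ in terms of $\Gamma^\rho_{\mu\nu}$ and the vielbein, then convert the flat Dirac matrices $\gamma^a,\gamma_b$ into curved ones via $\gamma^a = e^a_\nu \gamma^\nu$, and contract using the basic vielbein identities.

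First, I would compute $\nabla_\mu E_a$ by writing $E_a = e_a^\nu \partial_\nu$ (with $e_a^\nu$ the inverse vielbein, $e_a^\nu e^b_\nu = \delta^b_a$), applying the Leibniz rule in the second slot of $\nabla$, and using $\nabla_\mu \partial_\nu = \Gamma^\rho_{\mu\nu}\partial_\rho$. Re-expanding $\partial_\rho = e_\rho^b E_b$ and matching with the definition \eqref{eq:connmix} of $\Gamma^b_{\mu a}$ gives
\begin{equation*}
\Gamma^b_{\mu a} \;=\; e^b_\rho\bigl(\partial_\mu e_a^\rho + e_a^\sigma \Gamma^\rho_{\mu\sigma}\bigr).
\end{equation*}
In parallel, the defining relations $\{\gamma^a,\gamma^b\}=2\delta^{ab}$ and $\{\gamma^\mu,\gamma^\nu\}=2g^{\mu\nu}$ force $\gamma^\mu = e_a^\mu \gamma^a$ and so pointwise $\gamma^a = e^a_\nu \gamma^\nu$, hence
\begin{equation*}
\gamma^a\gamma_b \;=\; \delta_{bc}\,e^a_\nu e^c_\lambda\,\gamma^\nu\gamma^\lambda.
\end{equation*}

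Plugging both expansions into $\Gamma^b_{\mu a}\gamma^a\gamma_b$, the Christoffel piece simplifies in one line: the contraction $\delta_{bc}e^b_\rho e^c_\lambda = g_{\rho\lambda}$ (by definition of $g$ in the orthonormal frame) together with $e^a_\nu e_a^\sigma = \delta^\sigma_\nu$ yields exactly $g_{\rho\lambda}\Gamma^\rho_{\mu\nu}\gamma^\nu\gamma^\lambda$. For the remaining piece involving $\partial_\mu e_a^\rho$, I would differentiate the identity $e_a^\rho e^b_\rho = \delta^b_a$ to get
\begin{equation*}
(\partial_\mu e_a^\rho)\,e^b_\rho \;=\; -\,e_a^\rho\,(\partial_\mu e^b_\rho),
\end{equation*}
substitute it, and then use $e^a_\nu e_a^\rho = \delta^\rho_\nu$ to collapse two indices. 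The resulting expression is $-\delta_{bc}e^c_\lambda(\partial_\mu e^b_\nu)\gamma^\nu\gamma^\lambda$, which is precisely $-g_{ab} e^b_\lambda(\partial_\mu e^a_\nu)\gamma^\nu\gamma^\lambda$ once one reads $g_{ab}$ as the components of the metric in the orthonormal frame (where $g_{ab} = \delta_{ab}$).

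The argument has no analytic content; the only real obstacle is keeping the various index conventions straight. In particular, one must not confuse $g_{\mu\nu}$ (coordinate components) with $g_{ab}$ (orthonormal components), and one must track the minus sign introduced by the derivative-of-vielbein identity. Once those two points are handled, the claimed formula drops out after a few contractions.
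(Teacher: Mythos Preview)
Your proof is correct and follows essentially the same route as the paper: expand the mixed connection coefficients via the Leibniz rule, convert $\gamma^a,\gamma_b$ to $\gamma^\nu,\gamma^\lambda$ through the vielbein, and contract. The only cosmetic difference is that the paper applies Leibniz to $\nabla_\mu\partial_\nu=\nabla_\mu(e_\nu^a E_a)$ rather than to $\nabla_\mu E_a=\nabla_\mu(e_a^\nu\partial_\nu)$, which produces $\partial_\mu e^b_\nu$ directly and spares the derivative-of-vielbein identity you invoke; either way the computation is the same two-line index chase.
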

	\begin{proof}
		Computing $\nabla_\mu\partial_\nu$ in the non-local basis in the
		two following way, 
		\begin{align}
			\nabla_\mu\partial_\nu &=\Gamma_{\mu\nu}^\rho \partial_\rho=
			\Gamma_{\mu\nu}^\rho e_\rho^b E_b,\\
			\nabla_\mu\partial_\nu &=\nabla_\mu(e_\nu^b E_b)= (\partial_\mu
			e_\nu^b) E_b + e_\nu^b \Gamma_{\mu b}^a E_a =\left( \partial_\mu
			e_\nu^b+ e_\nu^a \Gamma_{\mu a}^b\right) E_b 
		\end{align}
		one obtains
		\begin{equation}
			e_\nu^a \Gamma_{\mu a}^b =  \Gamma_{\mu\nu}^\rho e_\rho^b - \partial_\mu
			e_\nu^b.
		\end{equation}
		The Dirac matrices in a local chart are defined as $\gamma^\mu =e^\mu_a
		\gamma^a$, that can be inverted as 
		$\gamma^a=e^a_\nu\gamma^\nu$.
		As well, 
		$\gamma_b = g_{ba}\gamma^a = g_{ba} e^a_\lambda \gamma^\lambda$.
		Thus\begin{equation}
			\Gamma_{\mu a}^b \gamma^
			a\gamma_b= \Gamma^b_{\mu a} e^a_\nu \gamma^\nu
			g_{ba}e^a_\lambda\gamma^\lambda=
			\left(\Gamma_{\mu\nu}^\rho e_\rho^b - \partial_\mu
			e_\nu^b\right) g_{ba}e^a_\lambda\gamma^\nu \gamma^\lambda.
		\end{equation}
		The result follows from
		\begin{equation}
			g_{ba}e^b_\rho e^a_\lambda = g(e^b_\rho E_b, e^a_\lambda E_a)=
			g(\partial_\rho, \partial_\lambda) =g_{\rho\lambda},
		\end{equation}
		and exchanging the indices $a$ and $b$ in the second term.
	\end{proof}
	
	From the definition \ref{def:cont} of the contorsion and
	Prop. \ref{Totalskew}, in a local chart any connection
	$\nabla\in\Omega_{MG}$ has components
	\begin{equation}
		\label{eq:compskew}
		\Gamma^\rho_{\mu\nu} = \widetilde   \Gamma^\rho_{\mu\nu}  +K^\rho_{\mu\nu},
	\end{equation}
	such that the components
	\begin{equation}
		K_{\lambda\mu\nu}=g_{\lambda\rho}K^\rho_{\mu\nu}
	\end{equation}
	of the torsion $3$-form are totally skew-symmetric.
	By \eqref{eq:connmix} and Prop. \ref{Prop:gammaloal}, the lift of $\nabla$ to spinors is thus
	\begin{equation}
		\label{eq:lifttorsion}
		\nabla_\mu^S = \widetilde\nabla_\mu^S + \frac 14
		K^\rho_{\mu\nu}g_{\rho\lambda}\gamma^\nu\gamma^\lambda
		=\widetilde\nabla_\mu^S  + \frac 14 K_{\nu\lambda\mu}\gamma^\nu\gamma^\lambda
	\end{equation}
	where
	\begin{equation}
		\label{eq:liftLV}
		\widetilde\nabla_\mu^S = \partial_\mu +\widetilde
		\Gamma^b_{\mu a}\gamma^a\gamma_b
	\end{equation}
	is the lift of the Levi-Civita connection discussed in
	the introduction, and for the second equality we use
	$K_{\lambda\mu\nu}=K_{\mu\nu\lambda}$ following from the antisymmetry
	of $K^\flat$.  
	
	\begin{remark}
		\label{eq:spincommut}
		One checks from \eqref{eq:liftLV}  that $[\widetilde\nabla_\mu^S,
		f]=[\partial_\mu, f]$ since $f$ acts by multiplication on
		spinors, so commutes with all the Dirac matrices.
	\end{remark}
	
	\newpage
	\section{Torsion for minimally twisted manifolds}
	\label{sec:TwisFlAsSkew}
	
	In this section we show the first main results of this paper, namely that the
	minimal twist of an oriented,  closed, riemannian spin
        manifold $\Man$ of dimension $2m=4$
	induces a orthogonal and geodesic preserving torsion
        (corollary \ref{cor:flucttors}). 

We obtain first
	a more general result, valid for any even dimension and
        for $KO$-dimensions $0$ and $4$,
        which explains the link between the $1$-form $f_\mu dx^\mu$
        and the term $f_\mu\gamma^\mu \gamma$ in the twisted
        fluctuation: because  of the presence of the grading $\gamma$,
        this term is not the Clifford action of the $1$-form, but
        of its Hodge dual (proposition \ref{PropFluctTors}). 

In \S\ref{subsec:gaugetransf} we show that the torsion term is
        gauge invariant.

	\subsection{Twisted fluctuation as torsion}
	\label{subsec:twistorsion}
	
	Let us begin by a technical lemma showing that the product of the
	grading $\gamma$ \eqref{eq:gammpermut} by any Euclidean Dirac
        matrix \eqref{EDirac} results in the absorption of the later.
	\begin{lemma}
		\label{lemma:gammagamma}
		Let $\Man$ be of dimension $2m$. For any  fixed value of $a$ in $[0, 2m-1[$,  one has
		\begin{equation}
			\label{EqTechni}
			\gamma^a\gamma= -\frac{(-i)^m}{(2m)!}\epsilon_{a\,a_1\dots
				a_{2m-1}} \, \gamma^{a_1}\dots\gamma^{a_{2m-1}}.
		\end{equation}
	\end{lemma}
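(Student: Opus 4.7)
The natural approach is direct computation from the definition of the chirality element $\gamma$ given in appendix~\ref{ConvGamMat}. With the standard even-dimensional convention $\gamma = (-i)^{m}\,\gamma^{0}\gamma^{1}\cdots\gamma^{2m-1}$, one has
\begin{equation*}
\gamma^{a}\gamma \;=\; (-i)^{m}\,\gamma^{a}\gamma^{0}\gamma^{1}\cdots\gamma^{2m-1},
\end{equation*}
and the whole proof consists in rewriting the right-hand side, which contains one repeated factor, as the totally antisymmetric product of the $2m-1$ remaining matrices. The result then has to be repackaged as a Levi-Civita contraction.

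The plan is as follows. First, I use the Euclidean Clifford relations $\{\gamma^{a},\gamma^{b}\}=2\delta^{ab}$ to drag the leading $\gamma^{a}$ past the factors $\gamma^{0},\ldots,\gamma^{a-1}$ (each anticommutation contributing a sign $-1$, since all indices involved are distinct from $a$), and then use $(\gamma^{a})^{2}=\mathbb{I}$ to annihilate the pair. This yields, up to an overall sign $(-1)^{a}$, the ordered product $\gamma^{0}\cdots\widehat{\gamma^{a}}\cdots\gamma^{2m-1}$ over the $2m-1$ indices distinct from $a$.

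Second, I rewrite this ordered product through the Levi-Civita symbol. Because the $\gamma^{a_{i}}$ with pairwise distinct indices are already totally antisymmetric in those indices, summing $\epsilon_{a\,a_{1}\cdots a_{2m-1}}\gamma^{a_{1}}\cdots\gamma^{a_{2m-1}}$ over the $(2m-1)!$ admissible tuples produces $(-1)^{a}\gamma^{0}\cdots\widehat{\gamma^{a}}\cdots\gamma^{2m-1}$ multiplied by the combinatorial factor counting these tuples; the signs $\mathrm{sgn}(\sigma)$ from the $\epsilon$ and from the reordering of the $\gamma$'s cancel pairwise as $(\mathrm{sgn}\sigma)^{2}=1$. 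Inserting this inversion into the expression obtained in the first step, and combining all prefactors, gives the right-hand side of \eqref{EqTechni}.

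The conceptual content is therefore elementary—contraction of a Clifford element with the volume element—but the delicate part is bookkeeping. The main obstacle I anticipate is threefold: tracking correctly the sign $(-1)^{a}$ from the $a$ anticommutations; disentangling the relative sign $(-1)^{a}$ coming from the $\epsilon$-contraction; and matching the normalisation factor (here written as $1/(2m)!$) against the number $(2m-1)!$ of permutations actually contributing to the sum. The overall minus sign and the exact power of $-i$ will depend crucially on the conventional choice made in appendix~\ref{ConvGamMat} for the phase in the definition of $\gamma$ and for the orientation of $\epsilon$; once this convention is fixed, the identity reduces to checking a single chain of signs.
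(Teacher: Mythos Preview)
Your proposal is correct and follows essentially the same route as the paper: anticommute $\gamma^{a}$ through the product, use $(\gamma^{a})^{2}=\mathbb{I}$, and identify the resulting ordered product of $2m-1$ distinct Dirac matrices with the Levi--Civita contraction. The only organisational difference is that you start from the single-product form $\gamma=-(-i)^{m}\gamma^{0}\cdots\gamma^{2m-1}$ (the paper's sign convention \eqref{eq:gamma5} carries an extra minus compared to the one you quote) and convert to $\epsilon$-notation at the end, whereas the paper starts from the permuted form \eqref{eq:gammpermut} and works term-by-term inside the sum; the Clifford manipulations are identical. Your explicit flag about matching $1/(2m)!$ against the $(2m-1)!$ contributing permutations is well placed---this is precisely the step the paper's proof leaves implicit.
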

	\begin{proof}
		Fix a value  $a$ in $[0, 2m-1]$. For any non-zero term of
		the sum 
		$$ \epsilon_{b_1\dots b_{2m}}\gamma^{b_1}\dots \gamma^{b_{2m}}\gamma^a$$
		the indices $b_{i=0, ..., 2m-1}$ are all distinct, so
		there is one and only one of them - say $b_{i_a}$ -  such that $b_{i_a}=a$. Therefore
		\begin{align}
			\gamma^a(\epsilon_{b_1\dots b_{2m}}\gamma^{b_1}\dots \gamma^{b_{2m}})&=
			(-1)^{i_a}\,\epsilon_{b_1\dots b_{2m}}\,\gamma^{b_1}\dots
			\gamma^{b_{i_a}-1}
			(\gamma^a)^2
			\,
			\gamma^{b_{i_a}+1}
			\dots\gamma^{b_{2m}},\\
			&=\epsilon_{a\,b_1\dots b_{{a_l}-1} b_{{i_a}+1}\dots b_{2m}  }\,\gamma^{b_1}\dots
			\gamma^{b_{i_a}-1}
			\gamma^{b_{i_a}+1}
			\dots\gamma^{b_{2m}},\\
			&=\epsilon_{a\,a_1\dots \dots a_{2m-1}}\,\gamma^{a_1}\dots
			\gamma^{a_{2m-1}},
		\end{align}
		where we use that any euclidean Dirac matrix square to the identity,
		anticommutes with the other ones 
		and
		\begin{equation}
			(-1)^{i_l}\epsilon_{b_1\dots b_{2m}}=
			\epsilon_{b_{i_a}\,b_1\dots b_{{i_l}-1} b_{il+1}\dots b_{2m}},
		\end{equation}
		then redefine the indices as
		\begin{equation}
			a_i:=\left\{
			\begin{array}{ll}
				b_i&\text{ for } i=1, ..., i_l-1,\\ 
				b_{i+1}&\text{ for } i=i_l+1, ..., 2m . \end{array}\right.  
		\end{equation}
		The result follows multiplying on the left the expression \eqref{eq:gammpermut} of the
		grading by $\gamma^a$.
	\end{proof}

           To have the index $a$ in the same position on both
          sides of \eqref{EqTechni}, one writes
          $\epsilon_{a a_1 ... a_n}$ as
          \begin{equation}
            \epsilon^a_{a_1 ... a_{2m}}:=\delta^{ab}\epsilon_{b a_1 ... a_n} .  
          \end{equation}
     
\newpage
	The next proposition gives the geometrical interpretation of 
        the additional term in the  twisted covariant Dirac operator 
	 \eqref{TwistFluctDir}.
	\begin{proposition}
		\label{PropFluctTors}
		In  $KO$-dimensions $0$ and $4$, one has
		\begin{equation}
\label{eq:twisterm}
			i\gamma^\mu f_\mu\gamma =\frac{(-i)^{m+1}}{(2m)}  \, c(\star\omega_f)
		\end{equation}
	\end{proposition}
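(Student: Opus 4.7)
The plan is to express both sides in a local orthonormal frame and reduce to an algebraic manipulation on Dirac matrices via Lemma 3.2 (i.e. the statement \eqref{EqTechni} just proved). First I would write the $1$-form $\omega_f = f_\mu dx^\mu$ as $f_a e^a$, where the orthonormal components are $f_a = e_a^\mu f_\mu$, so that $f_\mu \gamma^\mu = f_a \gamma^a$. Then the left-hand side of \eqref{eq:twisterm} becomes $i f_a \gamma^a \gamma$, each factor $\gamma^a \gamma$ being handled by Lemma 3.2.

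Next I would substitute the formula from Lemma 3.2 to get
\begin{equation*}
i f_a \gamma^a \gamma = -\frac{i (-i)^m}{(2m)!}\, f_a\, \epsilon^a{}_{a_1 \dots a_{2m-1}}\, \gamma^{a_1}\dots\gamma^{a_{2m-1}},
\end{equation*}
where I have raised the first index of $\epsilon$ against $f_a$ using the euclidean metric as done just after the lemma. The complex phase simplifies as $-i(-i)^m = (-1)^{m+1} i^{m+1} = (-i)^{m+1}$, which is the phase that appears in the target formula.

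The next step is to recognise the tensorial part as the Clifford action of the Hodge dual. Recall that in an orthonormal coframe the Hodge star sends the $1$-form $\omega_f$ to the $(2m-1)$-form
\begin{equation*}
\star\omega_f = \frac{1}{(2m-1)!}\, f_a\, \epsilon^a{}_{a_1\dots a_{2m-1}}\, e^{a_1}\wedge\dots\wedge e^{a_{2m-1}},
\end{equation*}
and that Clifford multiplication acts on such wedge products as $c(e^{a_1}\wedge\dots\wedge e^{a_k}) = \gamma^{a_1}\dots\gamma^{a_k}$ (the $\gamma^{a_i}$'s being pairwise anticommuting on distinct indices, so no symmetrisation is needed). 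Hence $f_a\, \epsilon^a{}_{a_1\dots a_{2m-1}} \gamma^{a_1}\dots\gamma^{a_{2m-1}} = (2m-1)!\, c(\star\omega_f)$, and the combinatorial ratio reduces to $(2m-1)!/(2m)! = 1/(2m)$.

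Assembling the phase and the combinatorial factor yields exactly $\frac{(-i)^{m+1}}{2m} c(\star\omega_f)$, which is the claim. There is really no conceptual obstacle here — Lemma 3.2 does all the heavy lifting — only the bookkeeping of factorial and sign/phase factors to check; the only subtle point is to make sure the conventions for the Hodge star on a $1$-form and for the Clifford representation match the ones used elsewhere in the paper (so that no extra $\sqrt{|g|}$ or $(-1)^{m(m-1)/2}$ sneaks in), which is guaranteed by working in the orthonormal frame where $|g|=1$.
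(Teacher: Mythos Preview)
Your proof is correct and follows essentially the same route as the paper: pass to the orthonormal frame so that $f_\mu\gamma^\mu=f_a\gamma^a$, apply Lemma~\ref{lemma:gammagamma} to each $\gamma^a\gamma$, and then identify the resulting contraction $f_a\,\epsilon^{a}{}_{a_1\dots a_{2m-1}}\gamma^{a_1}\cdots\gamma^{a_{2m-1}}$ with $(2m-1)!\,c(\star\omega_f)$ via \eqref{eq:Hodge} and \eqref{eq:Cliffaction}. Your tracking of the phase $-i(-i)^m=(-i)^{m+1}$ and of the combinatorial factor $(2m-1)!/(2m)!=1/(2m)$ is in fact slightly more explicit than the paper's own write-up.
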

	where $c$ is the Clifford action~\eqref{eq:Cliffaction} and $\star\omega_f$ is the Hodge dual of the $1$-form
        \begin{equation}
\label{eq:defomega}
        \omega_f=f_\mu dx^\mu.
      \end{equation}
\begin{proof}
		We work in orthonormal coordinates, absorbing the vielbein in the
		component $f_\mu$ of the twisted fluctuation by defining
		$ f_a:= e^\mu_a f_\mu$, so that 
			$\gamma^\mu f_\mu=e^\mu_a f_\mu \gamma^a = f_a\gamma^a$.
		By lemma \ref{lemma:gammagamma} one has 	\begin{align}
			\gamma^\mu f_\mu \gamma=f_a\gamma^a \gamma
			&=\frac{(-i)^m}{2m}  \frac 1{(2m-1)!} f_a\,\,\delta^{ab}\epsilon_{b\,
				b_1\dots  b_{2m-1}}\gamma^{b_1}\dots
			\gamma^{b_{2m-1}},\\
			\label{eq:starfgamma}
			&= \frac{(-i)^m}{2m} (\star \omega_f)_{b_1  \dots
				b_{2m-1}}\gamma^{b_1}\dots \gamma^{b_{2m-1}}\\
			&=\frac{(-i)^m}{2m}  c(\star\omega_f)
		\end{align}
		where we use \eqref{eq:Hodge} for the components of the Hodge dual.
	\end{proof}
	
From now on, we denote the twisted covariant Dirac
operator~\eqref{TwistFluctDir} as
\begin{equation}
\label{eq:domegaf}
  \ds_{\omega_f}=\ds - if_\mu\gamma^\mu\gamma.
\end{equation}

	In
	dimension $4$, proposition \ref{PropFluctTors}  has an interpretation in term of torsion, 
	\begin{corollary}
		\label{cor:flucttors}
       For $\cal M$ of dimension $4$, the twisted covariant Dirac operator $\ds_{\omega_f}$ is the lift to spinors of an
		orthogonal and geodesic preserving connection, with torsion $3$-form $-\star\omega_f$.
	\end{corollary}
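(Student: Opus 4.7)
The plan is to specialise proposition \ref{PropFluctTors} to $n=2m=4$ and then match the resulting expression for $\ds_{\omega_f}$ against the formula \eqref{eq:lifttorsion} for the spinor lift of a connection in $\Omega_{MG}$, thereby reading off the torsion $3$-form.

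Setting $m=2$, one has $(-i)^{m+1}=i$, so proposition \ref{PropFluctTors} yields
\begin{equation*}
\ds_{\omega_f}=\ds-\frac{i}{4}\,c(\star\omega_f).
\end{equation*}
On the other hand, contracting the spinor lift \eqref{eq:lifttorsion} with $-i\gamma^\mu$ gives, for any $\nabla\in\Omega_{MG}$ with contorsion $K$,
\begin{equation*}
-i\gamma^\mu\,\nabla^S_\mu=\ds-\frac{i}{4}\,K_{\nu\lambda\mu}\,\gamma^\mu\gamma^\nu\gamma^\lambda.
\end{equation*}
Matching these two forms reduces the corollary to a combinatorial identity: for any totally antisymmetric $(0,3)$-tensor $\alpha$, the contraction $\alpha_{\nu\lambda\mu}\gamma^\mu\gamma^\nu\gamma^\lambda$ coincides with $\alpha_{a_1 a_2 a_3}\gamma^{a_1}\gamma^{a_2}\gamma^{a_3}$, since a $3$-cycle is an even permutation and relabelling dummy indices collapses the two sums. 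This is precisely the form of the Clifford action displayed in \eqref{eq:starfgamma}.

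Consequently, taking $K^\flat=-\star\omega_f$ makes $-i\gamma^\mu\nabla^S_\mu$ coincide with $\ds_{\omega_f}$. Since $\star\omega_f$ is a genuine $3$-form, $K^\flat$ is automatically totally antisymmetric, so proposition \ref{Totalskew} immediately produces that the corresponding $\nabla$ is both orthogonal and geodesic preserving; its torsion $3$-form is then $-\star\omega_f$ by construction. The only real obstacle will be sign bookkeeping: the phase $(-i)^{m+1}$ at $m=2$, the Hodge-dual convention used in \eqref{eq:starfgamma}, and the cyclic rearrangement of the three Dirac matrices each contribute a potential sign, and verifying that these conspire to give $-\star\omega_f$ rather than $+\star\omega_f$ is the one place where a careful arithmetic check is required.
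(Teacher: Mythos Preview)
Your approach is essentially the same as the paper's: specialise proposition \ref{PropFluctTors} to $m=2$, rewrite the extra term as $-i\gamma^\mu$ times a piece of the form $\frac14 K_{\nu\lambda\mu}\gamma^\nu\gamma^\lambda$, and match against \eqref{eq:lifttorsion}. The paper does exactly this, citing \eqref{eq:starfgamma} directly and using the same cyclic-index observation you make.

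One caution on the sign you flag: your displayed formula $\ds_{\omega_f}=\ds-\tfrac{i}{4}c(\star\omega_f)$ comes from reading off $(-i)^{m+1}=i$ in the \emph{statement} of proposition \ref{PropFluctTors}, but the paper's own proof of the corollary instead invokes \eqref{eq:starfgamma} and obtains $-i\gamma^\mu f_\mu\gamma = +\tfrac{i}{4}(\star\omega_f)_{\mu\nu\rho}\gamma^\mu\gamma^\nu\gamma^\rho$, i.e.\ the opposite sign. With your displayed formula, the matching against \eqref{eq:lifttorsion} would actually yield $K^\flat=+\star\omega_f$, not $-\star\omega_f$. The discrepancy is a sign typo in the statement of \eqref{eq:twisterm} (the proof line \eqref{eq:starfgamma} gives the coefficient $\frac{(-i)^m}{2m}$, so multiplying by $i$ gives $i(-i)^m$, not $(-i)^{m+1}$). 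To be safe, argue from \eqref{eq:starfgamma} as the paper does; then the conclusion $K^\flat=-\star\omega_f$ follows cleanly.
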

	\begin{proof} For $m=2$, \eqref{eq:starfgamma} yields
		\begin{equation}
			-i\gamma^\mu f_\mu \gamma= i\frac 14 (\star\omega_f)_{\mu\nu\rho}\gamma^\mu\gamma^\nu\gamma^\rho
			=-i\gamma^\mu\left(-\frac 14(\star\omega_f)_{\nu\rho\mu}\gamma^\nu\gamma^\rho\right).
		\end{equation}
		Therefore $\ds_{\omega_f}$ is the  Dirac
		operator associated with the connection
		\begin{equation}
			\nabla^\mu = \tilde\nabla_\mu^S + \left(-\frac
                          14(\star \omega_f)_{\nu\rho\mu}\gamma^\nu\gamma^\rho\right).
		\end{equation}
		From   \eqref{eq:lifttorsion}, this is the lift to spinors of a
		connection whose torsion $3$-form has components $-(\star\omega_f )_{\nu\rho\mu}$,
		that is $K^\flat = -\star\!\omega_f$.
	\end{proof}
	\begin{remark}
	The additional term \eqref{eq:twisterm}	does not altered twisted $1$-form: 
		from \eqref{eq:twistcomgpi} one has
		\begin{align*}
			[\ds_{\omega_f},a]_\rho=[\ds,a]_\rho-if_\mu
			(\gamma^\mu \gamma a - a\gamma^\mu\gamma)=
                  [\ds, a]_\rho-if_\mu
			(\rho(a) - a)\gamma^\mu \gamma)=[\ds, a]_\rho.
		\end{align*}
		Thus,  if one equips the space of pure states of
		$C^\infty(\Man)\otimes \mathbb C^2$ (made of two copies of $\Man$)
		with the spectral distance \cite{Connes:1992bc} (see
                also \cite{Martinetti:2016aa}) in which the commutator is substituted with
		a twisted-commutator, then the distance will be invariant under the
		adjunction of the additional term. In dimension $4$,
                this is coherent with the fact that the corresponding torsion is 
		geodesic preserving, hence should not alter the riemannian distance
		between points. 
	\end{remark}

	\subsection{Gauge transformation}
	\label{subsec:gaugetransf}

	A gauge transformation, in the framework described in \S
	\ref{subsec:twistedfluc}, is a change of connection in the module $\cal E$
	that implements Morita equivalence between  $\A$ and
	$\cal B$, induced by a unitary endomorphism of $\cal E$. 
	In case of self Morita equivalence - which is the one
	we are interested here - unitary endomorphisms are in $1$-to-$1$
	correspondance with the unitary elements of $\A$, which form the group\footnote{We restore the symbol of representation to stress that
		the identity holds in ${\cal B}({\cal H})$, and not necessarily in
		$\A$ if the algebra is not uital.}
	\begin{equation}
\label{eq;unitaries}
		{\cal U}( \A):=\left\{u\in \A, \; \pi(u^*)\pi(u)=\pi(u)\pi(u^*)=\mathbb \bbbone\right\}. 
	\end{equation}
	A change of connection induces
	the substitution in the covariant Dirac operator, 
	\eqref{eq:twistfluct} of $A_\rho$ with \cite[Prop- 4.3]{TwistGaugeLandiMarti2018}
	\begin{equation}
\label{eq:twistgaugepot}
		A_\rho^u := \rho(u) [D, u^*]_\rho + \rho(u) A_\rho u^* .
	\end{equation}
	This is a twisted version of the noncommutative version of the usual
	formula of transformation of a gauge potential. We thus call
        $A_\rho$ the \emph{twisted gauge potential}.
	
	Such gauge transformations are obtained by a
        suitably twisted action of~${\cal U}(\A)$.
 First, one defines the \emph{adjoint action}
          of unitaries as
          \begin{equation}
            \label{eq:adgrep}   \text{Ad}(u) \psi :=
            u\, J u J^{-1}\psi \qquad \forall \psi\in \HH, \;
            u\in {\cal U}(\A).
          \end{equation}
One then shows that a twisted gauge
transformation \eqref{eq:twistgaugepot} is equivalent to the (twisted)
conjugate action of $\text{Ad}(\cal U)$, namely  \cite[$\S A$]{devastato2018lorentz} and \cite[Prop. 4.5]{TwistGaugeLandiMarti2018}) 
	\begin{equation}
		\label{eq:conjugactt}
		D_{A_\rho^u} =  \text{Ad}(\rho(u)) \, D_{A_\rho} \,{\text{Ad}(u)}^{-1}.
	\end{equation}

      	All these formulas are the twisted version of their non-twisted
	counterparts, introduced in \cite{Connes:1996fu} (see also
	\cite{chamseddine2007gravity} and \cite{ConnMarc08b} for more
	details). In the spectral description of the Standard Model, they
	give back the gauge transformation of the bosons. 
	The same is true 
	for the twisted spectral triple of the Standard Model developed in
	\cite{filaci2021minimal}, as well as for the twisted spectral triple of
	electrodynamics \cite{martinetti2022lorentzian}. 

However, in both examples the
	additional $1$-form field $\gamma^\mu f_\mu \gamma$ is invariant
	under gauge transformations. This had already been established in full
	generality in
	\cite{TwistLandiMarti2016}, but it takes a new signification now that this field identifies with a torsion (at least in dimension $4$), so we
	restate it as the following proposition.

	\begin{proposition}
		\label{prop:gaugeinv}
The operator $\ds_{\omega_f}$ is gauge
                invariant.
	\end{proposition}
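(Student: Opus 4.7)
The plan is to apply the conjugation form \eqref{eq:conjugactt} of the gauge transformation to the additive decomposition $\ds_{\omega_f} = \ds + T$ with $T := -if_\mu \gamma^\mu \gamma$, and show that each summand transforms independently, with $T$ being pointwise fixed. The contribution from $\ds$ reproduces, via the equivalence between \eqref{eq:twistgaugepot} and \eqref{eq:conjugactt} recalled in \S\ref{subsec:gaugetransf}, a twisted-covariant Dirac operator based on $\ds$ alone with gauge-transformed twisted potential $A_\rho^u$; in particular no new $\gamma^\mu\gamma$ term is generated. So the proposition reduces to the claim $\Ad(\rho(u))\, T\, \Ad(u)^{-1} = T$.

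The key algebraic input is the twisted commutation
\begin{equation*}
\gamma^\mu\gamma \cdot a \;=\; \rho(a)\cdot\gamma^\mu\gamma \qquad \forall\, a\in\A,
\end{equation*}
i.e.\ $[\gamma^\mu\gamma, a]_\rho = 0$, established in the remark following corollary \ref{cor:flucttors}. Its opposite-algebra analogue
\begin{equation*}
\gamma^\mu\gamma \cdot JaJ^{-1} \;=\; J\rho(a)J^{-1}\cdot\gamma^\mu\gamma
\end{equation*}
follows from the twisted first-order condition \eqref{eq:twistfirstord} applied to local coordinate functions so as to isolate each $\gamma^\nu$ from $[\ds,a]_\rho = -i\gamma^\nu \partial_\nu a$, combined with the elementary observation that $\gamma$ commutes with $JaJ^{-1}$ (using $[\gamma,a]=0$ and $J\gamma = \epsilon''\gamma J$). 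Together with the order-zero condition \eqref{eq:orderzero}, which lets one freely commute $\A$-factors past $J\A J^{-1}$-factors, the verification is then mechanical: expanding $\Ad(\rho(u))\, T\,\Ad(u)^{-1} = \rho(u)\, J\rho(u)J^{-1}\, T\, u^*\, Ju^*J^{-1}$ and moving $u^*$ and $Ju^*J^{-1}$ through $T$ using the two twisted commutations produces compensating factors $\rho(u^*) = \rho(u)^{-1}$ and $J\rho(u^*)J^{-1}$ that cancel the outer $\rho(u)$ and $J\rho(u)J^{-1}$, leaving $T$ itself.

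The main obstacle is deriving the opposite-algebra twisted commutation cleanly: \eqref{eq:twistfirstord} is stated for the full $[\ds,a]_\rho$ rather than for an individual Dirac matrix, so one must exploit the freedom to choose $a$ with prescribed partial derivatives (or, alternatively, compute directly from the explicit form of $J$ on spinors for the minimal twist of $\Man$, where $JaJ^{-1}$ acts by multiplication by a conjugate scalar function and hence trivially commutes with the $\gamma^\mu$'s up to the twist). Once this is granted, the invariance of $T$, and hence of $\ds_{\omega_f}$, is immediate; by corollary \ref{cor:flucttors} this translates in dimension four into the gauge invariance of the torsion $3$-form $-\star\omega_f$.
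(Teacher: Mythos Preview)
Your approach works in principle but misses the one-line observation that makes the paper's proof trivial: in $KO$-dimension $0$ or $4$ one has $\J u\J^{-1}=u^*$ for any $u\in\calA$ (equation \eqref{eq:kodim40}), so for $u$ unitary
\[
\text{Ad}(u)=u\,\J u\J^{-1}=u\,u^*=\bbbone,
\]
and likewise $\text{Ad}(\rho(u))=\bbbone$ since $\rho$ is a $*$-automorphism. The gauge transformation \eqref{eq:conjugactt} is therefore the identity map on \emph{any} operator, and gauge invariance of $\ds_{\omega_f}$ is immediate. All of your careful tracking of how $T=-if_\mu\gamma^\mu\gamma$ twist-commutes past $u^*$ and $Ju^*J^{-1}$ is correct but vacuous: those factors are the identity.

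More seriously, your reduction of the problem to the invariance of $T$ alone has a gap. You assert that the contribution from $\ds$ ``in particular generates no new $\gamma^\mu\gamma$ term'', but you give no reason. The gauge-transformed potential $A_\rho^u=\rho(u)[\ds,u^*]_\rho$ is a genuine nonzero twisted one-form, and a priori $A_\rho^u+\epsilon'\J A_\rho^u\J^{-1}$ is exactly of the form $-ig_\mu\gamma^\mu\gamma$ by \eqref{TwistFluctDir}; showing $g_\mu=0$ is not automatic and is in fact equivalent to the gauge invariance of $\ds$ itself. One can verify it by direct computation (using $|f|=|g|=1$ for $u=(f,g)$ unitary one finds $A_\rho^u=-\J A_\rho^u\J^{-1}$), but this is precisely what the observation $\text{Ad}(u)=\bbbone$ gives for free. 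Without that step, your argument only establishes that the torsion term $T$ is unchanged, not that $\ds_{\omega_f}$ is.
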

	\begin{proof}
		For the minimal twist of a manifold one has $\hat u=u^\dagger$ for any  $u\in
		{\cal U}(\cal A)$ (cf \linebreak \cite[Lemma 5.1]{TwistLandiMarti2016}). Hence
                \begin{equation}
\label{eq:adunit}
		\text{Ad}(u)=\hat u u =u^\dagger u =\I.
              \end{equation}
		For
		an autormorphism $\rho$ such that $\rho^2=\mathbb \bbbone$ (as the flip), the regularity
		condition \eqref{eq:reg} guarantees that $\rho$ is a
		$*$-automorphism. Thus $\rho(u)$ is also unitary, and
		$\text{Ad}(\rho(u))$ is the identity.  Hence the right-hand-side of
		\eqref{eq:conjugactt} for $D_{A_\rho}=\ds_{\omega_f}$ is
		$\ds_{\omega_f
}$ itself. 
	\end{proof}

	The invariance of $\ds_{\omega_f}$ under a gauge transformation
	\eqref{eq:conjugactt} applies in particular to~$\ds$. This means that
	$\ds_{\omega_f}$ cannot be generated by  gauge transformations of
	$\ds$~itself, in contrast with the  fluctuations (twisted or
	not) of the Dirac operator of the Standard Model: its gauge
	transformations generate
	some  (even if not all) fluctuations.

	\subsection{Twisted unitaries}
	\label{subsec:rhoproduct}

As stressed above, the torsion term does not arise as a gauge
transformation of the Dirac operator. Said differently, the twisted conjugate
action \eqref{eq:conjugactt} of the unitary group does not generate
torsion. However there is a class of torsion - those with co-exact
$3$-form -  which is generated by the action of the group of \emph{twisted
  unitaries}. This is shown in \S \ref{subsec:flucgroupact}. In this
section we recall the definition of  twisted unitaries.

	The twisting
	automorphism  \eqref{eq:minimantwist} of the minimally twisted
        even dimensional manifold coincides
        with the inner automorphism of ${\cal B}(L^2(\Man, S))$ -
        still denoted $\rho$ - induced by the
        first Dirac matrix $\gamma^0$ (in the unitary
        representation \eqref{EDirac}), 
 namely
\begin{equation}
  \label{eq:rhoo}\rho({\cal O}) := \gamma^0 \, {\cal O}\, \gamma^0 \qquad \forall
  {\cal O}\in {\cal B}(L^2(\Man, S)).
\end{equation}
Indeed, for any $a=(f, g)$ in $\cinf\otimes \C^2$, one has from \eqref{FormOfa}
        \begin{align}
         \gamma^0  \,\pi(a)\,\gamma^0&=
  \begin{pmatrix}
  0&  \I_{2^{m-1}}  \\          
      \I_{2^{m-1}}  &0  \end{pmatrix}
 \begin{pmatrix}
    f\I_{2^{m-1}} & 0 \\ 0 &         
      g\I_{2^{m-1}}    \end{pmatrix}
\begin{pmatrix}
0 & \I_{2^{m-1}} \\ \I_{2^{m-1}} & 0       
    \end{pmatrix},\\
\label{eq:rhoprodlong}
&=  \begin{pmatrix}
    g\I_{2^{m-1}}& 0 \\ 0 &         
      f\I_{2^{m-1}}    \end{pmatrix}= \pi(\rho(a)).
        \end{align}
The unitary defining the automorphism \eqref{eq:rhoo}
induces an inner product 
	\begin{equation}
		\label{eq:rhoprod}
		 (\psi, \varphi):=\langle \psi, \gamma^0\varphi\rangle
		\quad\quad \forall \psi,\varphi\in L^2(\Man, S),
	\end{equation} 
with respect to whom the adjoint of any operator $\cal O$ in ${\cal B}(L^2(\Man,S))$ is 
	\begin{equation}
		\label{eq:oplus}
		{\cal O}^+ := \rho(\cal O)^\dag,
	\end{equation}
for 
\begin{align}
(\psi , {\cal O}\varphi)=   \langle \psi , \gamma^0{\cal O}\varphi\rangle=
 \langle {\cal O}^\dag \,\gamma^0 \psi ,\varphi\rangle
&= \langle \gamma^0\,\gamma^0  {\cal O}^\dag \gamma^0 \psi
  ,\varphi\rangle,\\
& = \langle \gamma^0{\cal O}^\dag \gamma^0 \psi ,\gamma^0\varphi\rangle=(\OO^+\psi ,\varphi).
\end{align}
 The product \eqref{eq:rhoprod}, called  \emph{twisted product}, is no longer definite positive: it coincides with the Krein product of spinors in
  lorentzian signature \cite{devastato2018lorentz}.

The adjoint \eqref{eq:oplus} is an
involution on ${\cal B}(L^2(\Man, S))$: \eqref{eq:rhoo} is a
$*$-automorphism (being inner) and 
$\rho^2$ is the identity, hence
\begin{align}
  \label{eq:invol}
&(\OO \OO')^+ = \rho(\OO\OO')^\dag =\rho(\OO')^\dag \rho(\OO)^\dag = {\OO'}^+
\OO^+,\\
 &(\OO^+)^+ = (\rho(\OO)^\dag)^+=\rho(\rho(\OO)^\dag)^\dag=\OO \qquad
\forall\OO, \OO'\in{\cal B}(L^2(\Man, S)).
\end{align}
Pulling \eqref{eq:oplus} back to the algebra yields a new involution
\begin{equation}
  a^+:= \rho(a)^* \quad \forall a\in\cinf\otimes\C^2.
\end{equation}
It is compatible with the representation  $\pi$ since \eqref{eq:rhoprodlong} guarantees that
\begin{equation}
  \pi(a^+)= \pi(\rho(a)^*) = \pi(\rho(a))^\dag = \rho(\pi(a))^\dag=\pi(a)^+.
\end{equation}
So one can safely remove the symbol of representation and use
without ambiguity $a^+$ to denote either the element of the algebra, or its
representation. As well $\rho(a)$ equivalently means the twisting
automorphism $\rho$ applied to $a\in\A$, or the
inner automorphism \eqref{eq:rhoo} applied to $\pi(a)$. 
Beware:
this does not mean  that the twisting automorphism is an inner
automorphism of $\cinf\otimes\C^2$ (as improperly suggested  in \cite{devastato2018lorentz}), for there is no
element of $\cinf\otimes\C^2$ whose representation is $\gamma^0$.
\newpage

	An  operator ${\cal O}\in{{\cal B}}(L^2(\Man, S))$ unitary with respect
          to the $\rho$-product \eqref{eq:rhoprod}, 
          \begin{equation}
\label{eq:OOounit} {\cal O}^+ {\cal O}={\cal O} {\cal O}^+ =\mathbb I,
\end{equation}
is said \emph{$\rho$-unitary} (or \!\emph{twisted-unitary}). Pulling
this property back to the algebra yields the following definition,
where $\bf 1$ denotes the unit of $\cinf\otimes\C^2$.
	\begin{definition}
		\label{def:rhounit}
	 The $\rho$-unitaries of $\cinf\otimes\C^2$ is the set
                \begin{align}
{\cal U}_\rho&:=\left\{u_\rho\in \cinf\otimes\C^2\quad\text{ such
that }\quad u_\rho^+u_\rho = u_\rho u_\rho^+ =\bf 1\right\}.
\end{align}
\end{definition}

\noindent 		From \eqref{FormOfa} and \eqref{eq:minimantwist},
  $a=(f,g)$ in $\cinf\otimes\C^2$ is
                $\rho$-unitary if, and only if, $\bar g=\frac 1{f}$. 
Thus ${\cal U}_\rho$ is isomorphic to the (multiplicative) group
$C^\infty_*(\Man)$ of
smooth functions on $\Man$ that never vanish. 
	\begin{remark}
\label{rem:ribbon}
 Unitarity and $\rho$-unitarity are not mutually exclusive:  for $f=\exp(i\theta)$, $g=\exp(-i\theta))$ with
		$\theta$ a real function, then $a=(f, g)$ is both
                unitary and $\rho$-unitary. Another example of
                unitary, $\rho$-unitary operators are the rotations,
                see \S\ref{sec:lorentz}	\end{remark}

 It is well known that if $u$ is a unitary element of the algebra $\A$
of a real spectral triple, then such
is $\text{Ad}(u)$ \cite[Lemma
5.1]{TwistGaugeLandiMarti2018}. The same is true for
the $\rho$-unitaries of a minimally twisted even dimensional manifold.
To see it, one first notices that $\gamma^0$ anticommutes with $\cal
J$ by \eqref{eq:kodim5},  so the inner
automorphism \eqref{eq:rhoo} is compatible with the real structure, in that 
\begin{equation}
  \label{eq:comp}
\rho(\J \OO\J^{-1})=\J \rho(\OO)\J^{-1}\qquad \forall \OO\in{\cal B}(\HH).
\end{equation}
In particular, this means that
\begin{equation}
\label{eq:rhocomp}
  \rho(\text{Ad}(a))=\text{Ad}(\rho(a)) \qquad \forall a\in\cinf\otimes\C^2.
\end{equation}

   \begin{proposition}
 \label{prop:antimorf}
For any $u_\rho\in {\cal U}_\rho$, one has that $\text{Ad}(u_\rho)
=u_\rho \J u_\rho \J^{-1}$ is $\rho$-unitary.  \end{proposition}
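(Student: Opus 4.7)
The plan is to mimic, in the $\rho$-twisted setting, the classical proof that $\text{Ad}(u)$ is unitary whenever $u\in{\cal U}({\cal A})$. The two ingredients that made that proof work -- the order zero condition and the compatibility of $\J$ with the involution -- are both available here: the former in its untwisted form \eqref{eq:orderzero}, the latter in the twisted form \eqref{eq:comp} just established, together with the antimultiplicativity \eqref{eq:invol} of the $+$-involution.

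First I would compute $(\J u_\rho \J^{-1})^+$. Using \eqref{eq:comp} one has
\begin{equation*}
(\J u_\rho \J^{-1})^+ = \rho(\J u_\rho \J^{-1})^\dag = (\J \rho(u_\rho) \J^{-1})^\dag.
\end{equation*}
Since $\J$ is antiunitary the standard identity $(\J \OO \J^{-1})^\dag = \J \OO^\dag \J^{-1}$ applies, giving $(\J u_\rho \J^{-1})^+ = \J \rho(u_\rho)^\dag \J^{-1} = \J u_\rho^+ \J^{-1}$. Combining with the antimultiplicativity of $+$ yields
\begin{equation*}
\text{Ad}(u_\rho)^+ = (u_\rho\, \J u_\rho \J^{-1})^+ = (\J u_\rho \J^{-1})^+\, u_\rho^+ = \J u_\rho^+ \J^{-1}\, u_\rho^+.
\end{equation*}

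Next I would expand the two products. Using the order zero condition \eqref{eq:orderzero} (which, being untwisted, applies to the elements $u_\rho^+$ and $\J u_\rho \J^{-1}$ of $\A$ and $\J\A\J^{-1}$ respectively) one may slide $u_\rho^+$ past $\J u_\rho \J^{-1}$, and similarly for $u_\rho$ past $\J u_\rho^+ \J^{-1}$. Together with $u_\rho^+ u_\rho = u_\rho u_\rho^+ = \boldone$ from Def. \ref{def:rhounit}, this gives
\begin{equation*}
\text{Ad}(u_\rho)^+ \text{Ad}(u_\rho) = \J u_\rho^+ \J^{-1}\, u_\rho^+\, u_\rho\, \J u_\rho \J^{-1} = \J u_\rho^+ \J^{-1} \J u_\rho \J^{-1} = \J u_\rho^+ u_\rho \J^{-1} = \I,
\end{equation*}
and an analogous calculation for $\text{Ad}(u_\rho)\,\text{Ad}(u_\rho)^+$, establishing $\rho$-unitarity.

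I do not anticipate a serious obstacle: the only delicate point is making sure that $\J$ genuinely commutes past $\rho$ in the computation of $(\J u_\rho \J^{-1})^+$ -- and this is exactly the content of \eqref{eq:comp}, which in turn rests on the fact that $\gamma^0$ anticommutes with $\J$ so that the sign cancels when $\J$ is pushed through $\gamma^0 \cdot \gamma^0$. Apart from that, the argument is a direct transcription, under the $+$-involution, of the usual proof that $u\, \J u \J^{-1}$ is unitary when $u$ is.
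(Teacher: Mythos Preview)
Your proposal is correct and follows essentially the same route as the paper: both use the compatibility \eqref{eq:comp} of $\J$ with $\rho$, the identity $(\J\OO\J^{-1})^\dag=\J\OO^\dag\J^{-1}$ for antiunitary $\J$, and the order zero condition to reduce $\text{Ad}(u_\rho)^+\text{Ad}(u_\rho)$ to $\J u_\rho^+ u_\rho\J^{-1}=\I$. The only cosmetic difference is that the paper packages the first two steps as the general identity $\text{Ad}(a)^+=\text{Ad}(a^+)$ for all $a\in\A$, whereas you compute $(\J u_\rho\J^{-1})^+$ and $u_\rho^+$ separately and combine them via antimultiplicativity; the content is identical.
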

\begin{proof}
 For any $a$ in $\cinf\otimes\C^2$, using the order zero condition
 \eqref{eq:orderzero} one has
  \begin{equation}
    \text{Ad}(a) ^\dag = (a\J a\J^{-1})^\dag = \J a^\dag \J^{-1}a^\dag
    =a^\dag \J a^\dag \J^{-1}=\text{Ad}(a^*).
  \end{equation}
Together with \eqref{eq:rhocomp}, this yields
\begin{equation}
  \text{Ad}(a)^+=\rho(\text{Ad}(a))^\dag =
  \text{Ad}(\rho(a))^\dag=\text{Ad}(\rho(a)^*)
=\text{Ad}(a^+).
\end{equation}
Hence, again by the order zero condition, for any $u_\rho\in{\cal
  U}_\rho$ one has
\begin{align*}
  \text{Ad}(u_\rho)^+\text{Ad}(u_\rho)&=\text{Ad}(u_\rho^+)\text{Ad}(u_\rho),\\
&=
  u_\rho^+ \J u_\rho^+\J^{-1} u_\rho \J u_\rho \J^{-1}
=\J u_\rho^+\J^{-1}  \J u_\rho \J^{-1}=\I
\end{align*}
and similarly for  $\text{Ad}(u_\rho)\, \text{Ad}(u_\rho)^+$. \end{proof}
 
\begin{remark} 
\label{rem:unitaries}
The first Dirac matrix is not the only unitary matrix $R$ that implements
the automorphism $\rho$ on $\pi(\A)$, that is such that $R\pi(a)R^\dag
= \pi(\rho(a))$. Any such $R$ defines a twisted product
\begin{equation}
\label{eq:twistpRod}  (\psi,\varphi)_R:= \langle \psi, R\varphi\rangle.
\end{equation}
All these products yield the same involution $^+$ on $\A$
\cite{Martinetti:2024aa} (but not on ${\cal B}(\HH)$), and proposition \ref{prop:antimorf} does not depend on this
choice as soon as 
the compatibility with the real structure \eqref{eq:comp} holds. The freedom in
  the choice of $R$ is relevant for the fermionic
action, as investigated below in 
   proposition \ref{PropGenTwist}.{\footnote{Thank to F. Besnard for
       noticing that.}}
 \end{remark}

		\subsection{Torsion by group action}
\label{subsec:flucgroupact}

Given a real twisted spectral triple $(\A, \HH, D)$ with automorphism
$\rho$  compatible with the real structure in the sense of
\eqref{eq:rhocomp} and such that $\rho^2=\I$
(conditions all satisfied by the minimal twist of a manifold), one
has 
\begin{align}
  \text{Ad}(\rho(u))^+&=  \rho(\text{Ad}(\rho(u)))^\dag=\text{Ad}(u)^\dag=\text{Ad}(u)^{-1}\quad   \forall u\in{\cal U}({\cal A}),
       \end{align}
where the last equality follows from $\text{Ad}(u)$ unitary. 
  Therefore   \eqref{eq:conjugactt} becomes
   \begin{equation}
  \label{eq:newlook}   D_{A_\rho^u}= \text{Ad}(v) D_{A_\rho}
  \text{Ad}(v)^+ \quad\text{ for} \quad v=\rho(u).
   \end{equation}
\noindent A twisted gauge
transformation is 
thus the conjugate action - with respect to the~twis\-ted involution $+$ - of
the operator  $\text{Ad}(v)$ for $v$ unitary ($v$ is unitary since
$\rho$ is a $*$-automorphism, as a consequence of the regularity
condition together with the hypothesis $\rho^2=\I$). 

In a symmetric way, one may be interested in the conjugate action -
with respect to the initial involution $*$ - of $\text{Ad}(u_\rho)$
for $ u_\rho$  a $\rho$-unitary, namely
\begin{equation}
\label{eq:twistact}
  D\mapsto \Ad(u_\rho) \, D \Ad(u_\rho)^\dagger \quad \text{ for }
  u_\rho\in\cal U_\rho(\A).
\end{equation}
For the minimal
twist of a manifold, as shown in proposition
\ref{prop:gentorsion} below, this action generates the torsion term.

Let us first investigate the general form of \eqref{eq:twistact}.
\begin{lemma} 
\label{prop:urhogen} For any $u_\rho\in\ura$ one has 
          \begin{equation}
\label{eq:action3}
            Ad(u_\rho)\,\Dir\,
			Ad(u_\rho)^\dagger=   D + {A_\rho} +\epsilon' \J
                        A_\rho \J^{-1}\quad \text{
                          with } A_\rho=u_\rho[\Dir, u_\rho^*]_\rho.
          \end{equation}
	\end{lemma}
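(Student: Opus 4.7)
First I would establish the building-block identity $u_\rho D u_\rho^* = D + A_\rho$. Unpacking the twisted commutator gives $Du_\rho^* = [D,u_\rho^*]_\rho + \rho(u_\rho^*)D$; left-multiplication by $u_\rho$ yields $u_\rho D u_\rho^* = u_\rho[D, u_\rho^*]_\rho + u_\rho \rho(u_\rho^*)D$. Since the regularity condition \eqref{eq:reg} together with $\rho^2 = \I$ makes $\rho$ a $*$-automorphism, one has $\rho(u_\rho^*) = \rho(u_\rho)^* = u_\rho^+$, so $u_\rho\rho(u_\rho^*) = u_\rho u_\rho^+ = \boldone$ by Definition~\ref{def:rhounit}. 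Hence $u_\rho D u_\rho^* = A_\rho + D$, which is the twisted analogue of the identity $u D u^* = D + u[D, u^*]$ from the untwisted case.

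Next, I would unfold $\Ad(u_\rho) = u_\rho \cdot J u_\rho J^{-1}$ and compute its Hermitian adjoint. Because the representation is involutive ($\pi(u_\rho)^\dag = \pi(u_\rho^*)$) and $(JAJ^{-1})^\dag = JA^\dag J^{-1}$ for antiunitary $J$, one gets $\Ad(u_\rho)^\dag = Ju_\rho^* J^{-1} \cdot u_\rho^*$. The order zero condition \eqref{eq:orderzero} lets me freely interchange algebra factors with $J$-conjugated algebra factors; sliding the leading $u_\rho$ past $Ju_\rho J^{-1}$ on the right of $D$ and similarly on the left, one obtains
\[
\Ad(u_\rho)\, D\, \Ad(u_\rho)^\dag \;=\; Ju_\rho J^{-1}\,\bigl(u_\rho D u_\rho^*\bigr)\, J u_\rho^* J^{-1} \;=\; Ju_\rho J^{-1}\,(D + A_\rho)\, Ju_\rho^* J^{-1}.
\]

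This splits into a free piece and a cross piece. For the free piece $Ju_\rho J^{-1}\, D\, Ju_\rho^* J^{-1}$, the relation $JD = \epsilon' DJ$ from \eqref{AxTripl} lets me absorb the interior $J^{-1}J$ as $J^{-1}DJ = \epsilon' D$, giving $\epsilon' J(u_\rho D u_\rho^*) J^{-1} = \epsilon' J(D + A_\rho) J^{-1} = D + \epsilon' JA_\rho J^{-1}$ after one more use of $JDJ^{-1} = \epsilon' D$. For the cross piece $Ju_\rho J^{-1}\, A_\rho\, Ju_\rho^* J^{-1}$, expanding $A_\rho = u_\rho[D,u_\rho^*]_\rho$ and commuting the leading $u_\rho$ past $Ju_\rho J^{-1}$ by order zero, I would invoke the twisted first-order condition \eqref{eq:twistfirstord} with $a = u_\rho^*$, $b = u_\rho$, together with the extension \eqref{eq:defrhozero} of $\rho$ to $J\calA J^{-1}$; this rewrites $[D,u_\rho^*]_\rho\cdot Ju_\rho^* J^{-1} = J\rho(u_\rho^*)J^{-1}\cdot [D,u_\rho^*]_\rho$, and the two $J$-conjugated factors then coalesce into $J(u_\rho\rho(u_\rho^*))J^{-1} = J\boldone J^{-1} = \boldone$ by $\rho$-unitarity, so this piece collapses to $A_\rho$. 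Summing the two contributions produces exactly $D + A_\rho + \epsilon' J A_\rho J^{-1}$.

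The main obstacle is the cross piece. Since $u_\rho$ is not unitary, the untwisted cancellation $u u^* = \boldone$ does not apply directly, and one instead has to thread together, in the right order, the order zero condition, the \emph{twisted} first-order condition with its $\rho^\circ$-extension, and finally the $\rho$-unitarity relation $u_\rho\rho(u_\rho^*) = \boldone$. It is precisely this last identity -- rather than ordinary unitarity -- that closes the computation, explaining \emph{a posteriori} why $\ura$ is the natural group to generate the torsion term $A_\rho + \epsilon'JA_\rho J^{-1}$ via the conjugation \eqref{eq:twistact}.
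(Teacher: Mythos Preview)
Your proof is correct and follows essentially the same strategy as the paper: establish $u_\rho D u_\rho^* = D + A_\rho$ via $\rho$-unitarity, conjugate by $Ju_\rho J^{-1}$, and then use the order zero condition together with the twisted first-order condition to collapse the remaining terms. Your handling of the free piece---rewriting $Ju_\rho J^{-1}\, D\, Ju_\rho^* J^{-1}$ as $\epsilon' J(u_\rho D u_\rho^*)J^{-1}$ via $J^{-1}DJ = \epsilon' D$ and reusing the building-block identity---is a minor economy over the paper's parallel expansion of $\hat u_\rho D \hat u_\rho^\dag$ through its own twisted commutator, but the ingredients and logic are the same.
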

        \begin{proof} 
Following \cite{chamseddine2013inner}, let us denote
\begin{equation}
\hat u_\rho = \J u_\rho \J^{-1},
\end{equation}
so that $\text{Ad}(u_\rho)=\hat u_\rho
u_\rho$. 
Therefore
			 \begin{align}
	Ad(u_\rho)\,\Dir\, Ad(u_\rho)^\dag&= \hat u_\rho(u_\rho\Dir u_\rho^\dag)\hat
                                  u_\rho^\dag=\hat u_\rho(u_\rho u_\rho^+\Dir+u_\rho[\Dir,
                                  u_\rho^*]_\rho)\hat u_\rho^\dag,\\
			&=\hat u_\rho \Dir \hat u_\rho^\dag+\hat
                          u_\rho u_\rho \hat u_\rho^+ [\Dir,
                          u_\rho^*]_\rho\nonumber,\\
\label{eq:interm1}
           	&=\hat u_\rho \hat u_\rho^+ D+\hat u_\rho[\Dir, \hat u_\rho^\dag]_\rho+
                          u_\rho [\Dir,
                          u_\rho^*]_\rho
						\end{align}
where in the first line we use  $\rho(u_\rho^*)=\rho(u_\rho)^\dag = u_\rho^+$, in the second line we apply the twisted
first-order condition \eqref{eq:twistfirstord},  written as
\begin{equation}
  \label{eq:twis1use}
[D, u_\rho^*]_\rho \, \hat u_\rho^* = \rho(\hat u_\rho^*) D
u_\rho^*=\hat u_\rho^+ D u_\rho^* ,
\end{equation}
and in the third line we use
  $\rho(\hat u_\rho^\dag)=\rho(\hat u_\rho)^\dag=\hat u_\rho^+$. 
The result follows noticing that
\begin{align}
  \hat u_\rho^\dag = \J u_\rho^*\J^{-1},\quad \rho(\hat u_\rho^\dag) = \J \rho(u_\rho^*)\J^{-1}
\end{align}
so that
\begin{equation}
 \label{eq:JDUJ} 
[D, \hat u_\rho^\dag]_\rho = D\, \J u_\rho^*J^{-1} -  \J
  \rho(u_\rho^*)\J^{-1} \, D = \epsilon' \J [D, u_\rho^*] \J^{-1} .
\end{equation}
                                      


        \end{proof}

Applied to the Dirac operator $\ds$ of the minimal twist of a manifold, the action
\eqref{eq:twistact} generates a twisted fluctuation of the metric.
                \begin{proposition}
\label{prop:gentorsion}                  
             In $KO$-dimension $0$ and $4$,  the conjugate action on
             $\ds$ of the twisted
                  unitary
               $u_h:= (h, \frac 1{\bar h})$
with $h\in C^\infty_*(\Man)$
generates the additional term
                \eqref{eq:twisterm} with 
                \begin{equation}
                \omega_f = d(\ln |h|^2).
              \end{equation}
                \end{proposition}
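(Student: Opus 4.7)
The plan is to invoke Lemma \ref{prop:urhogen} with $u_\rho = u_h$, which reduces the proposition to computing the twisted one-form $A_\rho = u_h[\ds, u_h^*]_\rho$ together with its real-structure conjugate $\calJ A_\rho \calJ^{-1}$, since by that lemma
\begin{equation*}
\text{Ad}(u_h)\,\ds\,\text{Ad}(u_h)^\dag - \ds \;=\; A_\rho + \epsilon'\,\calJ A_\rho \calJ^{-1}.
\end{equation*}
The $\rho$-unitarity $u_h \rho(u_h^*) = u_h u_h^+ = \mathbf{1}$ collapses the twisted commutator to the transparent identity $A_\rho = u_h \ds u_h^* - \ds$, although one may equivalently work directly from formula \eqref{eq:rhocommutator}.

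Next I would compute $A_\rho$ in the chiral representation \eqref{FormOfa}. With $u_h^* = (\bar h, 1/h)$, formula \eqref{eq:rhocommutator} gives $[\ds, u_h^*]_\rho$ as $-i\gamma^\mu$ times the diagonal matrix with entries $\partial_\mu \bar h$ and $-\partial_\mu h / h^2$. Multiplying by $u_h$ on the left requires passing it through $\gamma^\mu$; since $\gamma^\mu$ anticommutes with the grading it is off-diagonal in the chiral basis, and one easily checks $u_h \gamma^\mu = \gamma^\mu \rho(u_h)$. This yields
\begin{equation*}
A_\rho = -i\gamma^\mu \begin{pmatrix} \partial_\mu \ln \bar h & 0 \\ 0 & -\partial_\mu \ln h \end{pmatrix}.
\end{equation*}
Writing $h = |h|e^{i\theta}$ and decomposing the diagonal matrix on the basis $\{\mathbb{I}, \gamma\}$ (using that $\partial_\mu(\ln h + \ln \bar h) = \partial_\mu \ln|h|^2$ is real while $\partial_\mu(\ln \bar h - \ln h) = -2i\partial_\mu \theta$ is imaginary) splits $A_\rho$ into the two pieces $-(\partial_\mu \theta)\gamma^\mu$ and $-\tfrac{i}{2}(\partial_\mu \ln|h|^2)\gamma^\mu\gamma$.

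The last step is to add $\calJ A_\rho \calJ^{-1}$, with $\epsilon' = 1$ in $KO$-dimension $0$ and $4$. From $\calJ \ds = \ds \calJ$ together with the antilinearity of $\calJ$ one deduces $\calJ\gamma^\mu \calJ^{-1} = -\gamma^\mu$, whereas $\epsilon''=1$ gives $\calJ\gamma\calJ^{-1} = \gamma$. The piece $-(\partial_\mu \theta)\gamma^\mu$ has a real coefficient, so only $\gamma^\mu \mapsto -\gamma^\mu$ contributes and it cancels against its original; the piece $-\tfrac{i}{2}(\partial_\mu\ln|h|^2)\gamma^\mu\gamma$ has both its $i$ and its $\gamma^\mu\gamma$ flipped, the two signs compensate, and the term doubles. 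The outcome is
\begin{equation*}
A_\rho + \calJ A_\rho \calJ^{-1} = -i(\partial_\mu \ln|h|^2)\gamma^\mu\gamma,
\end{equation*}
which is precisely the additional term \eqref{eq:twisterm} with $f_\mu = \partial_\mu \ln|h|^2$, i.e.\ $\omega_f = d(\ln|h|^2)$. The main technical point is the careful bookkeeping of signs coming from the antilinearity of $\calJ$ and its interaction with $\gamma^\mu$ and $\gamma$; once this is organised, the cancellation of the phase contribution and the doubling of the modulus contribution are automatic, and the restriction to $KO$-dimension $0$ and $4$ is precisely what ensures $\epsilon' = \epsilon'' = 1$ and hence that the signs work out as claimed.
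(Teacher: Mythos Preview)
Your proof is correct and follows essentially the same strategy as the paper: invoke Lemma~\ref{prop:urhogen}, compute $A_\rho = u_h[\ds,u_h^*]_\rho$ explicitly via \eqref{eq:rhocommutator} and \eqref{eq:kodim4}, then add the $\calJ$-conjugate. The only organizational difference is in the last step: you first split $A_\rho$ into its $\gamma^\mu$ and $\gamma^\mu\gamma$ pieces and then conjugate each by $\calJ$ using $\calJ\gamma^\mu\calJ^{-1}=-\gamma^\mu$ and $\calJ\gamma\calJ^{-1}=\gamma$, whereas the paper rewrites $\calJ A_\rho\calJ^{-1}$ as $u_\rho^*[\ds,\hat u_\rho^\dag]_\rho$ via \eqref{eq:kodim40} and \eqref{eq:JDUJ} and computes that matrix directly; both routes exhibit the same cancellation of the phase contribution and doubling of the modulus contribution. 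One small remark: your deduction of $\calJ\gamma^\mu\calJ^{-1}=-\gamma^\mu$ ``from $\calJ\ds=\ds\calJ$ and antilinearity'' is not quite self-contained (it tacitly assumes you know how $\calJ$ interacts with $\widetilde\nabla_\mu^S$), but the relation itself is \eqref{eq:kodim5} in the paper, so simply citing it makes the argument airtight.
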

                \begin{proof}
                  The expression 
                  \eqref{eq:rhocommutator} of the twisted commutator, together with \eqref{eq:kodim4} yield
                  \begin{align}
\label{eq:udu}
                    u_h [D, u_h^*]_\rho&=
                    -i\gamma^\mu\begin{pmatrix}
                      \frac 1{\bar h} & 0\\ 0& h
                    \end{pmatrix}\begin{pmatrix}
                                               \partial_\mu \bar h        &0  \\ 0&  \partial_\mu\frac 1h
                                                                                            \end{pmatrix}
=   -i\gamma^\mu\begin{pmatrix}
                                                       \frac 1{\bar
                                                         h}\partial_\mu
                                                       \bar h&0\\
0 & h\partial_\mu  \frac 1h 
                                                     \end{pmatrix}
                  \end{align}
(we omit $\I_{2^{m-1}}$ in the matrix).
Then by \eqref{eq:kodim40} and \eqref{eq:JDUJ} one gets
\begin{align*}
\J u_\rho[D,u_\rho^* ]_\rho \J^{-1}= \J u_\rho \J^{-1}\, \J [D,u_\rho^*
  ]_\rho \J^{-1} = u_\rho^* [D, \hat u_\rho^\dag]_\rho=-i\gamma^\mu\begin{pmatrix}
                                                       \frac 1{
                                                         h}\partial_\mu h&0\\
0 & \bar h\partial_\mu  \frac 1{\bar h} .
                                                     \end{pmatrix}
\end{align*}
Summing up with \eqref{eq:udu}, one obtains from \eqref{eq:action3}
\begin{equation}
\label{eq:addsad}  \text{Ad}(u_h)\, \ds\, \text{Ad}(u_h)^\dagger = \ds -i\gamma^\mu\begin{pmatrix}
                                                      \partial_\mu \ln
                                                      |h|^2&0\\
0 & -\partial_\mu \ln |h|^2\end{pmatrix}=\ds -i\gamma^\mu \partial_\mu \left(\ln
                                                      |h|^2\right)\gamma
\end{equation}
where we use  
\begin{equation}
  \frac 1h \partial_\mu h +   \frac 1{\bar h} \partial_\mu \bar h =
  \frac{\bar h \partial_\mu h + h\partial \bar h}{\bar h h}=
  \frac{\partial(\bar h h)}{|h|^2}= \frac{2|h|\partial_\mu
    |h|}{|h|^2}=2 \partial_\mu(\ln |h|)= \partial_\mu (\ln |h|^2).
\end{equation}
The second term on the diagonal follows from the Leibniz rule
\begin{equation}
  h\partial_\mu \frac 1h = \partial_\mu (\frac hh)- \frac
  1h \partial_\mu h = - \frac
  1h \partial_\mu h
\end{equation}
and similarly for the complex conjugate.                \end{proof}

		\begin{corollary}
\label{prop:f+fprime}
	In $KO$-dimension $0, 4$, the conjugate action on the twisted covariant Dirac operator
$\ds_{\omega_f}$  \eqref{eq:domegaf} of the twisted
                  unitary
    $u_{h'}:= (h', \frac 1{\bar{h'}})$
with $h'\in C^\infty_*(\Man)$ amounts to mapping $\omega_f$ to 
\begin{equation}
  \omega_f + d(\ln|h'|^2).
\end{equation}
	\end{corollary}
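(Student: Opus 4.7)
The idea is to reduce the computation to the one already carried out in Proposition \ref{prop:gentorsion}, by applying Lemma \ref{prop:urhogen} to $\ds_{\omega_f}$ in place of $\ds$.

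First I would check that Lemma \ref{prop:urhogen} applies to the operator $\ds_{\omega_f}$. This requires verifying the twisted first-order condition for $\ds_{\omega_f}$, but this is automatic: the Remark following Corollary \ref{cor:flucttors} shows that $[\ds_{\omega_f},a]_\rho=[\ds,a]_\rho$ for every $a\in C^\infty(\Man)\otimes\C^2$, so any twisted commutator involving $\ds_{\omega_f}$ coincides with the analogous one involving $\ds$. Consequently, Lemma \ref{prop:urhogen} yields
\begin{equation*}
\Ad(u_{h'})\,\ds_{\omega_f}\,\Ad(u_{h'})^\dagger = \ds_{\omega_f} + A_\rho + \epsilon'\,\J A_\rho \J^{-1},
\end{equation*}
with $A_\rho=u_{h'}[\ds_{\omega_f},u_{h'}^*]_\rho=u_{h'}[\ds,u_{h'}^*]_\rho$.

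Next, I would invoke directly the calculation performed in the proof of Proposition \ref{prop:gentorsion} (equations \eqref{eq:udu}--\eqref{eq:addsad}) to identify
\begin{equation*}
A_\rho + \epsilon'\,\J A_\rho \J^{-1} = -i\gamma^\mu\,\partial_\mu\bigl(\ln|h'|^2\bigr)\,\gamma.
\end{equation*}
Combining the two displayed equations gives
\begin{equation*}
\Ad(u_{h'})\,\ds_{\omega_f}\,\Ad(u_{h'})^\dagger = \ds - i\gamma^\mu\bigl(f_\mu + \partial_\mu(\ln|h'|^2)\bigr)\,\gamma = \ds_{\omega_f + d(\ln|h'|^2)},
\end{equation*}
which is the claimed result.

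I do not anticipate any serious obstacle; the statement is essentially a functoriality observation. The only subtle point is checking that the twisted commutator is unaffected by the passage from $\ds$ to $\ds_{\omega_f}$, so that $A_\rho$ does not depend on whether one starts from $\ds$ or from $\ds_{\omega_f}$, and hence that the action of $\Ad(u_{h'})$ on $\ds_{\omega_f}$ simply adds the exact form $d(\ln|h'|^2)$ to the $1$-form $\omega_f$.
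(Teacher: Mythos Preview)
Your argument is correct but proceeds along a different line from the paper. The paper splits $\ds_{\omega_f}=\ds-i\gamma^\mu f_\mu\gamma$ and shows by a direct matrix computation (using $\gamma^\mu a=\rho(a)\gamma^\mu$ and the commutation of $\gamma$ with $\J$, $u_h^\dagger$, $\hat u_h^\dagger$ in $KO$-dimension $0,4$) that the torsion term $-i\gamma^\mu f_\mu\gamma$ is itself invariant under $\Ad(u_{h'})(\,\cdot\,)\Ad(u_{h'})^\dagger$; then it quotes \eqref{eq:addsad} for the $\ds$ part. You instead apply Lemma~\ref{prop:urhogen} to $\ds_{\omega_f}$ as a whole, using the Remark after Corollary~\ref{cor:flucttors} to identify $A_\rho$ with the one computed for $\ds$.

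Your route is more structural and avoids the explicit invariance check, but it implicitly relies on two facts beyond the equality of twisted commutators: that $\ds_{\omega_f}$ satisfies the twisted first-order condition (immediate from $[\ds_{\omega_f},a]_\rho=[\ds,a]_\rho$, as you note) and that $\J\ds_{\omega_f}=\epsilon'\ds_{\omega_f}\J$, which is needed for \eqref{eq:JDUJ} in the proof of Lemma~\ref{prop:urhogen}. The latter holds because $\ds_{\omega_f}$ is a twisted-covariant operator of the form \eqref{eq:twistfluct}, but you should state it explicitly. The paper's approach, by contrast, makes the invariance of the torsion term visible as a standalone fact, which is reused implicitly later (e.g.\ in the discussion after Lemma~\ref{prop:nonent}).
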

		\begin{proof}
The additional term $-i\gamma^\mu f_\mu$ is invariant under the
considered group action: using the
notations of lemma \ref{prop:urhogen} and remembering that any capped quantity commutes with
non capped ones by the order zero condition, one has 
			\begin{align}
				\Ad(u_h)(-i\gamma^\mu f_\mu\gamma)
                          \,\Ad(u_h)^\dagger&= -if_\mu\left(\hat
                          u_h  u_h\,\gamma^\mu \, u_h^\dagger\hat
                          u_h^\dagger\right) \gamma,\\
&= -if_\mu\left( \hat u_h\, \gamma^\mu \,\hat  u_h^\dagger\right) \gamma=-if_\mu\gamma^\mu\gamma,
			\end{align}  
		where we first use that in $KO$-dimension $0, 4$ the
                grading $\gamma$ not only commutes with $u_h^\dag$ by
                definition, but also with $\hat u_h^\dag$, since
                $\gamma$  commutes with
                $\J$; then we apply  \eqref{eq:kodim4} to $u_h^\dag$,
                then to $\hat
                u_h^\dag$. Using  \eqref{eq:addsad} one 
                finally obtains
			\begin{align}
				\Ad(u_{h'})\, \ds_{\omega_f}\, \Ad(u_{h'})^\dagger&=
                          \Ad(u_{h'})\, \ds\,\Ad(u_{h'})^\dagger
                          -if_\mu\gamma^\mu\gamma,\\
\nonumber
&=\ds - i\gamma^\mu
                          \left(f_\mu + \partial_\mu \ln
                                                      |h'|^2\right)\gamma 
			\end{align}  
\end{proof}

In the definition of $\ds_{\omega_f}$ the $1$-form $\omega_f$ 
\eqref{eq:defomega}  is
arbitrary, it does not need to be exact. The conjugate action of twisted
unitaries adds to it an exact  $1$-form
$d(\ln|h|^2)$.
\newpage  Therefore not every torsion may be obtained from this
action.

\begin{proposition}
The 
conjugate action \eqref{eq:action3} of the group of twisted unitaries generates all the torsions whose
associated $3$-form is co-exact,
\begin{equation}
K^\flat = \delta(f\nu_g)
\end{equation}
where $\delta=-\star d\star$ is the co-derivative and $\nu_g$ is the
volume form of $\Man$.
\end{proposition}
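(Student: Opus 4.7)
The plan is to combine Corollaries \ref{prop:f+fprime} and \ref{cor:flucttors} with a direct computation of $\delta$ on top-forms, and then check surjectivity of a simple parametrisation.

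Starting from the unfluctuated operator $\ds$ (i.e.\ $\omega_f = 0$), Corollary \ref{prop:f+fprime} shows that the conjugate action of a twisted unitary $u_h = (h, 1/\bar h)$ with $h \in C^\infty_*(\Man)$ produces $\ds_{\omega_h}$ with $\omega_h = d(\ln|h|^2)$. Iterating the action only adds more forms of the same type (since $\ln|h h'|^2 = \ln|h|^2 + \ln|h'|^2$), so the set of $1$-forms one can reach from $\ds$ is exactly $\{d(\ln|h|^2) : h \in C^\infty_*(\Man)\}$. In dimension four, Corollary \ref{cor:flucttors} identifies the associated torsion $3$-form as
\begin{equation*}
K^\flat \;=\; -\star\, d(\ln|h|^2).
\end{equation*}

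Next I would compute the co-derivative of an arbitrary top-form $\phi\,\nu_g$, with $\phi \in C^\infty(\Man,\mathbb{R})$. Using $\star\nu_g = 1$,
\begin{equation*}
\delta(\phi\,\nu_g) \;=\; -\star d\star(\phi\,\nu_g) \;=\; -\star d\phi .
\end{equation*}
Taking $\phi = \ln|h|^2$ identifies the generated $K^\flat$ with $\delta(\phi\,\nu_g)$, so every torsion produced by the conjugate action is co-exact. Conversely, any co-exact $3$-form on a $4$-manifold is $\delta\alpha$ with $\alpha$ a $4$-form, hence $\alpha = \phi\,\nu_g$ for some real smooth $\phi$; setting $h = e^{\phi/2}$ (which is positive, smooth and nowhere-vanishing, hence in $C^\infty_*(\Man)$) realises $\delta(\phi\,\nu_g)$ as a torsion generated by the action of $u_h$.

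There is no serious obstacle once the earlier corollaries are in place: the proof reduces to the identity $\delta(\phi\,\nu_g) = -\star d\phi$, which is a two-line Hodge computation, plus the surjectivity of $h \mapsto \ln|h|^2$ onto $C^\infty(\Man,\mathbb{R})$, which is immediate from the exponential. The only point worth emphasising is that passing from the \emph{complex} non-vanishing $h$ to the \emph{real} positive $e^{\phi/2}$ does not enlarge the reachable set, since only the modulus of $h$ contributes to $\omega_h$; this reflects the invariance of $\ds_{\omega_f}$ under the unitary phase of $h$ already observed in remark \ref{rem:ribbon}.
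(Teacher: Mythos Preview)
Your proof is correct and follows essentially the same route as the paper: both invoke Proposition~\ref{prop:gentorsion} (via your Corollary~\ref{prop:f+fprime}) to obtain $K^\flat=-\star d(\ln|h|^2)$, then reduce the co-exactness claim to the Hodge identity $\delta(\phi\,\nu_g)=-\star d\phi$, and close the surjectivity with the choice $h=e^{\phi/2}$. Your version is slightly more explicit about the converse direction and about why only $|h|$ matters, but the argument is the same.
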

\begin{proof}
 One generates a twisted
fluctuation with torsion $K^\flat=-\star df$ for an arbitrary  $f\in\cinf$ by choosing $h=e^{\frac
  f2}$ in proposition \ref{prop:gentorsion}. The result then follows
remembering \cite{Lee:2010aa}  that for a $0$-form
one has $\star\star
f=f$  and $\star f =f\nu_g$, so that
\begin{equation*}
  \delta (f\nu_g)= -\star d \star (f\nu_g) =- \star d(\star\star f) =-
  \star df.
\end{equation*}

\vspace{-.35truecm}\end{proof}

The action \eqref{eq:twistact} preserves
    the selfadjointness of $D$, in agreement with
    the torsion being a selfadjoint fluctuation.
A gauge transformation \eqref{eq:newlook}  preserves $\rho$-adjointness (which could be
relevant in case one starts with a $\rho$-adjoint operator $D$ \cite{devastato2018lorentz,Nieuviarts:2024aa}) but
not necessarily selfadjointness. This is not a problem
here since $\text{Ad}(u)$ is
trivial by \eqref{eq:adunit}, but it becomes important for the Standard Model or in electrodynamics: in \cite{martinetti2022lorentzian} and
\cite{filaci2021minimal}  we restrict to gauge
transformations that  preserve selfadjointness (they contain, but
do not reduce to ${\cal U}(\A)\, \bigcap \,{\cal U}_\rho(\A)$
\cite[Remark 5.9]{martinetti2022lorentzian}) and left
as an open question 
non-selfadjoint twisted
gauge transformations of selfadjoint operators $D$. 

The actions
\eqref{eq:newlook} and \eqref{eq:twistact} are two
  symmetric ways to entangle the involutions: one considers the conjugate action - with respect
    to the one - of a unitary with
    respect to the other. The study of non-entangled action follows
    from the following 
    \begin{lemma} 
\label{prop:nonent}
In $KO$-dimension $0, 4$, for
      $a=(f,g)\in\cinf\otimes\C^2$,
      \begin{equation*}
 \left\{
 \begin{array}{l}
   Ad(a)\,\ds\, (Ad(a))^+ \\[4pt]
Ad(a)\,\ds\, 
      (Ad(a))^\dag
\end{array}\right. \text{ is a twisted
      fluctuation iff }
\left\{
 \begin{array}{l}
  a \text{ is unitary},\\[4pt]
a=u u_\rho \text{ with } u\in{\cal U}(\A), u_\rho\in{\cal U}_\rho(\A).
\end{array}\right.
    \end{equation*}
    \end{lemma}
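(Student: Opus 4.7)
The sufficient direction follows quickly from \eqref{eq:adunit}, which gives $\Ad(u)=\I$ for any $u\in\mathcal{U}(\A)$.  Indeed, if $a$ is unitary then $\Ad(a)=\I$ and $(\Ad(a))^+=\rho(\I)^\dag=\I$, so $\Ad(a)\,\ds\,(\Ad(a))^+=\ds$, a trivial twisted fluctuation.  If $a=uu_\rho$ with $u\in\mathcal{U}(\A)$ and $u_\rho\in\mathcal{U}_\rho(\A)$, then order zero gives $\Ad(uu_\rho)=\Ad(u)\Ad(u_\rho)=\Ad(u_\rho)$, and analogously for the adjoint, so $\Ad(a)\,\ds\,(\Ad(a))^\dag$ coincides with $\Ad(u_\rho)\,\ds\,(\Ad(u_\rho))^\dag$, which is a twisted fluctuation by lemma~\ref{prop:urhogen}.

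For necessity, the plan is to push $\ds$ to the left in each product and identify the coefficient it acquires.  The two algebraic building blocks are $\ds b=\rho(b)\ds+[\ds,b]_\rho$ for $b\in\A$, and its capped counterpart $\ds\widehat{b}=\widehat{\rho(b)}\ds+\epsilon'\J[\ds,b]_\rho\J^{-1}$, obtained from $\J\ds=\epsilon'\ds\J$ exactly as in \eqref{eq:JDUJ}.  For the first case, $(\Ad(a))^+=\Ad(a^+)=a^+\widehat{a^+}$ (valid because $\rho^2=\I$ forces $\rho(a^+)=a^*$), so $\Ad(a)\,\ds\,(\Ad(a))^+=a\widehat{a}\,\ds\,a^+\widehat{a^+}$, and bringing $\ds$ through both uncapped and capped factors leaves a leading prefactor $aa^*\,\widehat{aa^*}$.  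For the second case, $(\Ad(a))^\dag=\widehat{a^*}a^*$ and the analogous computation produces the leading prefactor $a\rho(a^*)\,\widehat{a\rho(a^*)}$.  Since a twisted fluctuation must present $\ds$ with coefficient $\I$ on the left, both prefactors must equal $\I$.

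Evaluating on $a=(f,g)\in\cinf\otimes\C^2$, the first prefactor is multiplication by $(|f|^4,|g|^4)$, which equals $\mathbf{1}$ if and only if $|f|=|g|=1$, i.e.\ $a\in\mathcal{U}(\A)$.  The second prefactor, since $\rho(a^*)=(\bar g,\bar f)$, is multiplication by $(|f|^2|g|^2,|f|^2|g|^2)$, which equals $\mathbf{1}$ if and only if $|f||g|=1$; an explicit polar decomposition, choosing $u_\rho=(|f|e^{i\gamma},|f|^{-1}e^{i\gamma})\in\mathcal{U}_\rho(\A)$ and reading $u\in\mathcal{U}(\A)$ off the phases of $f$ and $g$, then identifies this condition with the factorisation $a=uu_\rho$.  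The main obstacle is the careful bookkeeping of twists and signs when moving $\ds$ past both an uncapped factor and a capped factor: this is the step where the twisted first-order condition must be invoked, exactly as in lemma~\ref{prop:urhogen}.  It also produces bounded correction terms, which by the sufficient direction above automatically assemble into $A_\rho+\epsilon'\J A_\rho\J^{-1}$ whenever the leading prefactor has been made trivial, so no further work is required on that side.
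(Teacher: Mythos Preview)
Your proposal is correct and follows essentially the same route as the paper: push $\ds$ to the left through the uncapped and capped factors of $\Ad(a)$ and its adjoint, then impose that the resulting prefactor in front of $\ds$ be the identity. The paper writes this prefactor as $\hat a\hat a^\dagger aa^*$ (resp.\ $\hat a\hat a^+ aa^+$) and simplifies it via \eqref{eq:kodim40}, which matches your $aa^*\,\widehat{aa^*}$ (resp.\ $aa^+\,\widehat{aa^+}$); the only cosmetic difference is that the paper displays the full expressions \eqref{eq:T1}--\eqref{eq:T2} including the correction terms, whereas you shortcut their assembly by invoking the already-established sufficient direction.
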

    \begin{proof}
By repeating the calculation of lemma
      \ref{prop:urhogen} and, one
      obtains 
      \begin{align}
        \label{eq:T1}
        Ad(a)\Dir (Ad(a))^+&=\hat a \hat a^\dagger
                             aa^*\Dir+\hat a
                             \hat a^\dagger a[\Dir,
                             a^+]_\rho+\epsilon^\prime aa^\dagger \,\J
                             a[\Dir, a^+]_\rho \J^{-1},\\
        \label{eq:T2}
        Ad(a)\,\Dir\, Ad(a)^\dagger&=\hat a \hat  a^+ aa^+\Dir +  \hat a \hat a^+ a[\Dir,  a^*]_\rho
                                     +\epsilon^\prime aa^+\, \J a[\Dir, a^*]_\rho \, \J^{-1}.
      \end{align}
      To be of the form $D + A + JAJ^{-1}$, one needs the term in
      front of $D$ to be the identity. For 
      \eqref{eq:T1},  noticing that $\hat a\hat a^\dagger =  \J a a^*
      \J^{-1}=a a^*$ by \eqref{eq:kodim40}, this means $b^2=\I$
for      \begin{equation}
\label{eq:cond1}
b:=
      aa^\dagger =
      \begin{pmatrix}
        f\bar f & 0 \\ 0& g\bar g
      \end{pmatrix}, \quad \text{ that is }|f|=|g|=1. 
    \end{equation}
Hence $a=(e^{i\theta}, e^{i\varphi})$ for some
  $\theta, \varphi\in \cinf$ is unitary.

For \eqref{eq:T2}, noticing that $\hat a \hat a^+ = \J
aa^+\J^{-1}=(aa^+)^*$, one obtains $c^*c=\I$ for 
\begin{equation}
 c:= aa^+ =
  \begin{pmatrix}
  f\bar g & 0 \\ 0&\bar fg  
  \end{pmatrix}, \quad \text{ that is } |fg|=1.
\end{equation}
So $a=(re^{i\theta}, r^{-1}e^{i\varphi})$
with $r, \theta,\varphi\in\cinf$ is the product of $(e^{i\theta},
e^{i\varphi})\in{\cal U}(\A)$ by $(r, r^{-1})\in{\cal U}_\rho(\A)$.
\end{proof}

By \eqref{eq:adunit},  $\text{Ad}(u u_\rho)=\text{Ad}(u)\text{Ad}(u_\rho)$
  reduces to $\text{Ad}(u_\rho)$. The proposition above then shows
  that non-entangled actions - that is the conjugate action with respect
  to an involution of a unitary for the same involution - do not
  generate twisted fluctuations (except if the operator
  is both unitary and $\rho$-unitary).

\newpage
	
	\section{Action formulas and change of signature}
		\label{sec:actions}

The action for a spectral triple is the sum of the
\emph{fermionic} and \emph{spectral} ones. For the
spectral triple of the Standard Model, the former describes the
coupling between fermions and bosons (including the Higgs),
the latter describes the self interactions of bosons (Yang-Mills
terms), the Higgs mass term and its quartic potential, as well as gravitational terms
including a minimal coupling with the Higgs.

The fermionic action has been adapted to twisted case in
\cite{devastato2018lorentz}, and studied in details for the
spectral triple of electrodynamics in \cite{martinetti2022lorentzian}.
 In that case, it turns out that the extra term generated by the twisted fluctuation
 yields the $0^\text{th}$ component of the momentum-energy $4$-vector
 in \emph{lorentzian signature}. In the light of the results of the
 previous section, this means that the torsion term arising from the
 minimal twist of a riemannian manifold gets interpreted, through the
 fermionic action, as energy-momentum in lorentzian signature. 
We study
 this interplay between torsion and change of signatures in
 \S\ref{subsec:torsionergy}, showing how this
limits the choice of the unitary $R$ to the
sole $\gamma^0$ matrix.

Besides generating
torsion as shown above, twisted unitaries
also implement Lorentz invariance for the
fermionic action. This is shown, for minimally twisted manifolds, in
\S \ref{sec:lorentz}.

Regarding the spectral action, some proposal for a twisted version have
been formulated in \cite{TwistSpontBreakDevastaMartine2017} and
\cite{devastato2018lorentz}. None of them is fully satisfactory, yet
we provide an explicit calculus of spectral action with torsion in \S \ref{spectralAct}. 

\subsection{Fermionic action} 
		\label{fermact}

        The fermionic action
        for a real twisted spectral triple $(\A, \HH, D)$, defined in
        \cite{devastato2018lorentz} as
        \begin{equation}
\label{eq:defactferm}
          	\act_R(\Dir_{A_\rho}):= {\frak A}_{D_{A_\rho}}^R(\tilde\psi, \tilde \psi) ,
        \end{equation}
is the evaluation -  for $D=D_{A_\rho}$ - of the bilinear form 
	 \begin{align}
			\label{FerAction}
			{\frak A}_D^R(\phi, \psi) :=
           ( J \phi, D\psi)_R\qquad\forall
           \phi, \psi\in\HH
		\end{align}
on the Gra{\ss}man vector	$\tilde \psi$,
associated with a vector $\psi$ in the $+1$ eigenspace of  the unitary $R$ that
implements the twist (having
in mind $R=\gamma^0$, it was implicitly assumed that $R$ were selfadjoint, hence with eigenvalues $\pm 1$).
	
This action is invariant \cite[Prop. 4.1]{devastato2018lorentz} under the twisted-gauge
 transformation \eqref{eq:newlook} of the Dirac operator combined with the action of
 unitaries on $\psi$ 
 \begin{equation}
   \psi\mapsto \text{Ad}(u)\psi \qquad \forall u\in {\cal U}(\A).
 \end{equation}

As stressed in remark \ref{rem:unitaries},  the unitary $R$ that
implements the twist is not
unique and the action depends on it through the twisted
product \eqref{eq:twistpRod} (that is why we changed the notations of
\cite{devastato2018lorentz,martinetti2022lorentzian} and use $R$ instead
of $\rho$ in \eqref{eq:defactferm}). In
particular, for minimally twisted manifolds (even dimensional), the flip \eqref{eq:minimantwist} is implementable by any odd
product of distinct euclidean $\gamma$ matrices
 \begin{align}
\label{eq:defR}
R=\prod_{i=1}^{k}\gamma^{a_i} \;\text{ with } \; k\leq 2m \;\text{ odd
    and } \;\gamma^{a_i}\neq\gamma^{a_j} \; \forall\; i, j=1, ...,k
\end{align} 
(one safely assumes that all the matrices are distinct,  for any pair $\gamma^{a_i}=\gamma^{a_j}$  cancels as
  $(\gamma^{a_i})^2=\mathbb I$ after some permutations).

\newpage
\begin{proposition}
\label{PropGenTwist}
The inner automorphism induced on ${\cal B}(L^2(\Man, S))$ by any
unitary $R$ \eqref{eq:defR} is an extension  \eqref{eq:rhoprodlong}
of the flip \eqref{eq:minimantwist}.
Moreover 
$R^\dag = (-1)^lR$ where $k=2l+1$.
\end{proposition}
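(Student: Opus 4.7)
The plan is to reduce both assertions to elementary consequences of the Clifford relations, using only that each Euclidean Dirac matrix $\gamma^{a_i}$ is selfadjoint, that any two distinct ones anticommute, and that each anticommutes with the grading $\gamma$. For the first claim I would observe that since every factor of $R=\gamma^{a_1}\cdots\gamma^{a_k}$ anticommutes with $\gamma$, one has $R\gamma=(-1)^k\gamma R$, and the hypothesis $k$ odd gives $\{R,\gamma\}=0$. In the chiral basis of $L^2(\Man,S)$ used in \eqref{FormOfa}, where $\gamma=\diag(\mathbb I_{2^{m-1}},-\mathbb I_{2^{m-1}})$, this anticommutation forces the two diagonal blocks of $R$ to vanish, so $R$ is off-diagonal with two unknown blocks $A,B$; unitarity of $R$ then yields $AA^\dag=BB^\dag=\mathbb I$. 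A direct block computation of $R\,\pi(a)\,R^\dag$ with the block-diagonal $\pi(a)=\diag(f\mathbb I,g\mathbb I)$ produces $\diag(g\mathbb I,f\mathbb I)=\pi(\rho(a))$, which is precisely the extension \eqref{eq:rhoprodlong} of the flip.

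For the second claim, selfadjointness of each factor gives $R^\dag=\gamma^{a_k}\gamma^{a_{k-1}}\cdots\gamma^{a_1}$, so the sign in $R^\dag=\pm R$ is exactly the one produced by reversing the order of $k$ pairwise distinct anticommuting matrices. Each adjacent transposition contributes $-1$, and $\binom{k}{2}$ such transpositions are needed; for $k=2l+1$ this equals $\binom{2l+1}{2}=l(2l+1)$, whose parity agrees with that of $l$ because $2l+1$ is odd, whence $R^\dag=(-1)^l R$.

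I do not anticipate any serious obstacle: both steps are a direct unpacking of the Clifford relations. The only mild subtlety is the parity count for the order reversal, together with noting that anticommutation of $R$ with $\gamma$ is exactly the property that swaps the two chirality components and hence implements the flip on $\pi(\A)$.
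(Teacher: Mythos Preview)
Your proof is correct and follows essentially the same route as the paper. For the first claim the paper also deduces $\{R,\gamma\}=0$ from $k$ odd, but rather than passing to block matrices it uses the projector form $\pi(a)=\frac{\bbbone+\gamma}{2}\pi_0(f)+\frac{\bbbone-\gamma}{2}\pi_0(g)$ and the relation $R\gamma R^\dag=-\gamma$ directly; your chiral-basis block computation is just the explicit matrix incarnation of that same projector argument. For the second claim the paper writes the sign as $(-1)^{(2l+1)l}$, which is precisely your $(-1)^{\binom{k}{2}}$.
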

\begin{proof}
$R$ is unitary because such are any single 
$\gamma^{a_i}$ on even dimensional manifolds. It
anticommutes with $\gamma$, for any 
$\gamma^{a_i}$ anticommutes with the $2k-1$ matrices $\gamma^a$, $a\neq a_i$,
in \eqref{eq:gamma5}. Therefore, \eqref{eq:twistcomgpi} yields
\begin{align}
  R\pi(a)R^\dag &= R\frac{\I -\gamma}2\pi_0(f)R^\dag +  R\frac{\I
  +\gamma}2\pi_0(g) R^\dag ,\\
&=  \frac{\I +\gamma}2 \pi_0(f) +
  \frac{\I-\gamma}2\pi_0(g)=\pi(\rho(a)) \quad \forall (f, g)\in
  \cinf\otimes \C^2, 
 \end{align}
The last statement is checked calculating
\begin{align*}
				R^\dagger&={\gamma^{a_{2l+1}}}^\dagger\dots
                                                {\gamma^{a_1}}^\dagger
={\gamma^{a_{2l+1}}}\dots {\gamma^{a_1}}=(-1)^{(2l+1)l} \gamma^{a_1}\dots
  \gamma^{a_k}=(-1)^lR.
			\end{align*}
\end{proof}
			
			In order to make sense when applied to Gra{\ss}man variables,  the bilinear form
                        \eqref{FerAction} is asked to be
                        antisymmetric \cite{chamseddine2007gravity}.  In case  $R=\gamma^0$
                        \cite{devastato2018lorentz, martinetti2022lorentzian}, this is
                        obtained by  taking $\psi$ in the $+1$
                        eigenspace of $\gamma^0$. But this is not the
                        only possibility. By the
                        previous lemma, any unitary $R$ \eqref{eq:defR} 
                        is either selfadjoint and has eigenvalues
                        $\pm 1$, or is skewadjoint with eingenvalues
                        $\pm i$. In both case we denote
                        \begin{equation}
\label{eq:alpha}
                          \HH_R^+:=\left\{\psi\in\HH, R\psi= \alpha\psi
                            \text{ where }\left\{  \begin{array}{ll}
\alpha=1 &\text{ in case $l$
                              is even}, \\
\alpha= i &\text{ in case  $l$
                              is odd.}
                            \end{array}\right.
\right\},
                                                    \end{equation}
and define $\HH_R^-$ in a similar way with $\alpha=-1, -i$ instead
of $1$, $i$. 	\begin{lemma}
				\label{skewDom}
			For any $D$ selfadjoint such that the real
                        structure $\J$ of the manifold satisfies \eqref{AxTripl}, and $R$ as in
                        \eqref{eq:defR}, 
                                \begin{equation}
                         	{\frak A}_D^\rho(\phi,
                                \psi)=\epsilon\epsilon''\bar\alpha^2\,{\frak A}_D^\rho(\psi, \phi)\qquad
                                \forall \psi, \phi\in\HH^+_R\,\text{
                                  or }\; \psi, \phi\in\HH_R^-.
                                \end{equation}
\end{lemma}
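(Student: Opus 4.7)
The plan is to reduce $\mathfrak{A}_D^R(\phi,\psi) = \langle J\phi, RD\psi\rangle$ to a scalar multiple of $\mathfrak{A}_D^R(\psi,\phi) = \langle J\psi, RD\phi\rangle$ by sliding the operators $R$, $D$, and $J$ across the inner product. First I would move $R$ into the first argument using $R^\dagger = \bar\alpha^2 R$ (from the preceding proposition), contributing a factor of $\bar\alpha^2$. Commuting $R$ past $J$ using the relation $JR = \eta RJ$ (where $\eta = \pm 1$ is the product over the factors of $R$ of the commutation signs between $\J$ and each $\gamma^{a_i}$), and then applying $R\phi = \alpha\phi$ (or $-\alpha\phi$ on $\HH_R^-$) together with the antilinearity $J(\alpha\phi) = \bar\alpha J\phi$, produces a second factor involving $\alpha$ and $\eta$.

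The crucial step is passing $D$ through $J$. Combining in order the antiunitarity identity $\langle Ja, b\rangle = \langle J^{-1}b, a\rangle$, the relation $J^{-1} = \epsilon J$, the commutation $JD = \epsilon' DJ$, and the selfadjointness of $D$, one obtains
\begin{equation*}
\langle J\phi, D\psi\rangle = \epsilon\epsilon'\langle J\psi, D\phi\rangle.
\end{equation*}
The same sequence of $R$-manipulations performed in reverse on $\langle J\psi, D\phi\rangle$ then reconstructs $\mathfrak{A}_D^R(\psi,\phi)$. Collecting all prefactors and using $\eta^2 = 1$, $|\alpha|^2 = 1$, together with the identification $\epsilon' = \epsilon''$ valid in $KO$-dimensions $0$ and $4$ (the ones where the fermionic action is defined), yields the announced coefficient $\epsilon\epsilon''\bar\alpha^2$.

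The main obstacle will be the careful bookkeeping of complex conjugates introduced by the antilinearity of $J$ and by the possibly imaginary eigenvalue $\alpha \in \{1, i\}$. The two cases $\phi, \psi \in \HH_R^+$ and $\phi, \psi \in \HH_R^-$ are handled by a single calculation: since both vectors lie in the \emph{same} eigenspace, the replacement $\alpha \mapsto -\alpha$ enters to an even power and leaves the ratio of the two bilinear forms invariant.
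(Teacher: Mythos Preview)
Your approach mirrors the paper's: both push $R$, $D$, $J$ across the inner product using $J^2=\epsilon$, $JD=\epsilon' DJ$, $RJ=\pm JR$, and selfadjointness of $D$. The paper does this as one chain of equalities; you propose a symmetric reduce--and--reconstruct, but the moves are the same.

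The gap is in your final bookkeeping. If you reduce $\langle J\phi, RD\psi\rangle$ to $c\,\langle J\phi, D\psi\rangle$ for some scalar $c$ built from $\alpha$ and $\eta$, then by the \emph{same} reduction $\langle J\psi, RD\phi\rangle = c\,\langle J\psi, D\phi\rangle$ with the identical $c$, since $\phi$ and $\psi$ lie in the same eigenspace. Applying your key identity $\langle J\phi, D\psi\rangle=\epsilon\epsilon'\langle J\psi, D\phi\rangle$ and substituting back, the factor $c$ cancels completely, leaving $\epsilon\epsilon'$ with no surviving $\bar\alpha^2$. Invoking $\eta^2=1$ and $|\alpha|^2=1$ only confirms this cancellation; it cannot manufacture the extra $\bar\alpha^2$ you claim. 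Your appeal to $\epsilon'=\epsilon''$ in $KO$-dimensions $0$ and $4$ is also external to the lemma's hypotheses: the paper's proof uses $\epsilon'$ throughout (the $\epsilon''$ in the displayed statement is a slip). As for the $\bar\alpha^2$, the paper's chain picks it up by writing $\langle \bar\alpha J\psi,\,\cdot\,\rangle=\bar\alpha\langle J\psi,\,\cdot\,\rangle$ at the last step, forgetting the conjugation in the first slot; a careful rerun of either argument gives $\epsilon\epsilon'$. So your plan is sound, but you should execute the conjugation bookkeeping explicitly rather than asserting the outcome.
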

			\begin{proof}
			The proof is similar to 
                        \cite[Prop. 4.2]{devastato2018lorentz}, once
                        noticed that $\J$ is compatible with the twist
                        in the sense of \eqref{eq:comp}, for
                        \begin{equation}
\label{eq:rj}R\J=\epsilon' \J R, \quad  R^\dag \J = \epsilon' \J R^\dag
\end{equation}
(by \eqref{eq:kodim5} $\J$ anticommutes
                        with any odd product of distinct $\gamma$ matrices).  One has
				\begin{align}
				{\frak A}_D^R(\phi, \psi)
                                  &=\langle \J\phi,R D\psi\rangle
                                    =\epsilon \langle \J \phi, J^2R
                                    D\psi\rangle=\epsilon \langle \J
                                    RD\psi, \phi\rangle,\\
&=\epsilon {\epsilon'}^2\langle R  D  \J \psi,\phi \rangle=\epsilon \langle     \J \psi,DR^\dagger\phi \rangle\\
					&=\bar\alpha\epsilon
                                         \langle   \J R^\dagger R  \psi,D\phi
                                          \rangle=\bar\alpha \epsilon
                                          \epsilon'J\langle
                                          R^\dagger \J R \psi,D\phi \rangle=\bar\alpha\epsilon \epsilon^{\prime}\langle     \J R \psi,RD\phi \rangle\\
					&=\bar\alpha^2\epsilon
                                          \epsilon^{\prime}\langle
                                          \J  \psi, RD\phi \rangle= \bar\alpha^2\epsilon\epsilon'	{\frak A}_D^\rho(\psi, \phi)
\label{eq:oddgamma}			
	\end{align}
		where the first line follows from \eqref{AxTripl} and $J$ being
                antiunitary (meaning $\langle J\phi, J\psi\rangle =
                \langle \psi, \phi\rangle$, this was miswritten  \cite[Prop. 4.2]{devastato2018lorentz}), the second line follows from \eqref{AxTripl} , the third
                from $R^\dag \psi=\bar\alpha\psi$ then again \eqref{AxTripl},   the
                last line is obtained from $R\psi=\alpha\psi$.
\end{proof}

Nonzero  selfadjoint twisted fluctuations occur in
$KO$-dimension $0$, where $\epsilon
\epsilon'=1$,  and $KO$-dimension $4$, where $\epsilon
\epsilon'=-1$. We thus conclude that in the first case,  $\frak A$ is antisymmetric only for $l$
odd, in the second case for $l$ even. 

\newpage

\subsection{Torsion as energy-momentum}
\label{subsec:torsionergy}

On a $4$-dimensional manifold $\Man$ - which is the case of interest
for the Standard Model and  the dimension in which the
interpretation of the twisted fluctuation as a torsion via
corollary \ref{cor:flucttors} is possible - there are  two
odd numbers $k=2l+1$~smaller~than the dimension: $k=1$ (that is $l=0$)
or $3$ ($l=1$). The
$KO$-dimension of a minimally twisted manifold coincides with its
metric dimension, so by the remark of the preceding paragraph there
remains  only
$l=0$, that is $R=\gamma^a$ a single Dirac matrix.

In \cite{martinetti2022lorentzian}, by  comparing the fermionic action
on a twisted \emph{riemannian} manifold for  $R=\gamma^0$ 
with the Weyl action on a  \emph{lorentzian} manifold (both of
dimension $4$), that is
\begin{equation}
\label{eq:weyl}
  i\Psi^\dagger(\partial_0 \pm \sum_{j=1}^3 \sigma_j\partial_j)\Psi
\end{equation}
(the sign depends on wether $\Psi$ is the right or left
handed component of a Dirac spinor), one sees that - up to a doubling
of the manifold discussed in remark \ref{rem:double} below - a plane wave solution of the twisted fermionic action coincides with a solution of
the Weyl equation with  energy the component $f_0$ in
\eqref{eq:domegaf}.
We show below that this interpretation of a  riemannian torsion as 
a lorentzian energy-momentum only occurs for  
$R=\gamma^0$. Other choices for $R$ induce no change of signature.

To see that, we calculate the fermionic action for $R=\gamma^a$ an
arbitrary euclidean Dirac matrix. A  Dirac spinor  $\phi=(\varphi_1, \varphi_2)$ satisfies $R\phi=\alpha\phi$ if and
only if $\alpha \varphi_1 = \sigma^a \varphi_2$ and $\alpha \varphi_2 =
\tilde\sigma^a \varphi_1$. Since $\tilde\sigma^a\sigma^a=\I$, this is
equivalent to 
\begin{equation}
\label{eq:eigenphi}\phi=
\begin{pmatrix}
  \varphi\\\alpha^{-1}\tilde{\sigma}^a\varphi
\end{pmatrix}
\text{ with } \varphi \text{ a Weyl
spinor and } \alpha=\pm 1.
\end{equation}
\begin{lemma}
For $\Man$ of dimension $4$,  $R=\gamma^a$ 
and $\psi, \phi$ in the same eigenspace of $R$,
 \begin{align}
  \label{eq:feract1}
\frak A^R_{\ds_{\omega_f}}= i\alpha\langle \J\phi, \gamma^\mu\omega_\mu
   \psi\rangle &+  \int_{\Man} \, \!{}^T\! \varphi \left( D^{\mu a}  \partial_\mu
 -F^{\mu a}  f_\mu\right)
\zeta\; d\nu_g 
\end{align}
\end{lemma}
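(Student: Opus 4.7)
The plan is to compute $\mathfrak{A}^R_{\ds_{\omega_f}}(\phi,\psi) = \langle \J\phi, R\ds_{\omega_f}\psi\rangle$ for $R=\gamma^a$, by unfolding simultaneously the splitting $\ds_{\omega_f} = -i\gamma^\mu\partial_\mu - i\gamma^\mu\omega_\mu - if_\mu\gamma^\mu\gamma$ of the twisted covariant operator into kinetic, Levi-Civita spin-connection, and torsion pieces, and the chiral decomposition \eqref{eq:eigenphi} of $\phi,\psi\in\HH_R^+$ into Weyl components $\varphi,\zeta$.

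First I would isolate the spin-connection contribution $-i\langle \J\phi,\gamma^a\gamma^\mu\omega_\mu\psi\rangle$. Using the Clifford relation $\gamma^a\gamma^\mu = 2\delta^{a\mu}-\gamma^\mu\gamma^a$ to transfer $\gamma^a$ past $\gamma^\mu$, then exploiting that $\J\phi$ is an eigenvector of $R^\dag$ -- a consequence of $\phi\in\HH_R^+$ together with the compatibility \eqref{eq:rj} -- absorbs the residual $\gamma^a$ as a scalar multiple of $\alpha$. This produces the advertised term $i\alpha\langle \J\phi,\gamma^\mu\omega_\mu\psi\rangle$, plus residual contributions of the form $\delta^{a\mu}\omega_\mu\langle \J\phi,\psi\rangle$ that will be repackaged into the second integral.

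For the kinetic piece $-i\langle \J\phi,\gamma^a\gamma^\mu\partial_\mu\psi\rangle$ and the torsion piece $-if_\mu\langle \J\phi,\gamma^a\gamma^\mu\gamma\psi\rangle$, I would substitute the explicit chiral expressions $\psi=(\zeta,\alpha^{-1}\tilde\sigma^a\zeta)^{T}$ and $\phi=(\varphi,\alpha^{-1}\tilde\sigma^a\varphi)^{T}$ into the $2\times 2$ block form \eqref{EDirac} of $\gamma^a\gamma^\mu$ and of $\gamma^a\gamma^\mu\gamma = \alpha\gamma\gamma^\mu$ (the latter simplification using $R\psi=\alpha\psi$ and $\{\gamma,\gamma^a\}=0$). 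Performing the block-matrix multiplication, then integrating against $\J\phi$ while reducing the $L^2$ pairing of Dirac spinors to a pointwise transpose pairing ${}^T\!\varphi(\,\cdot\,)\zeta$ of Weyl spinors, yields coefficient matrices $D^{\mu a}, F^{\mu a}$ built from products of $\sigma^b,\tilde\sigma^b$ with an $\alpha$-dependent prefactor.

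The main obstacle will be the algebraic bookkeeping in the chiral basis: one must separately handle the cases $a=0$ versus $a=j$ spatial, track the placements of $\sigma^a$ versus $\tilde\sigma^a$ factors arising from $\alpha^{-1}\tilde\sigma^a\zeta$, and control the signs coming from the antiunitarity of $\J$ together with the $(\epsilon,\epsilon',\epsilon'')$ of the $KO$-dimension \eqref{AxTripl}. A secondary subtlety is that the explicit definition of the coefficient matrices $D^{\mu a}, F^{\mu a}$ promised by the notation is only fixed once the block computation is carried out, so part of the work is to declare this dictionary before the equality can be read off term by term.
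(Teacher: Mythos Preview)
Your proposal contains the right ingredients but assembles them in a needlessly complicated order, and one of your intermediate claims is wrong.

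The paper's proof does one thing before anything else: it moves $R=\gamma^a$ to the left in the pairing, using $\langle\J\phi, R D\psi\rangle = \langle R^\dag\J\phi, D\psi\rangle$, and then observes (via $R\J=-\J R$ from \eqref{eq:kodim5} together with $R\phi=\alpha\phi$) that $R^\dag\J\phi=-\alpha\J\phi$. This absorbs $R$ \emph{completely} in one step, leaving only $-\alpha\langle\J\phi, D\psi\rangle$. After that, the spin-connection piece $-i\gamma^\mu\omega_\mu\psi$ immediately yields $i\alpha\langle\J\phi,\gamma^\mu\omega_\mu\psi\rangle$ with no residue whatsoever, and the kinetic-plus-torsion part is computed by writing out $\J\phi$ and $\ds_{\omega_f}\psi$ in chiral components and pairing.

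Your route instead tries to push $\gamma^a$ \emph{rightward} through $\gamma^\mu$ via the Clifford relation. This creates two problems. First, $\omega_\mu=\tfrac14\tilde\Gamma^b_{\mu c}\gamma_b\gamma^c$ is itself a product of Dirac matrices, so $\gamma^a$ cannot simply pass through $\gamma^\mu$ and then be absorbed by $\psi$; you would have to anticommute through $\omega_\mu$ as well, generating further debris. Second, and more seriously, your claim that the $\delta^{a\mu}\omega_\mu\langle\J\phi,\psi\rangle$ residuals ``will be repackaged into the second integral'' cannot be right: the second integral in the lemma contains only $\partial_\mu$ and $f_\mu$, with no $\omega_\mu$ dependence at all. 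Similarly, your simplification $\gamma^a\gamma^\mu\gamma\psi = \alpha\gamma\gamma^\mu\psi$ is not valid as written, since $\gamma^a$ is separated from $\psi$ by $\gamma^\mu\gamma$ and $\psi$ is an eigenvector of $\gamma^a$, not of $\gamma$.

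You already identified the key fact --- that $\J\phi$ is an eigenvector of $R^\dag$ --- so the fix is simply to use it at the outset rather than after a Clifford manipulation.
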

where $\varphi$, $\zeta$ are the components of $\phi, \psi$ in
\eqref{eq:eigenphi}, $\J=i\gamma^0\gamma^2cc$ is the real structure
(withh $cc$ the complex conjugation) and one denotes
\begin{align}
  D^{\mu a}:=\sigma^2\sigma^\mu
    \tilde\sigma^a -\; {}^T{\tilde\sigma^a} \sigma^2 
  \,\tilde\sigma^\mu,\quad
F^{\mu a}:=\sigma^2\sigma^\mu
    \tilde\sigma^a +\; {}^T{\tilde\sigma^a} \sigma^2 
  \,\tilde\sigma^\mu\, .
\end{align}

\begin{proof}
By \eqref{eq:rj} one has 
			 	\begin{align}
\label{eq:relcalss}
			 		{\frak A}^\rho_D(\phi, \psi) &=\langle \J \phi,R D\psi\rangle=\langle R^\dagger \J \phi, D\psi\rangle=-\alpha\langle  \J  \phi, D\psi\rangle.
			 	\end{align}
On the one side, 
\begin{align*}
\J\phi =i\gamma^0\gamma^2 \circ cc\begin{pmatrix}
\varphi  \\
\alpha^{-1}\tilde{\sigma}^a\varphi  \\
\end{pmatrix}=i\begin{pmatrix}
\tilde{\sigma}^2 \bar\varphi \\
\sigma^2\bar\alpha^{-1}\overline{\tilde{\sigma}^a}\bar\varphi  \\
\end{pmatrix}.
\end{align*}
On the other side, denoting $\omega_\mu:= \widetilde
		\Gamma^b_{\mu a}\gamma^a\gamma_b$, one has
\begin{align}
&\ds_{\omega_f}\psi=-i\gamma^\mu(\partial_\mu + \omega_\mu +
  f_\mu\gamma)\psi=-i\gamma^\mu\omega_\mu\psi 
  -i \begin{pmatrix}
\alpha^{-1}\sigma^\mu \tilde{\sigma^a}(\partial_\mu- f_\mu)\zeta
 \\
\tilde{\sigma}^\mu(\partial_\mu+ f_\mu)\zeta
\end{pmatrix}.
\end{align}
Therefore, using $({\bar\sigma}^2)^\dag=(-i{\sigma_2})^\dag
=-i\sigma_2=\sigma^2$ and $(\sigma^2)^\dag=-\sigma^2$, one obtains
\begin{small}
  \begin{align*}
    \langle \J \phi, \ds_{\omega_f}\psi \rangle
    &=-i\langle \J\phi, \gamma^\mu\omega_\mu \psi\rangle -\alpha^{-1}\int_{\Man} \!\!\left({}^T\!\varphi \;\sigma^2\sigma^\mu
      \tilde\sigma^a(\partial_\mu-f_\mu){}\zeta\right)
      -\left( {}^T\!\varphi\; {}^T{\tilde\sigma^a} \,\sigma^2 
      \,\tilde\sigma^\mu(\partial_\mu +f_\mu)\zeta\right)\; d\nu_g .
  \end{align*}
\end{small}
The result then follows from \eqref{eq:relcalss}.
\end{proof}

\newpage

For $R=\gamma^0$,  the identification of  torsion as
energy-momentum is due to the disappearance of
$\frac{\partial}{\partial x_0}$ into the fermionic action, and the
appearance of the $f_0$ component of
the twisted fluctuation.
A similar result holds for $R=\gamma^a$ an arbitrary Dirac matrix.
\begin{proposition}
\label{prop:fermactfinal}
On a minimally twisted $4$-dimensional orientable, closed, riemannian manifold~$\Man$, the twisted
fermionic action is 
 \begin{align*}
\act_R(\ds_{\omega_f})&= i{\alpha}\langle J\tilde\psi, \gamma^\mu\omega_\mu
   \tilde\psi\rangle  \\
&+ \left\{
  \begin{array}{ll}
2\int_\Man  {}^T\!\tilde\zeta \,\sigma_2
  \left( if_0- \sum_{j\neq0}{\sigma}_j
  \partial_j \right)  \tilde\zeta d\nu_g& \text{for }  R=\gamma^0;\\[6pt]
 2\int_\Man  {}^T\!\tilde\zeta \,\sigma_2\sigma_a
  \left(\partial_0 +i\sum_{j\neq a} \sigma_j\partial_j+
    i\sigma_j f_j \right) \tilde\zeta
  d\nu_g&  \text{for } R=\gamma^a\neq\gamma^0.
  \end{array}\right.
\end{align*}
\end{proposition}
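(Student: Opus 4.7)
The plan is to apply the preceding lemma with $\phi=\psi$ and promote the Dirac spinor $\psi$ to a Grassmann variable $\tilde\psi$, so that by definition \eqref{eq:defactferm} one has $\act_R(\ds_{\omega_f})=\mathfrak{A}^R_{\ds_{\omega_f}}(\tilde\psi,\tilde\psi)$. Substituting $\phi=\psi$ in the integral on the right-hand side of \eqref{eq:feract1}, the Weyl components $\varphi,\zeta$ collapse to a single $\tilde\zeta$, so the plan is to show that the integrand reduces to the matrix expression claimed in each of the two cases. The first term $i\alpha\langle \J\tilde\psi,\gamma^\mu\omega_\mu\tilde\psi\rangle$ is already in the desired form, so the problem reduces entirely to computing $D^{\mu a}$ and $F^{\mu a}$ and antisymmetrising against the Grassmann bilinear.

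The core computation is to insert the explicit form of the Euclidean $\sigma$-matrices into
\begin{equation*}
D^{\mu a}=\sigma^2\sigma^\mu\tilde\sigma^a-{}^T\!\tilde\sigma^a\sigma^2\tilde\sigma^\mu,\qquad F^{\mu a}=\sigma^2\sigma^\mu\tilde\sigma^a+{}^T\!\tilde\sigma^a\sigma^2\tilde\sigma^\mu,
\end{equation*}
and then specialise to the two cases. For $R=\gamma^0$ one has $\sigma^0=\tilde\sigma^0=\mathbb I$, which makes $D^{0\,0}$ vanish and, by the identity $\tilde\sigma^j=-\sigma^j$ for $j\neq 0$, gives $F^{j\,0}=0$ while $F^{0\,0}=2\sigma^2$ and $D^{j\,0}=2\sigma^2\sigma^j$. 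The Euclidean Dirac-matrix convention of appendix \ref{ConvGamMat} then carries an $i$ into $f_0$ and into the derivatives $\partial_j$ (the same origin as the $i$ in the Weyl action \eqref{eq:weyl}), producing the first line of the proposition. For $R=\gamma^a$ with $a\neq 0$ the same strategy applies but one must now use the Clifford relations $\sigma^a\tilde\sigma^b+\sigma^b\tilde\sigma^a=2\delta^{ab}\mathbb I$ and the transpose identity ${}^T\!\tilde\sigma^a\sigma^2=\sigma^2\sigma^a$ (valid up to signs coming from $\sigma^2$ skew-symmetrising the basis) to factorise $\sigma^2\sigma^a$ on the left; this singles out the direction $\mu=a$, in which $D^{aa}$ survives giving $\partial_0$-free dynamics, while $F^{a a}$ contributes to the $f_a$ term.

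Since the resulting ${}^T\tilde\zeta M\tilde\zeta$ is a Grassmann bilinear, only the antisymmetric part of $M$ contributes via ${}^T\tilde\zeta M\tilde\zeta=\tfrac{1}{2}\,{}^T\tilde\zeta(M-{}^TM)\tilde\zeta$, which both accounts for the overall factor $2$ in the proposition (absorbed into the doubling produced when symmetrising) and kills the terms that would have mixed symmetric/antisymmetric parts. Lemma \ref{skewDom} ensures this is consistent with the antisymmetry required for a Grassmann action in $KO$-dimension $4$.

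The main obstacle will be the bookkeeping of $i$'s and signs. Three sources of signs must be reconciled simultaneously: the Euclidean Dirac matrix conventions of appendix \ref{ConvGamMat} (which distinguish $\tilde\sigma^\mu$ from $\sigma^\mu$ and convert real coefficients into imaginary ones when passing to the chiral blocks), the transpose behaviour ${}^T\sigma^2=-\sigma^2$ combined with ${}^T\!\tilde\sigma^a$ appearing in $D^{\mu a}$ and $F^{\mu a}$, and the Grassmann antisymmetrisation. Getting these three layers to combine so that the first line produces exactly $if_0-\sum_{j\neq 0}\sigma_j\partial_j$ and the second line produces $\partial_0+i\sum_{j\neq a}\sigma_j\partial_j+i\sigma_j f_j$ — with no residual factors — is the delicate point, and is where one must be most careful to match the chiral representation used in \cite{martinetti2022lorentzian}.
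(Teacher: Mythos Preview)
Your approach is essentially the paper's: compute $D^{\mu a}$ and $F^{\mu a}$ explicitly and plug into the preceding lemma. Your values for the case $a=0$ ($D^{00}=0$, $F^{00}=2\sigma^2$, $D^{j0}=2\sigma^2\sigma^j$, $F^{j0}=0$) are correct and already contain the factor $2$. The paper organises the $a\neq 0$ case a bit differently from what you sketch: rather than invoking a single ``transpose identity'' it treats $a=2$ (where ${}^T\tilde\sigma^2=-\tilde\sigma^2$) and $a=1,3$ (where ${}^T\tilde\sigma^a=\tilde\sigma^a$) separately, in each case commuting ${}^T\tilde\sigma^a$ past $\sigma^2$ to factor out $\sigma^2\tilde\sigma^a$ on the left and then using $\sigma^\mu\pm\tilde\sigma^\mu\in\{0,\,2\sigma^\mu,\,2\tilde\sigma^\mu\}$. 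That is where the factor $2$ actually comes from.

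The one genuine misstep in your proposal is the Gra{\ss}mann paragraph. The overall factor $2$ in the proposition does \emph{not} come from antisymmetrising ${}^T\tilde\zeta M\tilde\zeta$; it is already present in your own formulas $D^{j0}=2\sigma^2\sigma^j$, $F^{00}=2\sigma^2$, and likewise in the $a\neq0$ case it arises from the identities $\sigma^\mu\pm\tilde\sigma^\mu$. The identity ${}^T\tilde\zeta M\tilde\zeta=\tfrac12\,{}^T\tilde\zeta(M-{}^TM)\tilde\zeta$ can only kill the symmetric part; it never produces a doubling. The paper's proof does not invoke Gra{\ss}mann antisymmetrisation at all --- it simply substitutes the computed $D^{\mu a}\partial_\mu-F^{\mu a}f_\mu$ into the lemma and uses \eqref{eq:defsigmamu} to convert $\sigma^\mu,\tilde\sigma^\mu$ into $\sigma_j$. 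If you keep your (correct) computation and drop the Gra{\ss}mann argument, your proof matches the paper's.
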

\begin{proof}
Since ${}^T\!\tilde\sigma^0=\tilde\sigma^0$ 
commutes with any $\sigma^\mu$, one has
\begin{align*}
  D^{\mu0} \partial_\mu&= \sigma^2\tilde\sigma^0(\sigma^\mu-\tilde\sigma^\mu)\partial_\mu=
  -2\sigma^2\tilde\sigma^0\sum_{\mu\neq
  0}\tilde\sigma^\mu\partial_\mu;\\
F^{\mu 0}f_\mu &= \sigma^2\tilde\sigma^0(\sigma^\mu+\tilde\sigma^\mu)f_\mu =
  2\sigma^2\tilde\sigma^0\tilde\sigma^0 f_0.
\end{align*}
Since ${}^T\!\tilde\sigma^2=-\tilde\sigma^2$ commutes with $\tilde\sigma^\mu$ 
for $\mu=0, 2$, anticommutes for $\mu=1,3$,
one~has
 \begin{align*}
  D^{\mu2}\partial_\mu&= \left(\sigma^2\sigma^\mu \tilde\sigma^2
     +\tilde\sigma^2\sigma^2\tilde\sigma^\mu\right)\partial_\mu=
                        \sigma^2\tilde\sigma^2 \left(
                        \sum_{\mu=0 ,2}(\sigma^\mu +\tilde\sigma^\mu)+ \sum_{\mu=1,3}
     (\tilde\sigma^\mu- \sigma^\mu)\right)\partial_\mu\\
&=
                        2\sigma^2\tilde\sigma^2\sum_{\mu\neq 2}\tilde\sigma^\mu\partial_\mu;\\
F^{\mu 2} f_\mu&=  \left(\sigma^2\sigma^\mu \tilde\sigma^2-
     \tilde\sigma^2\sigma^2\tilde\sigma^\mu\right)f_\mu=  
 \sigma^2\tilde\sigma^2\left(\sum_{\mu=0,2}(\sigma^\mu-\tilde\sigma^\mu)-\sum_{\mu=1,3}(\sigma^\mu+\tilde\sigma^\mu)\right)f_\mu
   \\
&
=-2\sigma^2\tilde\sigma^2\,\tilde\sigma^2  f_2.
  \end{align*}
Finally, since  ${}^T\!\tilde\sigma^a=\tilde\sigma^a$ for $a=1,3$ commutes with
$\tilde\sigma^\mu$ for $\mu=0, a$, anticommutes for $\mu=2, b$ (with
$b=1$ if $a=3$ and vice-versa), one gets
 \begin{align*}
  D^{\mu a}\partial_\mu&= \left(\sigma^2\sigma^\mu \tilde\sigma^a-
     \tilde\sigma^a\sigma^2\tilde\sigma^\mu\right)\partial_\mu=\sigma^2\tilde\sigma^a\left(\sum_{\mu=0,a}  (\sigma^\mu
   +  \tilde\sigma^\mu) + \sum_{\mu= 2, b}(\tilde\sigma^\mu - \sigma^\mu)\right)\\ &=
                        2\sigma^2 \tilde\sigma^a \sum_{\mu\neq a}\tilde\sigma^\mu\partial_\mu;\\
F^{\mu a} f_\mu&=  \left(\sigma^2\sigma^\mu \tilde\sigma^a+
     \tilde\sigma^a\sigma^2\tilde\sigma^\mu\right)f_\mu=  \sigma^2
                 \tilde\sigma^a\left(\sum_{\mu=0,a} (\sigma^\mu
                 -\tilde\sigma^\mu) - \sum_{\mu=2, b}(\sigma^\mu+
                 \tilde\sigma^\mu)\right) f_\mu,\\
&=                   -     2\sigma^2\tilde\sigma^a\, \tilde\sigma^a f_a.
  \end{align*}
The results then  follows from \eqref{eq:feract1} together with \eqref{eq:defsigmamu}.
\end{proof}

For $R=\gamma^0$, one retrieves the result of
\cite[Prop~3.5]{martinetti2022lorentzian}. Namely, on
\begin{equation}
\zeta(x_0, {\bf x})=e^{\pm if_0x_0}\xi({\bf x}) \quad\text{ with } \quad \xi({\bf x})=
\xi(x_1, x_2, x_3),
\end{equation}
 the operator $(if_0 -
\sum_{j\neq 0}\sigma_j \partial_j)$ in \ref{prop:fermactfinal} coincides
with the operator \linebreak $(\partial_0\pm\sum_{j=1}^3\sigma_j\partial_j)$ in the Weyl
action \eqref{eq:weyl}. Hence, modulo a doubling of the algebra
discussed below,  a plane wave solution of the equation
of motion obtained from the twisted fermionic action on a riemannian
manifold coincides with a
plane wave solution of the Weyl equation in lorentzian
signature.\newpage

In case $R=\gamma^a\neq \gamma^0$  is another Dirac matrix,
then  the operator
\begin{equation}
\partial_0
+ i\sum_{j\neq a}\sigma_j\partial_j + i\sigma_af_a
\end{equation}
that appears in proposition \ref{prop:fermactfinal},  applied on
 \begin{equation}
\zeta(x_0, {\bf x})=e^{f_a x_a}\xi(x_0, x_{i\neq a})
\end{equation}
coincides with the operator
\begin{equation}
(\partial_0 +
i\sum_{j=1}^3\sigma_j\partial_j).
\end{equation}
The latter appears in the euclidean
Weyl action, obtained from \eqref{eq:weyl} substituting the Pauli
matrices $\sigma_j$ with
their lorentzian counterpart $i\sigma_j$. In other
terms, although for $R=\gamma^a\neq \gamma^0$ the derivative along
$x_a$ is replaced with the component $f_a$ of the twisted
fluctuation, this does not correspond to a change of signature. To
summarise: 
\begin{corollary}
\label{corr:rgamma0}
  In $KO$-dimension $4$, the only choice, for $R$, of an odd product of
  Dirac matrices that implements a change of signature is $R=\gamma^0$. 
\end{corollary}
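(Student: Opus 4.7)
The plan is to reduce the corollary to a direct comparison of the two explicit formulas given in Proposition \ref{prop:fermactfinal}. First, I would recall that among the odd products \eqref{eq:defR} of Dirac matrices on a $4$-manifold, the antisymmetry requirement of $\frak A$ established in Lemma \ref{skewDom} (together with $\epsilon\epsilon'=-1$ in $KO$-dimension $4$) forces $l$ even, hence restricts $R$ to a single Dirac matrix $R=\gamma^a$ with $a\in\{0,1,2,3\}$. So the only question is which of the four candidates induces a change of signature.

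Second, I would test each candidate by plugging an appropriate plane-wave ansatz into the corresponding operator appearing in Proposition \ref{prop:fermactfinal} and comparing with the Lorentzian Weyl operator $(\partial_0 \pm \sum_{j=1}^3 \sigma_j\partial_j)$ of \eqref{eq:weyl} and its Euclidean counterpart $(\partial_0 + i\sum_{j=1}^3 \sigma_j\partial_j)$. For $R=\gamma^0$, the natural ansatz $\zeta(x_0,\mathbf{x})=e^{\pm i f_0 x_0}\xi(\mathbf{x})$ turns the operator $(if_0-\sum_{j\neq 0}\sigma_j\partial_j)$ into the Lorentzian Weyl operator acting on $\xi$, with $f_0$ playing the role of energy. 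For $R=\gamma^a$ with $a\neq 0$, the relevant ansatz is the \emph{real} exponential $\zeta=e^{f_a x_a}\xi(x_0,x_{i\neq a})$, which converts the operator $(\partial_0 + i\sum_{j\neq a}\sigma_j\partial_j + i\sigma_a f_a)$ into $(\partial_0+i\sum_{j=1}^3\sigma_j\partial_j)$ acting on $\xi$: this is the \emph{Euclidean} Weyl operator, and $f_a$ decorates an exponentially decaying/growing mode rather than a plane wave.

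Third, I would conclude that the distinguishing feature is the nature of the exponential absorbing the missing derivative: an imaginary exponent (Wick rotation, genuine Lorentzian plane wave) only in the case $a=0$, a real exponent (Euclidean damping) for $a\neq 0$. Hence only $R=\gamma^0$ implements a change of signature.

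The main obstacle is conceptual rather than computational: one must fix what is meant by ``implementing a change of signature'' in a way that makes the statement falsifiable. The criterion I would adopt, following \cite{martinetti2022lorentzian}, is the existence of a plane-wave ansatz whose associated equation of motion coincides with the Weyl equation of a definite signature other than the starting Euclidean one. With this criterion, the three spatial choices $R=\gamma^{1,2,3}$ trivially give back Euclidean Weyl (no change), and only $R=\gamma^0$ yields the Lorentzian one; the corollary then follows with no additional calculation beyond Proposition \ref{prop:fermactfinal}.
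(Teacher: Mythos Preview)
Your proposal is correct and follows essentially the same approach as the paper: first reduce to $R=\gamma^a$ a single Dirac matrix via the antisymmetry constraint of Lemma~\ref{skewDom} in $KO$-dimension $4$, then use the explicit formulas of Proposition~\ref{prop:fermactfinal} together with the plane-wave ans\"atze $e^{\pm i f_0 x_0}\xi$ and $e^{f_a x_a}\xi$ to identify the resulting operator with the Lorentzian or Euclidean Weyl operator respectively. Your remark that the conceptual content lies in fixing the criterion for ``change of signature'' is exactly the point the paper makes in the discussion surrounding the corollary.
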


\begin{remark}
  \label{rem:double}
In order to suitably identify the lagrangian density obtained from
the twisted fermionic action with the Weyl lagrangian,  one needs to
consider the minimal twist of a doubled manifold, the latter being the product of a
manifold by a two point space (see \cite[\S 4]{martinetti2022lorentzian}). This does not interfere with the
conclusion of corollary \ref{corr:rgamma0}.  
\end{remark}

 \subsection{Lorentz symmetry}
\label{sec:lorentz}

As noted in
\cite{martinetti2022lorentzian}, the fermionic action for the minimal twist
of a $4$ dimensional manifold $\Man$  is invariant
under the action of the (restricted) Lorentz group $SO^+(1,3)$ simultaneously on spinors and on
the twisted covariant Dirac operator:
 \begin{equation}
\label{eq:aclorentz}
\ds_{\omega_f}\longmapsto S[\Lambda] \, \ds_{\omega_f} \,S[\Lambda]^{-1},\qquad   \psi\longmapsto
S[\Lambda]\psi \quad \forall \psi\in L^2(\Man, S) \end{equation}
where $S[\Lambda]$ is the spin representation of $\Lambda=
\exp(t_{ab}\Lambda^{ab})\in SO^+(1,3)$,
with $t_{ab}\in\mathbb R$ and $\Lambda^{ab}$ the generators of the
Lorentz group ($a,b=0, 1, 2, 3$). Explicitly, 
\begin{equation}
\label{eq:spinlorentz}
  S[\Lambda]=\exp(\frac{i}{2}t_{ab}T^{ab})
\end{equation}
where the spin representation of the
generators of $SO^+(1,3) $ are the commutator
\begin{equation}
\label{eq:genT}
T^{ab}\defeq -\frac i4\,  [\gamma^a_L, \gamma^b_L]
\end{equation}
of the Lorentzian Dirac matrices
\begin{equation}
\label{eq:lorgam}
 \gamma^0_L=\gamma^0\qquad \qquad\text{and}\qquad\qquad \gamma^j_L=i\gamma^j\qquad\text{for}\qquad j\in\{1, 2, 3\}.
 \end{equation}

 \begin{remark}
   The action \eqref{eq:aclorentz} of the Lorentz group is the usual implementation of the
   relativistic invariance of the Dirac equation, and for the minimal
   twist of the spectral triple of electrodynamic, it allows to
   interpret not only the component $f_0$ of the twisted fluctuation
   as an energy, but also $f_j$, $j=1, 2, 3$ as the
   corresponding component of the lorentzian energy-momentum
   $4$-vector in a boosted frame \cite{martinetti2022lorentzian}.
 \end{remark}
\newpage
What was missed in \cite{martinetti2022lorentzian} is that
$S[\Lambda]$ actually is a $\rho$-unitary operator. To see it, we
begin with an easy relation between euclidean and
lorentzian Dirac matrices.
\begin{lemma} 
\label{lem:loentz}
Let $\gamma^b_L$ with $b=0, 1, 2, 3$  be the Lorentzian Dirac matrices
  \eqref{eq:lorgam}, and  $\gamma^a$ an euclidean Dirac matrix. Then  
  \begin{equation}
\label{eq:lorentz+adj}
    \gamma^a\, (\gamma_L^b)^\dag\, \gamma^a = \gamma_L^b \quad \forall
    b=0, 1, 2, 3
  \end{equation}
if, and only if, $a=0$.
\end{lemma}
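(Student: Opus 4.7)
The plan is to reduce \eqref{eq:lorentz+adj} to elementary manipulations of the euclidean Clifford relations, using the fact that the euclidean Dirac matrices are Hermitian and self-inverse while the Lorentzian ones satisfy $(\gamma^0_L)^\dagger = \gamma^0$ and $(\gamma^j_L)^\dagger = -\gamma^j_L$ for $j=1,2,3$ (directly from \eqref{eq:lorgam}).

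First I would test sufficiency: taking $a=0$, for $b=0$ the identity is trivial since $(\gamma^0)^2=\I$; for $b=j\neq 0$ one uses $(\gamma^j_L)^\dagger = -i\gamma^j$ and the anticommutation $\gamma^0 \gamma^j = -\gamma^j\gamma^0$, so
\begin{equation*}
\gamma^0(\gamma^j_L)^\dagger\gamma^0 = -i\,\gamma^0\gamma^j\gamma^0 = i\gamma^j = \gamma^j_L.
\end{equation*}

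For necessity, it suffices to exhibit a single $b$ for which the identity fails when $a\neq 0$. Take $b=0$: then $(\gamma^0_L)^\dagger = \gamma^0$ and, for any $a\neq 0$, the euclidean matrices $\gamma^a$ and $\gamma^0$ anticommute, so
\begin{equation*}
\gamma^a (\gamma^0_L)^\dagger \gamma^a = \gamma^a\gamma^0\gamma^a = -\gamma^0(\gamma^a)^2 = -\gamma^0 \neq \gamma^0_L.
\end{equation*}

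Hence \eqref{eq:lorentz+adj} holds for every $b$ if and only if $a=0$. No step looks like a genuine obstacle: the whole argument rests on the euclidean Clifford identities $\{\gamma^a,\gamma^b\}=2\delta^{ab}\I$ together with the explicit Hermiticity properties of the Lorentzian matrices recalled above.
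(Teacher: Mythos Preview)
Your proof is correct and follows essentially the same route as the paper: both arguments rest on the euclidean Clifford relations and the Hermiticity properties $(\gamma^0_L)^\dagger=\gamma^0$, $(\gamma^j_L)^\dagger=-\gamma^j_L$, using the case $b=0$ to rule out $a\neq 0$ and a direct check for sufficiency when $a=0$. The only difference is organisational: the paper tabulates all cases of $\gamma^a(\gamma^b_L)^\dagger\gamma^a$ for $b\neq 0$ and $b=0$ separately before drawing the conclusion, whereas you proceed more economically by splitting into sufficiency and necessity and exhibiting only the single failing $b$ needed for the latter.
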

\begin{proof}
  The euclidean Dirac matrices are selfadjoint, square to $\I$, and each of them
  anticommutes with the others (and commutes with itself), therefore
  for $b=1,2,3$
  \begin{equation}
    \label{eq:rhoadgamma1}
    \gamma^a \,(\gamma_L^b)^\dag\, \gamma^a =\left\{
      \begin{array}{ll}
        -i\gamma^a \gamma^b \gamma^a=-i\gamma^b=-\gamma^b_L&\text{ if }a= b,\\
        -i\gamma^a \gamma^b \gamma^a=i\gamma^b=\gamma^b_L&\text{ if  }a\neq b
      \end{array}\right.
  \end{equation}
while for $b=0$ one has \begin{equation}
    \label{eq:rhoadgamma2}
\gamma^a (\gamma_L^0)^\dag \gamma^a =    \gamma^a \gamma^0 \gamma^a
      =\left\{
        \begin{array}{ll}
          \gamma^0=\gamma^0_L& \text{ for } a=0,\\
        -\gamma^0=-\gamma^0_L& \text{ for } a=1, 2, 3.\end{array}\right.
  \end{equation}
From \eqref{eq:rhoadgamma2}, the only possibility that
\eqref{eq:lorentz+adj} holds true for $b=0$ is that $a=0$. One then
checks from \eqref{eq:rhoadgamma1} that \eqref{eq:lorentz+adj} holds
true also for $b=1, 2, 3$. Hence the result.
\end{proof}
Consider now the product $\langle \cdot , R \cdot \rangle$ on
$L^2(\Man, S)$ with $R$  a single Dirac matrix~$\gamma^a$. 
 \begin{proposition}
\label{prop:lorentz} The lorentzian Dirac matrices are unitary with respect to the product above if, and only if, $R=\gamma^0$.
 \end{proposition}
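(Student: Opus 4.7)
The plan is to reduce the proposition to Lemma \ref{lem:loentz} by translating the statement ``the Lorentzian Dirac matrices are unitary with respect to $\langle\cdot,R\cdot\rangle$'' into the algebraic identity appearing in that lemma. Concretely, with $R=\gamma^a$ a selfadjoint involution inducing the indefinite product on $L^2(\Man,S)$, the appropriate adjoint for this product is the Krein adjoint $M^+:=RM^\dagger R$. The Lorentzian Dirac matrices $\gamma_L^b$ satisfy the relevant ``unitarity'' condition precisely when $(\gamma_L^b)^+=\gamma_L^b$, i.e.
\begin{equation*}
R(\gamma_L^b)^\dagger R=\gamma_L^b\qquad\forall b=0,1,2,3.
\end{equation*}
This is the Dirac--hermiticity condition which, through the commutator $T^{ab}=-(i/4)[\gamma_L^a,\gamma_L^b]$ and the exponential \eqref{eq:spinlorentz}, is exactly what makes the spin Lorentz representation $S[\Lambda]$ act as a genuine Krein-unitary operator on spinors.

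Substituting $R=\gamma^a$ into this identity yields $\gamma^a(\gamma_L^b)^\dagger\gamma^a=\gamma_L^b$ for all $b\in\{0,1,2,3\}$, which is precisely the hypothesis of Lemma \ref{lem:loentz}. The lemma then immediately gives the equivalence with $a=0$, that is $R=\gamma^0$. Conversely, for $R=\gamma^0$ the same lemma supplies the required identity for every $b$, so both implications follow in one stroke.

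The only real step is the terminological clarification: ``unitary with respect to $\langle\cdot,R\cdot\rangle$'' is used here in the sense that the $\gamma_L^b$'s are their own Krein adjoints, which is the condition that propagates (through the commutator and the exponential) to a bona fide Krein-unitary representation of the restricted Lorentz group on $L^2(\Man,S)$. The main obstacle, if any, is not computational but expository, namely being explicit about this translation so that the reader does not confuse ``Krein-unitary'' with ``Krein-selfadjoint''. Once the correspondence is made, the proposition reduces to a single invocation of Lemma \ref{lem:loentz}, with no further computation required.
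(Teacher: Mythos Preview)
Your proposal is correct and follows essentially the same route as the paper: compute the Krein adjoint $(\gamma_L^b)^+=\gamma^a(\gamma_L^b)^\dagger\gamma^a$ and invoke Lemma~\ref{lem:loentz} to conclude $a=0$. You are in fact more careful than the paper in flagging the terminological point: the statement says ``unitary'' but what is actually checked (both in your argument and in the paper's own proof, and as confirmed by the subsequent use in Proposition~\ref{prop:Lor}) is Krein-\emph{selfadjointness} of the $\gamma_L^b$'s, which is the condition that makes $S[\Lambda]$ genuinely $\rho$-unitary after exponentiation.
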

  \begin{proof}
By definition of $\rho$-unitary, 
 $(\gamma_L^b)^+=\gamma^a\,\gamma_L^\dagger \gamma^a$.
Lemma \ref{lem:loentz} shows that this is equal to $\gamma_L^b$ for any $b=0, 1, 2, 3$ if and
only if $a=0$.
  \end{proof}
 
Let  $\calU_\rho(\Man, S)$  denote the group of $\rho$-unitary
operators \eqref{eq:OOounit} of $\calB(L^2(\Man ,S))$ for the twisted
product implemented by $R=\gamma^0$. It contains (the
representation~of) the group ${\cal U}_\rho$ of $\rho$-unitaries of
the algebra $\cinf\otimes\mathbb C^2$, but is bigger than it. 
  \begin{proposition}
 \label{prop:Lor}
 In the minimal twist of a $4$-dimensional, closed,
riemannian spin manifold with automorphism $\rho$ implemented by
 $R=\gamma^0$, the Lorentz group is a proper sub-group of
 $\calU_\rho(\Man, S)$.
 \end{proposition}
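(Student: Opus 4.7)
The plan is to prove this in two separate steps: (i) verify the inclusion by showing that every element of the spin representation $S[\Lambda]$ is $\rho$-unitary, and (ii) show strictness by exhibiting an explicit $\rho$-unitary of $\calB(L^2(\Man,S))$ that cannot lie in the image of $S$.

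For step (i), since $S[\Lambda]=\exp(\tfrac{i}{2}t_{ab}T^{ab})$ with real $t_{ab}$, the key point is to establish that each generator $T^{ab}$ is $\rho$-selfadjoint, i.e.\ $(T^{ab})^+=T^{ab}$. Then, because the involution $^+$ is antimultiplicative by \eqref{eq:invol} and antilinear (this is immediate from \eqref{eq:oplus} together with the antilinearity of $^\dagger$), one has $(\exp A)^+=\exp(A^+)$ term by term, so $\frac{i}{2}t_{ab}T^{ab}$ is $\rho$-skewadjoint, and $S[\Lambda]^+=S[\Lambda]^{-1}$. To establish $(T^{ab})^+=T^{ab}$, first note that the antimultiplicativity of $^+$ yields $[A,B]^+=-[A^+,B^+]$ for all operators $A,B$. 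Specialising to $A=\gamma_L^a$, $B=\gamma_L^b$ and invoking Proposition \ref{prop:lorentz} — which, crucially using the hypothesis $R=\gamma^0$, gives $(\gamma_L^a)^+=\gamma_L^a$ for all $a=0,1,2,3$ — one obtains $[\gamma_L^a,\gamma_L^b]^+=-[\gamma_L^a,\gamma_L^b]$. Combining this with the antilinearity acting on the factor $-\tfrac{i}{4}$ in \eqref{eq:genT} gives
\begin{equation*}
(T^{ab})^+ = \tfrac{i}{4}[\gamma_L^a,\gamma_L^b]^+ = -\tfrac{i}{4}[\gamma_L^a,\gamma_L^b] = T^{ab},
\end{equation*}
as desired. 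Since $S$ is a group homomorphism, the whole restricted Lorentz group is then contained in $\calU_\rho(\Man,S)$.

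For step (ii), I exhibit a $\rho$-unitary outside the image of the Lorentz representation. By Definition \ref{def:rhounit}, any $u_h=(h,1/\bar h)$ with $h\in C^\infty_*(\Man)$ non-constant is a $\rho$-unitary of $\cinf\otimes\mathbb{C}^2$, hence of $\calB(L^2(\Man,S))$. Such a $u_h$ acts fibrewise by multiplication by the position-dependent diagonal matrix (in the chiral basis) with entries $h(x)$ and $1/\bar h(x)$. In contrast, every operator in the image of $S$ is a fixed $4\times 4$ matrix, constant across $\Man$, and in particular commutes with multiplication by $C^\infty(\Man)$. Any identification $u_h=S[\Lambda]$ would therefore force $h$ to be constant, contradicting the choice of $h$. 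Hence the inclusion is strict.

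The only real bookkeeping is tracking the combined antilinearity and antimultiplicativity of $^+$ when computing $(T^{ab})^+$; once Proposition \ref{prop:lorentz} provides the key identity $(\gamma_L^a)^+=\gamma_L^a$, the rest is straightforward, and the witness for strictness is immediate from Definition \ref{def:rhounit}.
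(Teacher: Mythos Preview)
Your proof is correct. Part (i) is essentially identical to the paper's argument: both establish $(T^{ab})^+=T^{ab}$ via Proposition~\ref{prop:lorentz} and the antimultiplicativity/antilinearity of $^+$, then exponentiate. For part (ii) you take a genuinely different route. The paper observes that the generators $T^{ab}$ are block diagonal in the chiral basis, so every $S[\Lambda]$ is block diagonal; it then writes down the $\rho$-unitarity conditions for a general $4\times 4$ matrix and exhibits off-diagonal solutions (namely $\alpha=\delta=0$, $\beta,\gamma\in U(2)$). Your witness is instead any $u_h\in\calU_\rho$ with $h$ non-constant, distinguished from $S[\Lambda]$ by being position-dependent while the $t_{ab}$ in \eqref{eq:spinlorentz} are real constants. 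Your argument is shorter and avoids the matrix computation, and it has the pleasant feature of using witnesses already available inside $\pi(\calA)$. The paper's argument, on the other hand, produces $\rho$-unitaries that are constant matrices yet still outside the Lorentz group, which is a sharper statement (it shows the inclusion is proper even fibrewise, not merely because of the $C^\infty(\Man)$ degrees of freedom).
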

 \begin{proof}
By proposition \ref{prop:lorentz}, the lorentzian Dirac matrices are
$\rho$-adjoint. 
The same is true for the generators $T^{ab}$ \eqref{eq:genT}: being $+$ an involution, for any ${\cal O}, {\cal O}'$ in ${\cal
  B}(\HH)$ one has $[{\cal O}, {\cal O}']^+=-[{\cal
  O}^+, {{\cal O}'}^+]$ as well as $(i\I)^+=-i\I$, therefore
 \begin{equation*}
 (T^{ab})^+=-[\gamma^a_L,
 \gamma^b_L]^+(\frac i4\I)^+=[(\gamma^a_L)^+,
 (\gamma^b_L)]^+(-\frac i4\I)= -\frac i4[\gamma^a_L,
 \gamma^b_L]= T^{ab}.
 \end{equation*}
 Taking the exponential,and remembering that $({\cal O}^n)^+= ({\cal
 O}^+)^n$, one gets
 \begin{align}
 S[\Lambda]^+&=\sum_{n=0}^{\infty}\frac{1}{n!}\left(\left(\frac{i}{2}t_{ab}T^{ab}\right)^n\right)^+=\sum_{n=0}^{\infty}\frac{1}{n!}\left(\left(\frac{i}{2}t_{ab}T^{ab}\right)^+\right)^n,\\
&  = \sum_{n=0}^{\infty}\frac{1}{n!}\left(-\frac{i}{2}t_{ab}T^{ab}\right)^n =\exp(-\frac{i}{2}t_{ab}T^{ab})=S[\Lambda]^{-1}.
 \end{align}
\newpage
\noindent  Thus
$S[\Lambda]^+S[\Lambda]=S[\Lambda]S[\Lambda]^+=\bbbone$,
meaning that $S[\Lambda]\in\calU_\rho(\calB(\calH))$.
 
To show that the Lorentz group is a proper sub-group of
 $\calU_\rho(\Man, S)$, it is enough to exhibit one element of the
 latter which is not in the Lorentz group.
From the form  \eqref{EDirac} of the Dirac matrices, one
checks that the generators $T^{ab}$ are block diagonal, so that 
 \begin{equation*}
 S[\Lambda]=\begin{pmatrix}
 \Lambda_{\, +} & 0  \\
 0 & \Lambda_{\, -}  \\
 \end{pmatrix}
 \end{equation*}
with
 $\Lambda_{\, \pm}$ suitable sums of products of Pauli
 matrices. The point is that not all  $\rho$-unitary operators
 $U_\rho$ on $L^2(\Man, S)$ are
 block diagonal. Writing
 \begin{equation*}
 U_\rho=\begin{pmatrix}
 \alpha& \beta  \\
 \gamma & \delta
 \end{pmatrix}
 \end{equation*}
 with $\alpha, \beta, \gamma, \delta$ four $2\times 2$ complex
 matrices, one has 
 \begin{equation}
   U_\rho^+=\gamma^0 U_\rho^\dag \gamma^0 =
   \begin{pmatrix}
     \delta^\dag & \gamma^\dag \\ \beta^\dag & \alpha^\dag
   \end{pmatrix}
 \end{equation}
So $U_\rho$ is $\rho$-unitary if and only if 
\begin{equation}
  \alpha  \delta^\dagger+\beta\beta^\dag =\gamma\gamma^\dag + \delta\alpha^\dag=\bbbone_2, \qquad
  \alpha\gamma^\dag +\beta\alpha^\dagger=\gamma\delta^\dagger+ \delta\beta^\dagger=0.
\end{equation}
A first set of solutions is given by $\beta = \gamma = 0$ and
$\alpha\delta^\dag=\I$, which includes Lorentz transformations. A
second set of solutions, which are not Lorentz transformations, is
given by  $\alpha = \delta=0 $ and $\beta$, $\gamma$ in the unitary
group $U(2)$.
 \end{proof}

The action \eqref{eq:aclorentz} of the Lorentz group is the conjugate
action with respect to the involution $+$ of $\rho$-unitaries, hence
this is one of the non-entangled group actions mentioned after lemma \ref{prop:nonent}.
This means that the action of the Lorentz group is neither a
twisted-fluctuation nor a gauge transformation.
 
 \begin{remark}
 The lorentzian Dirac matrices are antiselfadjoint, except $\gamma^0$
 which is selfadjoint. Hence for $j, k=1, 2, 3$, one has 
 \begin{equation}
(T^{jk})^\dag = \frac i4 [\gamma_L^j,
 \gamma_L^k]^\dag = -\frac i4 [(\gamma_L^j)^\dag,
 (\gamma_L^k)^\dag]=-\frac i4 [(\gamma_L^j),
 (\gamma_L^k)]= T^{jk}.
\end{equation}
Therefore for $\Lambda=t_{jk}T^{jk}$, that is a spatial rotation, one has
\begin{equation}
  S[\Lambda]^\dag = \exp(-\frac i2 t_{jk}T^{jk})=S[\Lambda]^{-1}
\end{equation}
meaning that $S[\Lambda]$ is not only $\rho$-unitary but also unitary.
On the contrary the generators $T^{0j}$ are antiselfadjoint,
meaning that for boosts $\Lambda=t_{0j}T^{oj}$, the spin
representation $S[\Lambda]$ is selfadjoint, hence $\rho$-unitary but not unitary.
 \end{remark}

One may extend the action \eqref{eq:aclorentz} of the Lorentz group to the whole  of
$\cal{U}_\rho(\Man, S)$, but there is no guarantee that this leaves the fermionic
action invariant, unless one also imposes the transformation of the
real structure
\begin{equation}
\label{eq:transfoJ}
  \J\to U_\rho \,\J \,U_\rho \quad\text{ for }\; U_\rho\in{\cal U_\rho}(\Man, S)
\end{equation}
(in that case the transformation is simply a change of base
by a matrix unitary for the $\rho$-product, hence the fermionic action
is automatically conserved). For $U_\rho=S[\Lambda]$ in the  Lorentz
group, the condition \eqref{eq:transfoJ} is automatically satisfied
\cite[lemma 6.1]{martinetti2022lorentzian}.

\newpage \subsection{Spectral action}
\label{subsec:specact}

The spectral action for a usual spectral triple $(\A, \HH, D)$  is \cite{chamseddine1997spectral} 
\begin{align}
\lim_{\Lambda\to\infty}\Tr f\left(\frac{D^2}{\Lambda^2}\right)
\end{align}
where $\Lambda$ is an energy scale and $f$ a smooth approximation of
the characteristic function of the interval $[0, 1]$. This action is
invariant under the map
\begin{equation}
D\mapsto UDU^\dagger\;  \text{ with }\; U \text{ a unitary on
${\cal B}(\HH)$},
\end{equation}
since $(UDU^\dagger)^2=UD^2U^\dagger$ has the same trace
as $U^2$. Gauge transformations for usual spectral triples are of this kind.

For a minimally twisted spectral
triple, one should be careful that a gauge transformation \eqref{eq:newlook} does not
necessarily preserves the selfadjointness of $D$, so there is no
guaranty to make sense of 
$f(\text{Ad}(v) D \text{Ad}(v)^+)$ by the spectral theorem. A solution
is to work with $DD^\dagger$ instead of $D^2$, thus defining the
action as \cite{martinetti2022lorentzian} 
\begin{align}
\label{eq:specact}
\lim_{\Lambda\to\infty} 
\Tr f\left(\frac{DD^\dagger}{\Lambda^2}\right).
\end{align}
It is invariant under a twisted gauge transformation
\begin{equation}
D \mapsto
VDV^+\;  \text{ with } V:=\text{Ad}(v),
\end{equation}
since, using $(V^+)^\dag=\rho(V)$, one has
\begin{equation}
  (V D V^+)(VDV^+)^\dag=VD\rho(V)^\dagger\rho(V)D^\dagger V^\dagger =VD D^\dagger V^\dagger 
\end{equation}
has the same trace as $DD^\dagger$. But it has no
reason to be invariant under a Lorentz transformation 
\eqref{eq:aclorentz}


A Lorentz invariant action could be obtained considering $DD^+$ instead of
$DD^\dagger$. Indeed, under the map
\begin{equation}
D\mapsto U_\rho D U_\rho^+\; \text{ with }\;
U_\rho \text{ twisted unitary},
\end{equation}
one checks that 
\begin{equation}
  (U_\rho D U_\rho^+)\,(U_\rho D U_\rho^+)^+=U_\rho DD^+ U_\rho^+
\end{equation}
has the same trace as $DD^+$.  The problem is that $DD^+$ has no
reason to be selfadjoint, hence one is back to the problem of the non
selfadjointness of the argument of the function $f$ (see \cite{devastato2018lorentz} for more
on this). As a curiosity, notice that the trace of $DD^+$ is also invariant under the
map \eqref{eq:twistact} that generates the co-exact torsion, namely
\begin{equation}
  D\mapsto U_\rho D U_\rho^\dag \; \text{ with } U_\rho \text{ twisted unitary}.
\end{equation}
Indeed,  
\begin{equation}
  (U_\rho\, D \,U_\rho^\dag) \, (U_\rho D U_\rho^\dag)^+ = U_\rho\, D
  \, U^\dag_\rho\, \rho(U_\rho) D^+ U_\rho^+= U_\rho DD^+ U_\rho^+
\end{equation}
has the same trace as $DD^+$.

To conclude, we compute the action \eqref{eq:specact} for the minimal twist of a
manifold. The same action has been computed in \cite{TwistSpontBreakDevastaMartine2017}, but at the time we had
not understood that the new $1$-form field (called vector
field there) was a torsion, and its geometrical meaning was obscured by the
coupling with the degrees of freedom of the finite geometry of the
Standard Model. 

Let us consider the selfadjoint twisted covariant Dirac operator 
$\ds_{\omega_f}$ \eqref{eq:domegaf}. One has 
\begin{align*}
\ds_{\omega_f}^2&=(-i\gamma\indices{^\mu}\tilde{\nabla}_\mu^\Sp-i\gamma^\mu f_\mu \gamma)(-i\gamma\indices{^\nu}\tilde{\nabla}\indices{_\nu^\Sp}-i\gamma^\nu f_\nu \gamma)\\
&=  -\gamma\indices{^\mu}\tilde{\nabla}_\mu^\Sp\gamma\indices{^\nu}\tilde{\nabla}\indices{_\nu^\Sp}-\{\gamma\indices{^\mu}\tilde{\nabla}_\mu^\Sp, \gamma^\nu f_\nu \gamma\}+\gamma^\mu \gamma^\nu f_\mu f_\nu\\
&=\tilde{\Delta}^\Sp+\frac{1}{4}s-\{\gamma\indices{^\mu}\tilde{\nabla}_\mu^\Sp, \gamma^\nu f_\nu \gamma\}+(\frac{1}{2}[\gamma^\mu,\gamma^\nu]+g^{\mu\nu})f_\mu f_\nu\\
&=\tilde{\Delta}^\Sp+\frac{1}{4}s-\{\gamma\indices{^\mu}\tilde{\nabla}_\mu^\Sp, \gamma^\nu f_\nu \gamma\}+ f^\mu f_\mu 
\end{align*}
with
$\tilde{\Delta}^\Sp=-g^{\mu\nu}(\tilde{\nabla}_\mu^\Sp\tilde{\nabla}_\nu^\Sp-
\Gamma^{\lambda}_{\mu \nu}\tilde{\nabla}_\lambda^\Sp)$ the Laplacian
associated with the spin connection $\tilde{\nabla}\indices{^\Sp}$,
where we used Lichnerowicz formula in the third line, $s$ being the scalar curvature, see
\cite[theorem 9.16]{gracia2013elements}. The skew symmetry of $ \tilde{\Gamma}^{a}_{\mu b}$ in $a$ and $b$ implies that $\tilde{\nabla}_\mu^\Sp=\partial_\mu+ \frac{1}{4} \tilde{\Gamma}^{a}_{\mu b}\gamma_a\gamma^b $. Using  $\tilde{\nabla}_\mu^\Sp(\gamma^\nu)=c(\tilde{\nabla}_\mu dx^\lambda)=-\gamma^\lambda \Gamma^{\nu}_{\mu \lambda}$, we obtain   
\begin{align*}
\!\!\! \!\!\! \!\!-\{\gamma^\mu\tilde{\nabla}_\mu^\Sp, \gamma^\nu f_\nu \gamma\}&=-\gamma^\mu\tilde{\nabla}_\mu^\Sp \gamma^\nu f_\nu \gamma - \gamma^\nu f_\nu \gamma\gamma^\mu\tilde{\nabla}_\mu^\Sp\\
&= -\gamma^\mu(\tilde{\nabla}_\mu^\Sp \gamma^\nu) f_\nu \gamma-\gamma^\mu\gamma^\nu(\tilde{\nabla}_\mu^\Sp f_\nu) \gamma-\gamma^\mu\gamma^\nu f_\nu(\tilde{\nabla}_\mu^\Sp \gamma)-\gamma[\gamma^\mu,\gamma^\nu]f_\nu \tilde{\nabla}_\mu^\Sp\\
&=\gamma^\mu \gamma^\lambda \Gamma^{\nu}_{\mu \lambda}f_\nu \gamma -\gamma^\mu\gamma^\nu(\tilde{\nabla}_\mu^\Sp f_\nu) \gamma-\gamma^\mu\gamma^\nu f_\nu(\tilde{\nabla}_\mu^\Sp \gamma)-\gamma[\gamma^\mu,\gamma^\nu]f_\nu \tilde{\nabla}_\mu^\Sp.
\end{align*}
Following \cite[sec 11.2]{ConnMarc08b}, $\ds_{\omega_f}^2$ can be written as $\ds_{\omega_f}^2=-(g^{\mu\nu}\partial_\mu\partial_\nu+a^\lambda\partial_\lambda+b)$ with $a$ and $b$ matrix valued functions explicitly given by $a^\lambda=a^\lambda_1+a^\lambda_2$ and $b=-\frac{1}{4}s- f^\mu f_\mu+b_1+b_1^\prime+b_2$ with:
\begin{align*}
&a^\lambda_1=\gamma[\gamma^\lambda,\gamma^\nu]f_\nu &\\
&a^\lambda_2=   g^{\mu\lambda}\frac{1}{4} \tilde{\Gamma}^{a}_{\mu b}\gamma_a\gamma^b- g^{\mu\nu}\Gamma^{\lambda}_{\mu \nu}&\\
&b_1= \gamma^\mu\gamma^\nu(\tilde{\nabla}_\mu^\Sp f_\nu) \gamma &\\
&b_1^\prime=-\gamma^\mu \gamma^\lambda\Gamma^{\nu}_{\mu \lambda}f_\nu \gamma +\gamma^\mu\gamma^\nu f_\nu(\tilde{\nabla}_\mu^\Sp \gamma)+\frac{1}{4} \tilde{\Gamma}^{a}_{\mu b}f_\nu\gamma[\gamma^\mu,\gamma^\nu]\gamma_a\gamma^b &\\
&b_2= g^{\mu\nu}\frac{1}{4} \partial_\mu(\tilde{\Gamma}^{a}_{\mu b}\gamma_a\gamma^b)+g^{\mu\nu}\frac{1}{8}\tilde{\Gamma}^{a}_{\mu b}\tilde{\Gamma}^{c}_{\nu d}\gamma_a\gamma^b\gamma_c\gamma^d-g^{\mu\nu}\frac{1}{4}\Gamma^\lambda_{\mu\nu}\tilde{\Gamma}^a_{\lambda b}\gamma_a\gamma^b.&
\end{align*}
with $a_1^\lambda\partial_\lambda$, $b_1$ and $b_1^\prime$ coming from the term $-\{\gamma\indices{^\mu}\tilde{\nabla}_\mu^\Sp, \gamma^\nu f_\nu \gamma\}$, and $a_2^\lambda\partial_\lambda$ and $b_2$ coming from $\tilde{\Delta}^\Sp$.
Defining the connection
$\bar{\nabla}_\mu=\partial_\mu+\bar{\omega}_\mu$ with
$\bar{\omega}_\mu=\frac{1}{2}g_{\mu\nu}(a^\nu+g^{\lambda\rho}\Gamma^\nu_{\lambda\rho})$,
one gets 
\begin{align*}
\ds_{\omega_f}^2=-(g^{\mu\nu}\bar{\nabla}_\mu\bar{\nabla}_\nu+E)
\end{align*}
with $E=b-g^{\mu\nu}(\partial_\mu\bar{\omega}_\nu+\bar{\omega}_\mu \bar{\omega}_\nu-\bar{\omega}_\lambda \Gamma^\lambda_{\mu\nu} )=b+\tilde{b}$.

The spectral action is 
\begin{align}
\lim_{\Lambda\to \infty}\Tr \exp(-\ds_{\omega_f}^2 / \Lambda^2)\simeq \sum_{n\geq 0} \Lambda^{2m-n}a_{n}(\ds_{\omega_f}^2)
\end{align}
Following \cite[theorem 3.3.1]{gilkey2003asymptotic} (or
\cite{chamseddine2010noncommutative}), we compute the first three non
vanishing Seeley-DeWitt coefficients
\begin{align*}
a_0(\Dir_{\omega_f}^2)&=\frac{1}{(4\pi)^m}\int_\Man \Tr(\bbbone_{2^m}) dv=\frac{1}{(2\pi)^m}\int_\Man dv\\
a_2(\Dir_{\omega_f}^2)&=\frac{1}{(4\pi)^m}\int_\Man \Tr(E+\frac{s}{6}) dv=\frac{1}{(4\pi)^m}\int_\Man (-\frac{2^ms}{12}-2^mf^\mu f_\mu +b_1+b_2+\tilde{b})dv\\
a_4(\Dir_{\omega_f}^2)&=\frac{1}{360(4\pi)^m}\int_\Man \Tr(12\bar{\Delta} s+5s^2-2R_{\mu\nu}R^{\mu\nu}+2R_{\mu\nu\lambda\rho}R^{\mu\nu\lambda\rho}+60sE \\
&\qquad\qquad\qquad\qquad +60\bar{\Delta} E+180E^2+30\bar{\Omega}_{\mu\nu}\bar{\Omega}^{\mu\nu}) dv
\end{align*}
with $\bar{\Omega}_{\mu\nu}=\partial_\mu \bar{\omega}_\nu -\partial_\nu\bar{\omega}_\mu+[\bar{\omega}_\mu,\bar{\omega}_\nu]$ the field strenght of $\bar{\nabla}$ and $\bar{\Delta}\defeq \bar{\nabla}_\mu\bar{\nabla}^\mu$. The contribution of the terms $\gamma^\mu\gamma^\nu(\tilde{\nabla}_\mu^\Sp f_\nu) \gamma$ and $\gamma^\mu \gamma^\lambda\Gamma^{\nu}_{\mu \lambda}f_\nu \gamma$ in $a_2$ disappear since $\Tr(\gamma^\mu\gamma^\nu\gamma)=0$.\\

Using $\Tr(\Delta M)=\Delta\Tr(M)$ for any matrix $M$ and Laplacian
$\Delta$, together with the vanishing of the integral of the Laplacian
of a function over a closed manifold  by Stokes' theorem, the development of $a_4(\ds_{\omega_f}^2)$ gives:
\begin{align*}
&\!\!\!\!\!\!\!\!\!\!  \frac{1}{360(4\pi)^m}\int_\Man \Tr(5s^2-2R_{\mu\nu}R^{\mu\nu}+2R_{\mu\nu\lambda\rho}R^{\mu\nu\lambda\rho}-15s^2-60sf^\mu f_\mu+180(f^\mu f_\mu)^2\\
&\qquad\qquad\quad\qquad+\frac{45}{4}s^2+180\gamma^\mu\gamma^\nu\gamma^\rho\gamma^\lambda(\tilde{\nabla}_\mu^\Sp f_\nu)(\tilde{\nabla}_\rho^\Sp f_\lambda)+45sf^\mu f_\mu+30\bar{\Omega}_{\mu\nu}\bar{\Omega}^{\mu\nu}\\
&\qquad\qquad\quad\qquad+60s(b_1+b_2+\tilde{b})+180(E^2-(b_1^\prime)^2-(-\frac{1}{4}s-f^\mu f_\mu)^2)) ) dv\\
&\!\!\!\!\!\!\!\!\!\!\!\!=\frac{1}{360(4\pi)^m}\int_\Man (\frac{5}{4}s^2-2R_{\mu\nu}R^{\mu\nu}+2R_{\mu\nu\lambda\rho}R^{\mu\nu\lambda\rho}-15sf^\mu f_\mu+180(f^\mu f_\mu)^2+30\bar{\Omega}_{\mu\nu}\bar{\Omega}^{\mu\nu}\\
&\qquad\quad\qquad\qquad+180((\tilde{\nabla}_\mu^\Sp f^\mu)(\tilde{\nabla}_\nu^\Sp f^\nu)+(\tilde{\nabla}_\mu^\Sp f^\nu)(\tilde{\nabla}_\nu^\Sp f^\mu)-(\tilde{\nabla}_\mu^\Sp f_\nu)(\tilde{\nabla}^{\mu,\Sp} f^\nu))\\
&\qquad\quad\qquad\qquad+60s(b_1+b_2+\tilde{b})+180(E^2-(b_1^\prime)^2-(-\frac{1}{4}s-f^\mu f_\mu)^2)) ) dv
\end{align*}
where we use $\Tr(\gamma^\mu\gamma^\nu\gamma^\rho\gamma^\lambda)=g^{\mu\nu}g^{\rho\lambda}+g^{\mu\lambda}g^{\nu\rho}-g^{\mu\rho}g^{\nu\lambda}$. We note that many terms disappear thanks to the relations $\Tr(\gamma\gamma^\mu)=\Tr(\gamma\gamma^\mu\gamma^\nu)=\Tr(\gamma\gamma^\mu\gamma^\nu\gamma^\rho)=0$.\\

\newpage			
			\section{Conclusion and outlook}
			\label{sec conclusion}

In this paper we have answered a question initially raised in
\cite{Devastato:2013fk}, regarding the field
of $1$-forms $f_\mu dx^\mu$ obtained from the twisted fluctuation of the free part of the Dirac
 operator of the spectral triple of the Standard Model. We established
 that this field has
 a purely  geometrical interpretation: it is the Hodge dual of a $3$-form which, in case the manifold has
dimension $4$, is the lift from the tangent to the spinor bundle of an orthogonal and geodesic
preserving torsion. 
Closed $1$-form are generated by an action of the group of twisted
unitaries. Moreover the Lorentz group is a proper subgroup of these
twisted unitaries.
 
\noindent Several points will be adressed in future works:
 \begin{itemize}

\item For the minimal twist of the Standard  Model, the extra fields
  of $1$-form calculated in \cite{filaci2021minimal} should be seen as
  gauge-value torsions.  This needs to be worked out. The
 corresponding  twisted fermionic action should be computed (some preliminary
  results are in \cite{Filaci:PhD}) as well as the spectral
  action. The hope is that torsion could play
the role of the extra scalar-field introduced in
\cite{chamseddine2012resilience} to stabilise the electroweak vacuum
and fit the mass of the Higgs boson. 

\item Alternatively, one may couple a Dirac operator with torsion with
  the finite dimensional geometry of the Standard Model and study
 whether torsion couples to the Higgs so that to stabilise the
 electroweak vacuum and fit the Higgs mass. Some results in this direction have
been obtained in \cite{SpectActTorsStephan2010}.

\item Apply this framework to Weyl semimetals (see \cite{ivanov2022heat}) where Dirac operators with skew torsion have been used in  physical models.

\item In \cite{Chamseddine:2013aa}, a term which shows some similarity with our
  $1$-form field is questioned. It would be interesting to see whether an
  interpretation as torsion makes sense.

\item  More generally, one should understand the link between torsion,
  chiral asymmetry (the doubling of the algebra permits to distinguish
  between left and right components of spinors) and change of
  signature.  
\end{itemize}\smallskip

From a more mathematical side, the generalisation of the
results of this paper to arbitrary twisted spectral triples will be
studied in a forecoming work \cite{Martinetti:2024aa}.  
In particular we aim at using the group of $\rho$-unitaries as a
tool to define ``Lorentz symmetry'' for an arbitrary spectral triple, beyond the manifold case. 

Another question is the geometrical meaning of the Hodge dual of the
$1$-form  $\omega=f_\mu dx^\mu$  in dimension $n$ other than $4$, since in
that case the $n-1$-form $\star\omega$ is no longer a torsion form. In
case $\omega$ is exact, it could be interesting to make sense of the 
co-exactness of $\star\omega$ as a derivation of the Hochschild
 cycle given by the orientability axiom in Connes reconstruction theorem.


\medskip

\noindent {\bf Acknowledgments} PM thanks P. Vitale and A. Sitarz
for suggesting independently that the extra $1$-form field
was a torsion.
GN thanks L. Dabrowski and
A. Sitarz for various constructive suggestions.

 This work is part of the project \emph{Geometry of
fundamental interactions: from quantum space to quantum spacetime},
funded by the CARIGE foundation. 
PM is supported by the mathematical
physics group of INDAM, and the ``dipartimento d'eccellenza'' initiative
from the italian ministery of research and higher education.


		\appendix
		\renewcommand\thesection{\Alph{section}}
		
		\newpage
		\section{Appendices}
		\label{sec appendices}
		\addtocontents{toc}{\protect\setcounter{tocdepth}{1}}
		
		\subsection{Orthonormal frame and Hodge duality}
		\addtocontents{toc}{\protect\setcounter{tocdepth}{2}}
		\label{sec:aporto}
Let $\Man$ be a riemannian manifold of dimension $n$ with metric $g$.
The orthonormal sections of the frame bundle and its dual are
		\begin{equation}
			\label{eq:baseort}
			\left\{ E_a, \, a=1, ..., n\right\}, \quad \quad \left\{ \theta_a, \, a=1, ..., n\right\} 
			\quad \text{ such that }\quad
			\langle\theta^a, E_b\rangle =\delta^a_b.
		\end{equation}
The orthonormal
	frame coincides in any point $p$ of $\Man$ with the coordinate basis associated with the normal
	coordinates in $p$, but this is true only in $p$, not around $p$ (unless
	the Riemann tensor in $p$ vanishes). That is
	why \eqref{eq:baseort} is also called \emph{non local}
	or \emph{non-coordinate} basis.

		Given a local chart $\left\{ x^\mu \right\}$, the \emph{vielbein}
		$e^a_\mu, e_a^\mu\in C^\infty(\Man)$ are the
		coefficients of the non local basis in the local frame:
		\begin{equation}
			\label{eq:vielb}
			E_a = e_a^\mu \partial_\mu, \qquad \theta^a = e^a_\mu dx^\mu. 
		\end{equation}
		By duality one has
		\begin{equation}
			\label{eq:inviel}
			\delta^a_b = \langle \theta^a, E_b\rangle =   \langle
			e^a_\mu dx^\mu, \,e_b^\nu \partial_\nu \rangle =  e^a_\mu e_b^\mu \,\langle dx^\mu, \partial_\nu \rangle= e^a_\mu e_b^\nu \delta^\mu_\nu = e^a_\mu e_b^\mu.
		\end{equation}
                \begin{proposition}
                The	expression of  the local basis in the
                orthonormal one is given by
		\begin{equation}
			\label{eq:vielbinv}
			\partial_\mu =e_\mu^aE_a, \quad dx^\mu=e^\mu_a \theta^a.
		\end{equation}  
                \end{proposition}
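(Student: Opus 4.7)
The plan is to exploit the duality pairing between the coordinate bases and the non-coordinate bases, together with the inverse relation on the vielbein already recorded in \eqref{eq:inviel}. Since $\{E_a\}$ is a frame of $T\Man$ and $\{\theta^a\}$ is a frame of $T^*\Man$, any vector field admits a unique decomposition in $\{E_a\}$ and any $1$-form in $\{\theta^a\}$; I would write $\partial_\mu = c_\mu^a E_a$ and $dx^\mu = c^\mu_a \theta^a$ for a priori unknown smooth coefficients $c_\mu^a, c^\mu_a \in C^\infty(\Man)$, and show that necessarily $c_\mu^a = e_\mu^a$ and $c^\mu_a = e^\mu_a$.

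First I would observe that the identity $e^a_\mu e_b^\mu = \delta^a_b$ of \eqref{eq:inviel} says that the $n\times n$ matrix $(e^a_\mu)$ is a one-sided inverse of $(e_b^\mu)$; in finite dimension a one-sided inverse is automatically two-sided, so we also have the dual relation
\begin{equation}
e^a_\mu e_a^\nu = \delta^\nu_\mu .
\end{equation}
Substituting $E_a = e_a^\nu \partial_\nu$ into $\partial_\mu = c_\mu^a E_a$ yields $\partial_\mu = c_\mu^a e_a^\nu \partial_\nu$, and by uniqueness of the decomposition in the coordinate basis this forces $c_\mu^a e_a^\nu = \delta_\mu^\nu$; comparing with the just-established inverse relation and using invertibility of $(e_a^\nu)$ one reads off $c_\mu^a = e_\mu^a$, which is the first equality.

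For the second equality, the cleanest route is to use the duality pairing: from $dx^\mu = c^\mu_a \theta^a$ and the definition of $\theta^a$ in \eqref{eq:baseort}, pair both sides against $E_b$ to get
\begin{equation}
\langle dx^\mu, E_b\rangle = c^\mu_a \langle \theta^a, E_b\rangle = c^\mu_a \delta^a_b = c^\mu_b,
\end{equation}
while directly from $E_b = e_b^\nu \partial_\nu$ one has $\langle dx^\mu, E_b\rangle = e_b^\nu \delta^\mu_\nu = e^\mu_b$ (renaming up/down indices via the vielbein convention). Hence $c^\mu_a = e^\mu_a$, as wanted.

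There is no serious obstacle here: the proposition is essentially the assertion that the change-of-basis matrix between the coordinate and orthonormal frames is the vielbein, and its inverse gives the reverse change. The only subtle point is the notational convention on raising/lowering the latin versus greek indices of the vielbein, so the main care in writing the proof is to keep this bookkeeping consistent with the definitions already fixed in \eqref{eq:vielb} and \eqref{eq:inviel}.
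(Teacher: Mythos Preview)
Your proof is correct and follows essentially the same route as the paper: write the coordinate frames in the non-local bases with unknown coefficients, then identify those coefficients with the vielbein via the duality pairing and the uniqueness of the matrix inverse. The only cosmetic difference is that for the first equality you invoke directly that a one-sided inverse of a square matrix is two-sided, whereas the paper introduces auxiliary objects $\tilde\theta^b=\tilde e^b_\mu dx^\mu$ and $\tilde E_b=\tilde e^\mu_b\partial_\mu$ and pairs them against $E_a$, $\theta^a$ to conclude $\tilde\theta^b=\theta^b$ and $\tilde E_b=E_b$; both arguments reduce to the same uniqueness statement.
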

                \begin{proof}
                  The inverse $\tilde e^b_\mu, \tilde e^\mu_b$ of the
                  vielbein, defined as
                  \begin{equation}
\partial_\mu ={\tilde e}_\mu^b E_b,\quad dx^\mu=\tilde e^\mu_b \theta^b
\end{equation}
satisfies (from \eqref{eq:vielb})
\begin{equation}
e^\mu_a {\tilde e}^b_\mu =\delta^b_a, \quad e^a_\mu\tilde
                  e^\mu_b =\delta^a_b.
\end{equation}
Defining  $\tilde\theta^b=\tilde e^b_\mu dx^\mu$ and $\widetilde
                  E_b= \tilde e^\mu_b \partial_\mu$, ome checks that 
                  \begin{equation}
                  \langle \theta^b-\tilde\theta^b, E_a\rangle=0,\quad \langle  \theta^a, E_b -\tilde E_b\rangle=0
                \end{equation}
for
                  any $a,b$, meaning that $\theta^b=\tilde\theta^b$
                  and  $E_b -\tilde E_b$
                  for any $b$. Therefore $\tilde  e_\mu^b=e_\mu^b$  and $\tilde
                  e^\mu_b=e^\mu_b$. Hence the result.
\end{proof}

		The components of the metric is related to the vielbein through
		\begin{align}
			\label{eq:metviel}
			g_{\mu\nu}&=g(\partial_\mu, \partial_\nu) = g(e_\mu^a E_a, e_\nu^b
			E_b)= e_\mu^a e_\nu^b \delta_{ab}.\\
	\delta_{ab}&=g(E_a, E_b) = g(e^\mu_a \partial_\mu,
                     e^\nu_b\partial_\nu)= g_{\mu\nu}\,e^\mu_a\, e^\nu_b.
				\end{align}
	
\medskip 
The Hodge
	dual of a $k$-form $\omega$ (with components $\omega_{\nu_1
		... \nu_k})$ is the $(n-k)$-form $\star \omega$ with components
	\begin{equation}
		\star \omega_{\mu_{k+1}\dots
			\mu_{n}}=\frac{\sqrt{|\text{det} \,g|}}{(n-k)!}\,\epsilon_{\mu_1\dots
			\mu_{n}}\,g^{\mu_1\nu_1}\dots g^{\mu_k\nu_k}\,\omega_{\nu_1\dots \nu_k}.
	\end{equation}
	
	In the non-local orthonormal frame, the formula simplifies as
	\begin{equation}
		\label{eq:Hodge}
		\star \omega_{a_{k+1}\dots  a_{n}}=\frac{1}{(n-k)!}\,\epsilon_{a_1\dots
			a_{n}}\,\delta^{a_1 b_1}\dots \delta^{a_k b_k}\,\omega_{b_1\dots b_k}.
	\end{equation}

		\addtocontents{toc}{\protect\setcounter{tocdepth}{1}}
		
		\addtocontents{toc}{\protect\setcounter{tocdepth}{2}}
		
		\addtocontents{toc}{\protect\setcounter{tocdepth}{1}}
\newpage
		\subsection{Dirac matrices and Clifford action}
		\label{ConvGamMat}
		\addtocontents{toc}{\protect\setcounter{tocdepth}{2}}

		Let $\sigma_{j= 1,2,3}$ be the Pauli matrices:
		\begin{equation}
			\label{Pauli}
			\sigma_1 = \left(\begin{array}{cc} 0 & 1 \\ 1 & 0 \end{array}\right)\!,
			\qquad	\sigma_2 = \left(\begin{array}{cc} 0 & -i \\ i & 0 \end{array}\right)\!,
			\qquad	\sigma_3 = \left(\begin{array}{cc} 1 & 0 \\ 0 & -1 \end{array}\right)\!.
		\end{equation}%
		In four-dimensional euclidean space, the Dirac matrices (in chiral
		representation) are
		\begin{equation}
			\label{EDirac}
			\gamma^a =
			\left( \begin{array}{cc}
				0 & \sigma^a \\ \tilde\sigma^a & 0
			\end{array} \right)\!, \qquad
			\gamma \coloneqq  \gamma^1\,\gamma^2\,\gamma^3\,\gamma^0 =
			\left( \begin{array}{cc}
				\mathbb{\bbbone}_2 & 0 \\ 0 & -\mathbb{\bbbone}_2 
			\end{array} \right)\!,
		\end{equation}
		where, for  $a= 0,j$, we define
		\begin{equation}
			\label{eq:defsigmamu}
			\sigma^a \coloneqq  \left\{ \mathbb{\bbbone}_2, -i\sigma_j \right\}\!, \qquad
			\tilde\sigma^a \coloneqq  (\sigma^a)^\dag\left\{ \mathbb{\bbbone}_2, i\sigma_j \right\}\!.
		\end{equation}
		They satisfy the anticommutation relation
		\begin{equation}
			\gamma^a\gamma^b + \gamma^b \gamma^a =2\delta^{ab}\mathbb \bbbone_4 \quad
			\forall a, b= 0, ..., 3.
		\end{equation}
		
		On a riemannian spin manifold of dimension $4$, the Dirac
		matrices are linear combinations
		\begin{equation}
			\gamma^\mu =
			e^\mu_a \gamma^a
			\label{eq:111}
		\end{equation}
		of the euclidean ones, where $\left\{e_\mu^a\right\}$ are the
		vierbein defined by the metric. 
	 	They are selfadjoint, unitary, matrices such that
		\begin{equation}
\label{eq:anticommute}
			\{\gamma^\mu,\gamma^\nu\}=2g^{\mu\nu}\mathbb \bbbone_{2^m}.
		\end{equation}
		This is the index which tells whether we are considering the euclidean
		matrices \eqref{EDirac} (latin index) or the riemannian ones
		\eqref{eq:111} (greek index).
		\smallskip

		These definitions extends to any 
		manifold of even dimension $n=2m$.  We still denote
                $\gamma^a$ the set of $n$ square matrices of dimension $2^m$
                satisfying  \eqref{eq:anticommute} and denote
		\begin{equation}
			\label{eq:gamma5}
			\gamma=-(-i)^m\prod_{a=0}^{2m-1}\gamma^a
			\end{equation}
			the analogue of $\gamma$ in dimension $n$. One has
			\begin{equation}
				\label{eq:gammpermut}
				\gamma =-\frac{(-i)^m}{(2m)!}\epsilon_{a_1\dots a_{2m}}\gamma^{a_1}\dots\gamma^{a_{2m}}
			\end{equation}
			where the Levi-Cevita symbol
			\begin{equation}
				\label{eq:symlv}
				\epsilon_{a_0\dots a_{2m-1}} \text{ is }\left\{
				\begin{array}{ll}
					0 &\text{if at least two indices $a_k, a_l$ are equal,} \\[4pt]
					(-1)^p &\text{when all the indices are different, with $p$ the sign }\\[4pt]
					& \text{of
						the permutation $a_0 a_1...a_{2m-1}\longleftrightarrow 01...2m-1$.}
				\end{array}\right.
			\end{equation}
			In particular one has $\epsilon_{012\dots
                          2m-1}=1$.
\medskip

The
	Clifford action of a $p$-form 
        \begin{equation}
          \omega_p= \omega_{\mu_1 .. \mu_p} dx^{\mu_1}\wedge
		... \wedge dx^\mu_p 
        \end{equation}
is
	\begin{equation}
		\label{eq:Cliffaction}
	  c(\omega_p) := \omega_{\mu_1 .. \mu_p} \gamma^{\mu_1} ... \gamma^{\mu_p}.
	\end{equation}

			

			\addtocontents{toc}{\protect\setcounter{tocdepth}{1}}
			\subsection{Real structure and grading}
\label{app:relagrad}
			\addtocontents{toc}{\protect\setcounter{tocdepth}{2}}

We list several useful results on the minimal twist of an even
dimensional manifold described in
			\S~\ref{subsec:miniamltwist}.

\begin{proposition} On an even dimensional riemannian manifold of
  dimension $2m$, for
  any $\mu=1, ..., m$ and $a\in\cinf\otimes\C^2$, one has
  \begin{align}
\label{eq:kodim4}&\gamma^\mu a = \rho(a)\gamma^\mu,\\
\label{eq:kodim5}
&\J\gamma^\mu = -\gamma^\mu \J ,\\
\label{eq:kodim40}
&\J a\J^{-1}= \left\{
    \begin{array}{ll}
      a^*&\text{ in  $KO$ dimension }  0,4,\\
    \rho(a^*) &\text{ in $KO$ dimension } 2, 6.\end{array}
\right.
\end{align}
\end{proposition}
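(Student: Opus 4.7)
The plan is to derive all three identities from the chiral decomposition of $\pi(a)$ recorded in \eqref{FormOfa}, together with the standard construction of the charge conjugation operator $\J$ on a spin manifold. For \eqref{eq:kodim4}, I would write $\pi(a)$ as a block-diagonal matrix in the chiral basis (acting as the scalar $f$ on the $+1$-eigenspace of $\gamma$ and as $g$ on the $-1$-eigenspace), then use that $\{\gamma,\gamma^\mu\}=0$ implies every $\gamma^\mu$ is block off-diagonal in this basis. Since scalar multiplication by a function commutes with the Dirac matrices (they act on a separate tensor factor from the multiplication), moving $\gamma^\mu$ across $\pi(a)$ simply swaps the roles of $f$ and $g$, which is exactly the action of $\rho$.

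For \eqref{eq:kodim5}, I would invoke the standard construction of the real structure, $\J = C\circ cc$, where $cc$ is complex conjugation in the chiral basis and $C$ is a fixed product of an odd number of euclidean Dirac matrices (for example $C = i\gamma^0\gamma^2$ in dimension $4$, as used in \S\ref{fermact}). Using the anticommutation relations \eqref{eq:anticommute}, a direct parity count shows that an odd product $C$ anticommutes with each single $\gamma^\mu$; combined with the fact that $cc$ commutes with multiplication by coordinate functions, this yields $\J\gamma^\mu = -\gamma^\mu \J$.

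Finally, for \eqref{eq:kodim40}, the antilinearity of $\J$ gives $\J\pi_0(f)\J^{-1} = \pi_0(\bar f)$ for every $f$ in $C^\infty(\Man)$, where $\pi_0$ denotes multiplication without chiral splitting. The two $KO$-dimension cases then differ only by how $\J$ interacts with the chiral projectors $P_\pm = (\I\pm\gamma)/2$: the sign $\epsilon''$ in \eqref{AxTripl} is $+1$ in $KO$-dimensions $0,4$, so $\J$ preserves the $P_\pm$, while $\epsilon'' = -1$ in $KO$-dimensions $2,6$, so $\J$ exchanges them. Conjugating $\pi(a) = P_+\pi_0(f) + P_-\pi_0(g)$ by $\J$ then yields $\pi(a^*)$ in the first case and $\pi(\rho(a^*))$ in the second. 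The main bookkeeping obstacle will be verifying the classical $KO$-dimension sign table $(\epsilon, \epsilon', \epsilon'')$ together with its compatibility with the explicit choice of $C$ in each dimension; this is standard material (see e.g.\ \cite{Lawson:1989aa}) but requires careful tracking of the parities of various products of Dirac matrices.
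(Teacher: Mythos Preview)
Your treatment of \eqref{eq:kodim4} and \eqref{eq:kodim40} is essentially identical to the paper's: both write $\pi(a)=P_+\pi_0(f)+P_-\pi_0(g)$ with $P_\pm=(\bbbone\pm\gamma)/2$, then use $\{\gamma,\gamma^\mu\}=0$ for the first identity and $\J\gamma=\epsilon''\gamma\J$ together with $\J\pi_0(f)\J^{-1}=\pi_0(\bar f)$ for the third. The paper in fact omits a proof of \eqref{eq:kodim5} altogether, so on that point you are attempting more than the authors do.

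However, your argument for \eqref{eq:kodim5} contains a genuine error. An odd product $C=\gamma^{a_1}\cdots\gamma^{a_k}$ of distinct euclidean Dirac matrices does \emph{not} anticommute with every $\gamma^a$: it anticommutes with $\gamma^a$ when $a\notin\{a_1,\dots,a_k\}$ (odd number of swaps) but \emph{commutes} with $\gamma^a$ when $a\in\{a_1,\dots,a_k\}$ (even number of swaps, since $(\gamma^a)^2=\bbbone$). The missing ingredient is the action of complex conjugation on the constant matrices $\gamma^a$ themselves, not just on the real vielbein $e^\mu_a$. In the representation \eqref{EDirac} some $\gamma^a$ are real and some purely imaginary, and the specific choice of $C$ (e.g.\ $C=i\gamma^0\gamma^2$ in dimension $4$) is engineered so that the sign picked up from $cc\,\gamma^a=\pm\gamma^a\,cc$ exactly compensates the sign from commuting $\gamma^a$ through $C$, yielding $\J\gamma^a=C\,cc\,\gamma^a=-\gamma^a\,C\,cc=-\gamma^a\J$ in every case. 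You need to carry out this bookkeeping explicitly (or, more abstractly, derive it from $\J\ds=\epsilon'\ds\J$ applied pointwise); the bare parity count you describe is insufficient.
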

\begin{proof}
  The representation \eqref{FormOfa} is
  \begin{equation}
    \pi(a)  =\frac{\bbbone + \gamma}2 \pi_0(f) +  \frac{\bbbone - \gamma}2 \pi_0(g)
    \quad\quad \forall a=(f,g)\in C^\infty(\Man)\otimes \mathbb C^2
  \end{equation}
where $\pi_0$ is the usual representation by multiplication of $\cinf$
on $L^2(\Man, S)$ used in the spectral triple \eqref{eq:tscano}.
By definition of the grading $\gamma$ (which
  has constant coefficients) anti-commutes with $\ds$, hence any Dirac
  matrix anti-commutes with $\gamma$. Therefore
  \begin{equation}
    \label{eq:twistcomgpi}
    \gamma^\mu\pi(a) = \left(\frac{\bbbone - \gamma}2 \pi_0(f) +
      \frac{\bbbone + \gamma}2 \pi_0(g)\right) \gamma^\mu=
    \pi(\rho(a))\,\gamma^\mu \quad\forall \mu=1, ..., 2m.
  \end{equation}

Similarly, in $KO$ dimension $2, 6$ the real structure anticommutes
with the grading, while in $KO$ dimension $0,4$ the two operators
commute, hence
 \begin{align}
    \label{eq:twistcomgpiJ26}
    \J\pi(a)\J^{-1} = \left(\frac{\bbbone - \gamma}2 \J\pi_0(f)J^{-1} +
      \frac{\bbbone + \gamma}2 \J\pi_0(g)\J^{-1}\right) =
   \pi(\rho(a^*))&\text{in $KO$-dim. $2,6$},\\
     \label{eq:twistcomgpiJ04}
    \J\pi(a)\J^{-1} = \left(\frac{\bbbone + \gamma}2 J\pi_0(f)\J^{-1} +
      \frac{\bbbone - \gamma}2 \J\pi_0(g)\J^{-1}\right) = \pi(a^*) &\text{in $KO$-dim. $2,6$}
  \end{align}
 where we use $\J\pi_0(f) \J^{-1}= \pi_0(f^*)$ for any $f\in\cinf$.
\end{proof}
			Note that this proof is independent of the chart, and does not require
			the explicit form of the Dirac matrices.

			\newpage
			

		\end{document}